\renewcommand{\phi}{\varphi}
\renewcommand{\epsilon}{\varepsilon}
\renewcommand{\emptyset}{\varnothing}
\renewcommand{\leq}{\leqslant}
\renewcommand{\le}{\leqslant}
\renewcommand{\geq}{\geqslant}
\newtheoremstyle{bfnote}%
{}{}
{\itshape}{}
{\bfseries}{.}
{ }{\thmname{#1}\thmnumber{ #2}\thmnote{ (#3)}}
\theoremstyle{bfnote}
\newtheorem{theorem}{Theorem}[section]
\newtheorem{obs}[theorem]{Observation}
\newtheorem{lemma}[theorem]{Lemma}
\newtheorem{definition}[theorem]{Definition}
\newtheorem{example}[theorem]{Example}
\renewenvironment{proof}[1][\proofname] {\par\pushQED{\qed}\normalfont\topsep6\p@\@plus6\p@\relax\trivlist\item[\hskip\labelsep\bfseries#1\@addpunct{.}]\ignorespaces}{\popQED\endtrivlist\@endpefalse}
\newcommand{\defemph}{\textbf}
\newcommand{\dom}{{\sf dom}}
\newcommand{\rng}{{\sf rng}}
\newcommand{\restr}{\upharpoonright}
\newcommand{\prot}{\mathit{Pr}}
\newcommand{\recsend}[2]{{#1}\!\Rightarrow\!{#2}}
\renewcommand{\phi}{\varphi}
\renewcommand{\epsilon}{\varepsilon}
\newcommand{\knowfunc}{{\sf k}}
\newcommand{\knowupd}{{\it update}}
\newcommand{\names}{\mathscr{N}}
\newcommand{\Tterms}{\mathscr{T}}
\newcommand{\vars}{\mathscr{V}}
\newcommand{\qvar}{\mathscr{V}_{q}}
\newcommand{\ivar}{\mathscr{V}_{i}}
\newcommand{\ag}{\mathscr{A}}
\newcommand{\keys}{\mathscr{K}}
\newcommand{\inv}[1]{{#1}^{-1}}
\newcommand{\varsof}{{\sf vars}}
\newcommand{\freevars}{{\sf fv}}
\newcommand{\boundvars}{{\sf bv}}
\newcommand{\subterms}{{\sf st}}
\newcommand{\func}{{\sf f}}
\newcommand{\gunc}{{\sf g}}
\newcommand{\sk}{\mathit{sk}}
\newcommand{\pk}{\mathit{pk}}
\newcommand{\enc}{{\sf enc}}
\newcommand{\dec}{{\sf dec}}
\newcommand{\sign}{{\sf sign}}
\newcommand{\blind}{{\sf blind}}
\newcommand{\positions}[1]{\mathbb{P}(#1)}
\newcommand{\posof}[2]{\mathbb{P}_{#1}({#2})}
\newcommand{\subtermat}[2]{{#1}|_{#2}}
\newcommand{\replsubtermat}[3]{{#1}[{#3}]_{#2}}
\newcommand{\abstractable}{\mathbb{A}}
\newcommand{\nat}{\mathbb{N}}
\newcommand{\subformulas}{{\sf sf}}
\newcommand{\equals}[2]{{{#1}\bowtie{#2}}}
\newcommand{\listmemb}{\twoheadleftarrow}
\newcommand{\disj}{\vee}
\newcommand{\conj}{\wedge}
\newcommand{\says}{\ \mathit{says}\ }
\newcommand{\assertact}{{\sf assert}}
\newcommand{\elg}{{\sf el}}
\newcommand{\publics}{{\sf pubs}}
\newcommand{\rnrule}{{\sf r}}
\newcommand{\rnpair}{{\sf pair}}
\newcommand{\rnfst}{{\sf fst}}
\newcommand{\rnsnd}{{\sf snd}}
\newcommand{\rnenc}{{\sf enc}}
\newcommand{\rndec}{{\sf dec}}
\newcommand{\introd}{\sf i}
\newcommand{\elimin}{\sf e}
\newcommand{\rnax}{{\sf ax}}
\newcommand{\rneq}{{\sf eq}}
\newcommand{\rnsubst}{{\sf subst}}
\newcommand{\rncons}{{\sf cons}}
\newcommand{\rntrans}{{\sf trans}}
\newcommand{\rnsymm}{{\sf sym}}
\newcommand{\rnsays}{{\sf say}}
\newcommand{\rnweak}{{\sf wk}}
\newcommand{\rnproj}{{\sf proj}}
\newcommand{\rncut}{{\sf cut}}
\newcommand{\rnconji}{\wedge\introd}
\newcommand{\rnconje}{\wedge\elimin}
\newcommand{\rnexintro}{\exists\introd}
\newcommand{\rnexelim}{\exists\elimin}
\newcommand{\rnlint}{{\sf int}}
\newcommand{\rnlwk}{{\sf wk}}
\newcommand{\rnprom}{{\sf prom}}
\newcommand{\derives}{\vdash}
\newcommand{\DYderives}{\vdash_{\mathit{dy}}}
\newcommand{\assderives}{\vdash_{\mathit{a}}}
\newcommand{\eqderives}{\vdash_{\mathit{eq}}}
\newcommand{\measure}{\delta}
\newcommand{\conseq}[4]{\mathscr{E}^{{#3},{#4}}_{{#1},{#2}}}
\newcommand{\dc}{\mathit{ker}}
\newcommand{\Substs}{\mathscr{S}}
\newcommand{\honterms}{\mathit{HT}}
\newcommand{\intterms}{\mathit{IT}}
\newcommand{\intsub}{\sigma}
\newcommand{\vintsub}{\sigma^{\!*}}
\newcommand{\hwsub}{\theta}
\newcommand{\iwsub}{\mu}
\newcommand{\viwsub}{\mu^{\!*}}
\newcommand{\bigsub}{\omega}
\newcommand{\vbigsub}{\omega^{\!*}}
\newcommand{\vlambda}{\lambda^{\!*}}
\newcommand{\wh}{\widehat}
\newcommand{\dommu}{Z}
\newcommand{\expplus}[1]{{#1}^{*}}
\newcommand{\zap}[1]{\overline{\mspace{1mu}{#1}\mspace{1mu}}}
\newcommand{\dagsize}[1]{|\st({#1})|}
\newcommand{\fixedname}{{\sf m}}
\newcommand{\constst}{\mathscr{C}}
\newcommand{\stnonvars}{\mathscr{D}}
\newcommand{\hatst}{\wh{\constst}}
\newcommand{\rank}{\mathit{rank}}
\newcommand{\SF}{\subformulas}
\newcommand{\st}{\subterms}
\newcommand{\axiomsof}{{\sf axioms}}
\newcommand{\concof}{{\sf conc}}
\newcommand{\termsof}{{\sf terms}}
\newcommand{\listsof}{{\sf lists}}
\newcommand{\atomsof}{{\sf at}}
\title{Insecurity problem for assertions remains in NP}
\author{R Ramanujam \\ 
The Institute of Mathematical Sciences, Chennai (Retd.) \\
Homi Bhabha National Institute, Mumbai (Retd.) \\
Azim Premji University, Bengaluru (Visiting) \\ 
\url{jam@imsc.res.in}
\and 
Vaishnavi Sundararajan \\
Visiting Post-Doctoral Fellow \\ 
Chennai Mathematical Institute \\
\url{vaishnavi@cmi.ac.in}
\and
S P Suresh \\
Chennai Mathematical Institute \\
\url{spsuresh@cmi.ac.in}}
\begin{document}
\maketitle

\begin{abstract}
    In the symbolic verification of cryptographic protocols, a central problem is deciding whether a protocol admits an execution which leaks a designated secret to the malicious intruder. \cite{RT03} shows that, when considering finitely many sessions and a protocol model where only terms are communicated, this ``insecurity problem'' is NP-complete. Central to their proof strategy is the observation that any execution of a protocol can be simulated by one where the intruder only communicates terms of bounded size. 
    However, when we consider models where, in addition to terms, one can also communicate logical formulas, the analysis of the insecurity problem becomes tricky. In this paper we consider the insecurity problem for protocols with logical statements that include {\em equality on terms} and {\em existential quantification}. Witnesses for existential quantifiers may be of unbounded size, and obtaining small witnesses while maintaining equality proofs complicates the analysis. We use a notion of {\em typed} equality proofs, and extend techniques from~\cite{RT03} to show that this problem is also in NP. We also show that these techniques can be used to analyze the insecurity problem for systems such as the one proposed in~\cite{RSS17}.
\end{abstract}

\section{Introduction}\label{sec:intro}
\subsection{Background}
The symbolic analysis of security protocols is a long-standing field of study, with the Dolev-Yao model~\cite{DY83} being the standard. Here, messages in a protocol are abstracted as elements in a \emph{term algebra}, which usually includes operators for \emph{pairing} and \emph{encryption}, among others. \emph{Proof rules} govern how to derive new terms from existing ones. 
This model also considers an idealized intruder who is assumed to control the network, but cannot break cryptography. This ``all-powerful'' intruder can see/block/inject/redirect messages, as well as derive new terms from knowledge accumulated over the execution of a protocol. In particular, every communication is split into a send and a receive, with the sends being implicitly captured by the intruder, and receives assumed to come from the intruder. A send need not be accompanied by a corresponding receive, and vice versa.
Building on this abstraction, there have been tremendous advancements in the formal study of security protocols. People have studied extensions of this basic model to express richer classes of protocols and security properties~\cite{ABF17, Bla01, BMU08, CDL06}, and associated decidability and complexity results~\cite{AC04,BP03,BRS10,CS03,CKRT05,LLT07,CDS21,RS05,RS06,DLMS04}.  Various verification tools have also been built based on these formal models \cite{Cr08,Bla01,Bla16,MSCB13}. A tutorial introduction to this area with many more references is~\cite{CK14}.

\subsection{The insecurity problem}
A basic problem of interest is the \emph{insecurity problem} -- is there an execution of a given protocol that leaks a designated secret? Following the abstract model of~\cite{RT03}, protocols can be represented as a set of \emph{roles}, each of which is a sequence of pairs $\recsend{r_{1}}{s_{1}},\ldots,\recsend{r_{n}}{s_{n}}$. The $r_{i}$s are messages received from the network/intruder, and each $s_{i}$ is a message send (by an honest agent) in response to $r_{i}$. The $r_{i}$s and $s_{i}$s can have variables occurring in them. Variables which occur in an $s_{i}$ first are parameters of the protocol that are in the control of an agent enacting the role. The agent instantiates them with concrete values to obtain a \emph{session}. Any variable remaining in a session will, therefore, first occur in one of the $r_{i}$s, and represents an unknown value received from the intruder. An execution (or \emph{run}) of a protocol is a an interleaving of a finite number of sessions, along with a mapping of all the variables occurring in the run to concrete values. (We assume that variables occurring in a session are distinct from those occurring in other sessions.) We further require that all the messages received (resp.\ sent) can be derived from the knowledge accumulated so far in the run by the intruder (resp.\ an honest agent). A protocol is insecure if there is a run where the intruder can derive a designated secret from their knowledge accumulated over the course of the run. 

A run can involve arbitrarily many sessions in parallel, each session may introduce new fresh values (\emph{nonces}), and the intruder can instantiate the variables with arbitrarily large terms. Each of these sources of unboundedness contributes to the intractability of the insecurity problem, rendering it undecidable in general~\cite{ALV03, DLMS04, HT96}. One way to obtain decidability is to impose bounds on some of these parameters. In particular,~\cite{RT03} considers the insecurity problem for \textbf{finitely many sessions}, i.e., is there a run of the protocol consisting of at most $K$ sessions (for a fixed $K$) which leaks a secret? Bounding the number of sessions automatically bounds the number of fresh nonces used in a run, but the intruder can still instantiate variables with arbitrarily large terms, as illustrated in the following example. 

\begin{example}\label{ex:nonatomic}
Consider a protocol given by the two roles:
\begin{align*}
\eta_{1} &\coloneqq \quad \recsend{\ast}{\{(\pk_{a}, \{m\}_{\pk_{b}})\}_{\pk_{b}}} \\
\eta_{2}(x,y) &\coloneqq \quad \recsend{\{(x, \{y\}_{\pk_{b}})\}_{\pk_{b}}}{\{y\}_{x}}
\end{align*}
Here, $\ast$ is a dummy term triggering the start of $\eta_{1}$, $m$ is a fresh name generated by $A$ intended to be kept secret, $\{t\}_{k}$ denotes the encryption of $t$ with the key $k$, $(t,u)$ represents the pairing of terms $t$ and $u$, and $\pk_{a}; \sk_{a}$ and $\pk_{b}; \sk_{b}$ are the public-private key pairs of agents $A$ and $B$. 


However, the intruder can get $m$ via the run in Figure~\ref{fig:nonatomic}, obtained by interleaving the sessions $\eta_{1}$, $\eta_{2}(x_{1},y_{1})$ and $\eta_{2}(x_{2},y_{2})$ under the  substitution $\intsub$. Here, $A \rightarrow: t$ denotes a send of $t$ by $A$, and $\rightarrow A: t$ denotes a receive of $t$ by $A$. 
\end{example}

\begin{figure}
{	\centering
\renewcommand{\arraystretch}{1.0}
	\begin{tabular}{l@{\hspace{0.2em}}c@{\hspace{1em}}l}
		$A \rightarrow$ & : & $\{(\pk_{a},\{m\}_{\pk_{b}})\}_{\pk_{b}}$ \\ 
		$\rightarrow B$ & : & $\{(\pk_{i}, \{(pk_{a}, \{m\}_{\pk_{b}})\}_{\pk_{b}})\}_{\pk_{b}}$ \\
		$B \rightarrow$ & : & $\{(pk_{a}, \{m\}_{\pk_{b}})\}_{\pk_{i}}$ \\
		$\rightarrow B$ & : & $\{(pk_{i}, \{m\}_{\pk_{b}})\}_{\pk_{b}}$ \\ 
		$B \rightarrow$ & : & $\{m\}_{\pk_{i}}$. 
	\end{tabular}
	\[\intsub \coloneqq [x_{1} \mapsto \pk_{i}, x_{2} \mapsto \pk_{i}, y_{1} \mapsto (\pk_{a},\{m\}_{\pk_{b}}), y_{2} \mapsto m].\]
}
\caption{Attack for Example~\ref{ex:nonatomic}}
\label{fig:nonatomic}
\end{figure}

There are two points of interest in the above example. The first is that $\intsub(y_{1})$ is a non-atomic term, and therefore the first term received by $B$ has larger size than intended in the protocol. The second is that $y_{1}$ unifies under $\intsub$ with a non-atomic term mentioned in the protocol, namely, $\intsub(y_{1}) = \intsub(\pk_{a}, \{m\}_{\pk_{b}})$. This last fact is important -- this pattern match is needed for the intruder to get $B$ to believe that $\intsub(y_{1})$ is expected according to the protocol specification, and elicit an appropriate response. 

\begin{example}\label{ex:largeterms}
Consider the protocol from Example~\ref{ex:nonatomic}. Let $\intsub'$ map $y_{1}$ to a large term $t$ that does not have the same pattern as any term mentioned in $\eta$, say $((\pk_{i}, \pk_{a}), (\pk_{i}, \pk_{a}))$. This is also a non-atomic term, but this does not ``help'' the intruder in any way. Suppose $B$ receives $\{(\pk_{i}, \{t\}_{\pk_{b}})\}_{\pk_{b}}$ and responds with $\{t\}_{\pk_{i}}$. What has $I$ gained? On decrypting with $\sk_{i}$, it can learn $t$, but that was already constructed by $I$. Further, it cannot trick $B$ into looking inside $t$ and extracting some embedded secret for it (like in the earlier attack $\intsub$), since $t$ does not match any expected pattern. So, the intruder might as well take $\intsub'(y_{1})$ to be an atomic term. 
\end{example}

Example~\ref{ex:largeterms} illustrates the core reasoning in~\cite{RT03}. Essentially, a variable $x$ is good if it unifies with a non-variable pattern (from the protocol specification) under $\intsub$, and bad otherwise. If there is an attack $(\xi, \intsub)$ on the protocol which involves a bad variable, one can ``zap'' it to an atomic value to obtain a ``smaller'' attack $(\xi, \intsub')$, i.e., the sum of the sizes of $\intsub'(x)$ (over all $x$) is smaller than the corresponding sum for $\intsub$. By repeatedly zapping one bad variable after another, one can obtain an attack $\vintsub$ where all variables are good. One can now bound the size of each $\vintsub(x)$ by a polynomial of the sizes of terms mentioned in the protocol specification, and thus prove that insecurity is in NP. The technical challenge is to prove that after each zap, all messages sent by the intruder in $\xi$ are still derivable from $I$'s knowledge. The interested reader can consult~\cite{RT03} for further details. 

\subsection{Our contribution}
In this paper, we extend the techniques of~\cite{RT03} to protocols where agents can communicate logical formulas in addition to terms. These formulas convey properties of terms, and serve as an abstract vehicle to capture certification. To highlight the essential proof ideas, we consider a syntax consisting only of equality between terms (denoted $\equals{t}{u}$ to avoid overloading the $=$ operator) and existential quantification. 


When lifting the zapping procedure outlined above to formulas, equalities between terms might be violated by a zap. 

\begin{example}
Consider a protocol which mentions $\{x\}_{k}$ in its specification, and consider a run $(\xi, \intsub)$ where $x$ is bad and $y$ is good, with $\intsub(x) = t$ and $\intsub(y) = \{t\}_{k}$. If we zap $x$ to an atomic value, $y$ is no longer equal to $\{x\}_{k}$, and thus becomes bad. We could zap $y$ in the next round, but the equality between $y$ and $\{x\}_{k}$ would be lost.

Suppose the formula $\equals{y}{\{x\}_{k}}$ is also communicated. Then we cannot zap $x$ without modifying $\{y\}_{k}$ appropriately. 
\end{example}

Thus, formulas introduce an additional concern -- the need to preserve equalities. Zapping and derivability can also interact in more complicated ways, as illustrated below. 

\begin{example}
Consider a run $(\xi,\intsub)$ of some protocol, with a bad variable $x$, where $\intsub(x) = \{n\}_{k}$. Suppose the intruder derives $n$ by decrypting $\intsub(x)$, perhaps as a pre-requisite to some message sent in this run. When we zap $x$, we would like to be able to mimic all the derivations that happen in the original run. But the zapped image of $x$ is now an atomic value, on which decryption can no longer be applied. 
\end{example}


Proof systems for equalities often include \emph{projection}, which lets one infer $\equals{m}{n}$ and $\equals{m'}{n'}$ from $\equals{(m,m')}{(n,n')}$. However, this means one cannot zap ``indiscriminately''.

\begin{example}
Suppose $x$ is bad and $\intsub(x) = (m,m')$, and we have a proof of $\equals{m}{n}$ by applying the projection rule on $\equals{\intsub(x)}{(n,n')}$. If we zap $x$ to get a new substitution $\intsub'$, we see that we can no longer apply projection on $\equals{\intsub'(x)}{(n,n')}$.
\end{example}

Thus, we cannot zap one variable in an equality independent of the other component. We have to zap all bad variables in $\intsub$ to get a $\vintsub$, without breaking any of the ``relevant'' equalities. 

When a variable is good, it matches a pattern specified in the protocol. We can view the pattern as a \emph{type} of the term. Note that, as in Example~\ref{ex:nonatomic}, the intruder needs a variable to match a pattern from the protocol only in situations where it wants an honest agent to look into the term and extract some component. These variables need to be good, and therefore ``typed''. We can use this to get the notion of a \emph{typed proof}. In a typed proof, all good variables have a type, and any term that cannot be typed is built from its components by the intruder. 

Typed proofs seem to handle the problems introduced by equality formulas as above. However, quantifiers complicate the situation further. Consider a formula $\exists{x_{1}\ldots{}x_{k}}. \equals{t}{u}$. To derive such a formula, one needs to apply existential quantification $k$ times. Each occurrence of existential quantification is applied to an $\alpha(w_{i})$ to get $\exists x_{i}. \alpha(x_{i})$. Collecting all such $(x_{i}, w_{i})$ pairs that occur in a derivation, we get a substitution $\iwsub$. Proving $\exists{x_{1}\ldots{}x_{k}}.\equals{t}{u}$ thus reduces to proving $\equals{\iwsub(t)}{\iwsub(u)}$. In order to get a bound on the size of these witnesses $w_{i}$, we might have to zap some of the values assigned by $\iwsub$ as well. Furthermore, there are complex interactions between $\intsub$ (which provides meaning to variables occurring in a role) and $\iwsub$. We will see that in order to obtain a correct classification of good and bad variables, we need to consider both substitutions simultaneously. In fact, there are more than two substitutions at play, and the interactions are even more complex. We solve all these issues in Section~\ref{sec:insecurity}, the heart of this paper. We then show how to extend these results to the ``assertions'' used to model and analyze the FOO e-voting protocol in~\cite{RSS17}.


\subsection{Related Work: Equational Theories}
A lot of work has been done in the field of symbolic verification for security involving equalities of a different nature, namely those arising from \emph{equational theories}~\cite{CKRT05, CS03, CDL06}. In these theories, one considers terms built using two kinds of operators  -- \emph{constructors} building up a term $\func(t_{1}, \ldots, t_{r})$ from its constituents $t_{1}, \ldots, t_{r}$, and (zero or more) \emph{destructors} corresponding to each constructor, that serve to break down a complex term and obtain some part of it. These destructors are governed by equations of the form $\gunc(\func(t_{1},\dots,t_{r}),u_{1},\dots,u_{s}) = t_{i}$, for some $i \leq r$. For instance, we might have the constructor $\enc$ and the corresponding destructor $\dec$, with the equation $\dec(\enc(x,y), \inv{y}) = x$. In fact, one usually converts these equations to \emph{rewrite rules}, obtaining a rewrite system, and studies the convergence of such a system. For example, one can convert the earlier equation to the rewrite rule $\dec(\enc(x,y), \inv{y}) \rightarrow x$. Such rules are implicitly universally quantified. Applying the above rule on a term $t$ boils down to choosing any subterm of $t$ matching $\dec(\enc(x,y), \inv{y})$ (under some substitution $\intsub$) and replacing it by $\intsub(x)$. 

In fact, these rewrite rules correspond to inference rules. The above rewrite rule says, for instance, that if one can derive $\enc(x,y)$ and $\inv{y}$, one can also derive $x$ (and $\dec$ is the name of the inference rule). In contrast to the equations and rewrites as seen above (which \emph{operate} on terms), the equalities that we consider in this paper have the same status as terms -- we start with an initial stock of terms and equalities, and see which other equalities can be derived using inference rules. One might, for instance, encode the standard rule for transitivity of equalities into the following rewrite rule: $\rntrans(\equals{x}{y}, \equals{y}{z}) \rightarrow \equals{x}{z}$. Similarly, one might encode the standard substitution rule for equalities as the rewrite rule: $\rnsubst(\equals{r(t)}{s(t)}, \equals{t}{u}) \rightarrow \equals{r(u)}{s(u)}$. Here, we replace every occurrence of $t$ in $\equals{r(t)}{s(t)}$ (the same $t$, and not a substitution instance like with $\dec$ and $\enc$ earlier) by $u$.

Thus, there is a clear distinction between the equalities we consider and those in an equational theory. We view formulas as first-class objects enjoying the same status as terms, which can be governed by equations in an appropriate equational theory. We choose to work with inference systems (and terms without destructors) here, but one can translate the system here into the language of equational theories and rewrite rules.
  
\subsection{Organization of the paper}
In Section~\ref{sec:dolevyao}, we introduce the syntax and derivation systems for terms and assertions. We deal with the \emph{derivability problem} for assertions in Section~\ref{sec:derivability}. 
In Section~\ref{sec:protruns}, we present a protocol model to introduce the insecurity problem. In Section~\ref{sec:insecurity}, we show how to properly zap all the bad variables associated with a run. We thus prove that insecurity for a finite number of sessions remains in NP, even after adding equalities and existential quantification. In Section~\ref{sec:fullsyntax}, we generalise the definitions to a richer syntax in which we can model some widely-used protocols and specify useful security properties, as presented in~\cite{RSS17}. We discuss some possible avenues for future work in Section~\ref{sec:disc}. 

\section{The Dolev-Yao model: terms and formulas}\label{sec:dolevyao}
\subsection{Terms: Syntax and Derivation System}
Each communicated message is modelled as a term in an algebra. New terms can be derived from old ones using a proof system which specifies how these operators behave. We have a set $\names$ of names (atomic terms, with no further structure), and a set $\vars$ of variables. 
We assume that $\qvar \subset \vars$ is the set of variables used for quantification. We denote by $\ivar$ the set $\vars\setminus\qvar$. We assume a set of \emph{keys} $\keys \subseteq \names$, and that every  $k \in \keys$ has an inverse $\inv{k} \in \keys$. $\ag \subseteq \names$ is the set of agents, with $I \in \ag$ being the intruder. We consider the following simple syntax for terms\footnote{Our main results are not dependent on the exact choice of term algebra, so we can add other operators and proof rules, as long as the system has normalization and subterm property, and derivability is efficiently decidable.}\!\!, with operators for pairing and encryption.
\[
        t \in \Tterms := x \mid m \mid (t, u) \mid \enc(t,k)
\]
where $x \in \vars$, $m \in \names$, $k \in \keys \cup \vars$ and $t, u \in \Tterms$. We use the shorthand $\{t\}_{k}$ for $\enc(t,k)$. \emph{Ground} terms are those without variables. The set of subterms of $t$ is denoted by $\st(t)$, and defined as usual. $\varsof(t)$ denotes the set of variables in $t$.
 
To derive new terms from existing ones, we use the proof system given in Table~\ref{tab:termalgtab}. $\rnenc$ and $\rnpair$ are constructor rules, while the others are treated as destructors.
We say $X \DYderives t$ if there is a proof of $X \vdash t$ using these rules, and use $X \DYderives S$ to denote that $X \DYderives t$ for every $t \in S$. For destructors, the leftmost premise is the \emph{major premise}.

\begin{table}
    \centering {\small 
    \tabulinesep=0.8mm
    \setlength{\tabcolsep}{0.4em}
    \begin{tabu}{|c|c|c|c|c|c|}
        \hline
        \multicolumn{2}{|c|}{
            \begin{prooftree}
                \justifies X \vdash t \using \rnax (t \in X)
            \end{prooftree}
        }
        &
        \multicolumn{2}{c|}{
            \begin{prooftree}
                X \vdash (t, u)
                \justifies X \vdash t \using \rnfst
            \end{prooftree}
        }
        &
        \multicolumn{2}{c|}{
            \begin{prooftree}
                X \vdash (t, u)
                \justifies X \vdash u \using \rnsnd
            \end{prooftree}
        }
        \\
        \hline
        \multicolumn{2}{|c|}{
            \begin{prooftree}
                X \vdash t \quad X \vdash u
                \justifies X \vdash (t, u) \using \rnpair
            \end{prooftree}
        }
        &
        \multicolumn{2}{c|}{
            \begin{prooftree}
                X \vdash \{t\}_{k} \quad X \vdash \inv{k}
                \justifies X \vdash t \using \rndec
            \end{prooftree}
        }
        &
        \multicolumn{2}{c|}{
            \begin{prooftree}
                X \vdash t \quad X \vdash k
                \justifies X \vdash \{t\}_{k} \using \rnenc
            \end{prooftree}
        }
        \\
        \hline
    \end{tabu}
    }
    \caption{Proof system for terms}
    \label{tab:termalgtab}
\end{table}

For any proof $\pi$ of $X \vdash t$, we often refer to $X$ as $\axiomsof(\pi)$, and we denote by $\concof(\pi)$ the term $t$, and by $\termsof(\pi)$ all terms occurring in $\pi$. $\pi$ is said to be \emph{normal} if no constructor rule is the major premise of a destructor rule. The following properties are standard (see~\cite{RT03}, for example). 
\begin{itemize}
    \item \emph{Normalization}: Every proof $\pi$ of $X \vdash t$ can be converted into a normal proof $\varpi$ of the same.
    \item \emph{Subterm property}: For any normal proof $\varpi$ of $X \vdash t$, $\termsof(\varpi) \subseteq \st(X \cup \{t\})$, and if $\varpi$ ends in a destructor rule, $\termsof(\varpi) \subseteq \st(X)$.
    \item \emph{Derivability problem}: There is a PTIME algorithm to check whether $X \DYderives t$, given $X$ and $t$.
\end{itemize}


\subsection{Formulas}
We consider a syntax which includes equality over terms (denoted by $\bowtie$)
and existential quantification.
An \emph{assertion} is a formula of the form $\exists{x_{1}\ldots{}x_{k}}.(\equals{t}{u})$ where $k \geq 0$, $\{x_{1}, \ldots, x_{k}\} \subseteq \qvar$, and $t, u \in \Tterms$. If $k = 0$, the assertion is just $\equals{t}{u}$.

%
We denote the free (resp. bound) variables occurring in an assertion $\alpha$ by $\freevars(\alpha)$
and $\boundvars(\alpha)$. $\varsof(\alpha) = \freevars(\alpha) \cup \boundvars(\alpha)$. The set of subterms of terms occurring in $\alpha$ is given by $\st(\alpha)$. The set of subformulas of $\alpha$ is given by $\SF(\alpha)$. We can lift these notions to sets of assertions as usual. For $\alpha = \exists{x_{1}\ldots{}x_{k}}.(\equals{t}{u})$ and a substitution $\lambda$, $\lambda(\alpha) \coloneqq \exists{x_{1}\ldots{}x_{k}}.(\equals{\lambda'(t)}{\lambda'(u)})$ where $\lambda' = \lambda\restriction\freevars(\alpha)$.

We define the \emph{public terms} of an assertion $\alpha$, denoted $\publics(\alpha)$, to be the set of maximal subterms of $\alpha$ that do not contain any quantified variables. Formally, $t \in \publics(\alpha)$ iff $t \in \st(\alpha)$, $\varsof(t) \cap \qvar = \emptyset$ and there is no $u \in \st(\alpha)$ s.t. $t \neq u, t \in \st(u)$ and $\varsof(u) \cap \qvar = \emptyset$.

\begin{example}
Suppose an assertion $\alpha = \exists{x}.\equals{t}{\{m\}_{x}}$ is communicated. $\alpha$ states that $t$ is an encryption of $m$ with some key. Thus $t$ and $m$ are revealed by $\alpha$, and $t,m \in \publics(\alpha)$.
\end{example}

\subsection{Proof System for Assertions}
Before we present the proof system, we need to fix the conditions under which one is allowed to derive a new assertion from existing ones. In a security context, it becomes important to distinguish when a term is accessible inside an assertion versus when it is not. To substitute a term $u$ (with, say, $x$) inside a term $t$, an agent $A$ must essentially break the term down to that position, replace $u$ with $x$, and construct the whole term back. This depends on other terms $A$ has access to. We formalize this notion as ``abstractability''.

%
\begin{definition}
	The set of \defemph{positions of a term} $t$, denoted $\positions{t}$, is a subset of $\nat^{*}$ defined as follows (where $\func$ is $(\cdot , \cdot)$ or $\enc$):  
	\begin{itemize}
	\item $\positions{m} \coloneqq \{\epsilon\}$, for $m \in \names \cup \vars$
	\item $\positions{\func(t,u)} \coloneqq \{\epsilon\} \cup \{0p \mid p \in \positions{t}\} \cup \{1p \mid p \in \positions{u}\}$
	\end{itemize}
	The \defemph{term positions of an assertion} $\alpha = \exists{}x_{1}\ldots{}x_{k}.(\equals{t}{u})$ are defined to be $\positions{\alpha} \coloneqq \{0^{k}0p \mid p \in \positions{t}\} \cup \{0^{k}1p \mid p \in \positions{u}\}$.
\end{definition}

For $t, r \in \Tterms$, and $p \in \positions{t}$, $\subtermat{t}{p}$ is the subterm of $t$ rooted at $p$. The set of positions where $r$ occurs in $t$ is given by $\posof{r}{t} \coloneqq \{p \in \positions{t} \mid \subtermat{t}{p} = r\}$. For $P \subseteq \positions{t}$, $\replsubtermat{t}{P}{r}$ is the term obtained by replacing the subterm of $t$ occurring at each $p \in P$ with $r$. We will use analogous notation for assertions. 

\begin{definition}
Let $t \in \Tterms$. The set of \defemph{abstractable positions of $t$ w.r.t.\ $S$}, denoted $\abstractable(S, t)$, is defined as follows. For $p \in \positions{t}$, let $\mathbb{Q}_{p} = \{\varepsilon\} \cup \{qi \in \positions{t} \mid q$ is a proper prefix of $p\}$. Then $\abstractable(S,t) \coloneqq \{p \in \positions{t} \mid S \DYderives \subtermat{t}{q}$ for all $q \in \mathbb{Q}_{p}\}$.
\end{definition}

\begin{example}
Let $t = (\{\{m\}_{k}\}_{k'}, (n,n'))$. 
\[ \positions{t} = \{\epsilon, 0, 1, 00, 01, 10, 11, 000, 001\} \]
For $S = \{\{m\}_{k}, k', n,n'\}$, $\abstractable(S,t) = \{\epsilon,0,1,00,01,10,11\}$. For $T = \{m,k,k',(n,n')\}$, $\abstractable(T,t) = \positions{t}$.
\end{example}

\begin{definition}[Abstractable positions of an assertion]\label{def:assabs}
	The set of \defemph{abstractable positions of $\alpha = \exists{}x_{1}\ldots{}x_{k}.(\equals{t}{u})$ w.r.t.\ $S$}, is defined to be the following, where $T = S \cup \{x_{1}, \ldots, x_{k}\}$:
	\[\abstractable(S,\alpha) \coloneqq \{0^{k}0p \mid p \in \abstractable(T, t)\} \cup \{0^{k}1p \mid p \in \abstractable(T, u)\}\]
%
\end{definition}

\begin{example}\label{ex:abs}
    Let $\alpha = \exists b.~\{\equals{\{m\}_{b}}{\{m\}_{k}}\}$. Suppose we want to get $\exists ab.~\{\equals{\{a\}_{b}}{\{m\}_{k}}\}$ from $\alpha$ w.r.t.~the set $S = \{m\}$.
    That position of $m$ in $\alpha$ (i.e. $000$) must be abstractable. This requires the sibling position $001$ to also be abstractable, i.e. $S \DYderives b$. However, $S$ does not contain the quantified variable $b$. Therefore, Definition~\ref{def:assabs} is set up to consider derivability from $S \cup \{b\}$, not $S$. Thus, $\abstractable(S, \alpha) = \{00, 000, 001\}$. The tree for $\alpha$ is shown in Figure~\ref{fig:exabs}, with the abstractable positions enclosed in boxes. 
\end{example}

\begin{figure}
\begin{center}
\begin{tikzpicture}[level distance = 8mm]
  \tikzstyle{every node}=[draw=white,text=black]
  \tikzstyle{level 1}=[sibling distance=25mm]
  \tikzstyle{level 2}=[sibling distance=25mm]
  \tikzstyle{level 3}=[sibling distance=15mm]
  \tikzstyle{level 4}=[sibling distance=10mm]
  \node(e) {$\exists b$}
    child 
    { 
         node(0) { $\bowtie$ }
	   child
	   {
       		node(00) {$\enc$}
            	child
            	{
            		node(000) {$m$}
            		node[below = 0.1mm of 000, draw=black]{000}
            	}
            	child
            	{
            		node(001) {$b$}
            		node[below = 0.1mm of 001, draw=black]{001}
            	}
          	     node[left = 0.1mm of 00, draw=black]{00}
      	 } 
             child
             { 
           	node(01) {$\enc$}
    		child
    		{
    			node(010) {$m$}
    			node[below = 0.1mm of 010]{010}
    		}
    		child
    		{
    			node(011) {$k$}
    			node[below = 0.1mm of 011]{011}
    		}
              node[right = 0.1mm of 01]{01}
               }
    }
;  
\end{tikzpicture}
\caption{Assertion tree for Example~\ref{ex:abs}}
\label{fig:exabs}
\end{center}
\end{figure}

It can be seen that if we substitute variables occurring in some abstractable positions with derivable terms, the abstractability of other positions does not change. This is stated below and proved in Appendix~\ref{app:derivability}. 
\begin{restatable}{lemma}{absderiv}\label{lem:absderiv}
Let $S \cup \{t,r\} \subseteq \Tterms$ s.t.\ $S \DYderives r$. If $x \notin \varsof(S)$ and $P = \posof{x}{t} \subseteq \abstractable(S \cup \{x\}, t)$, then $\abstractable(S, \replsubtermat{t}{P}{r}) \cap \positions{t} = \abstractable(S \cup \{x\}, t)$. 
\end{restatable}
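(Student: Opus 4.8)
The plan is to unfold the definition of $\abstractable$ and reason position-by-position. Write $t' = \replsubtermat{t}{P}{r}$ where $P = \posof{x}{t}$. The key structural fact I would establish first is a bijection (or at least a correspondence) between $\positions{t}$ and the ``outer'' positions of $t'$: since $r$ replaces a variable $x$ (a leaf, so $\positions{x} = \{\epsilon\}$), every position $p \in \positions{t}$ still names a position in $\positions{t'}$, and for $q \in \positions{t}$ the subterm $\subtermat{t'}{q}$ is exactly $\subtermat{t}{q}$ with all occurrences of $x$ (that lie inside $\subtermat{t}{q}$) replaced by $r$. Moreover, for a position $q$ that is \emph{not} below any element of $P$, we have $\subtermat{t'}{q} = \subtermat{t}{q}$ outright, because $x$ does not occur in $\subtermat{t}{q}$ in that case — this uses $P = \posof{x}{t}$, i.e. $P$ lists \emph{all} occurrences of $x$.

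Next I would prove the two inclusions of the claimed equality $\abstractable(S, t') \cap \positions{t} = \abstractable(S \cup \{x\}, t)$ separately, in both cases chasing through the definition: $p$ is abstractable iff $S \DYderives \subtermat{t}{q}$ (resp.\ from $t'$, resp.\ with $S \cup \{x\}$) for every $q \in \mathbb{Q}_p$, where $\mathbb{Q}_p$ collects $\varepsilon$ together with the sibling positions along the path to $p$. For the direction $(\supseteq)$: assume $p \in \abstractable(S \cup \{x\}, t)$. Take $q \in \mathbb{Q}_p$; I must show $S \DYderives \subtermat{t'}{q}$. If the occurrence of $x$ is relevant, $\subtermat{t'}{q}$ is obtained from $\subtermat{t}{q}$ by replacing $x$ with $r$; since $S \cup \{x\} \DYderives \subtermat{t}{q}$ and $S \DYderives r$, substituting a derivable term $r$ for $x$ throughout a derivation from $S \cup \{x\}$ yields a derivation from $S$ of the substituted conclusion — this ``substitution closure of $\DYderives$'' is the small lemma I would isolate (it follows by induction on the proof, using that $x \notin \varsof(S)$ so no axiom instance is disturbed, or more simply: $S \DYderives \subtermat{t}{q}$ already, and $S \DYderives r$, so $S$ derives the substituted term by re-deriving). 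The reverse direction $(\subseteq)$ is symmetric: from $S \DYderives \subtermat{t'}{q}$ we want $S \cup \{x\} \DYderives \subtermat{t}{q}$, and since adding $x$ to the left only strengthens derivability it suffices to observe $S \cup \{x\} \DYderives \subtermat{t'}{q}$ and then recover $\subtermat{t}{q}$ from $\subtermat{t'}{q}$ by substituting $x$ back for $r$ — but that substitution is not literally valid since $r$ need not be a variable, so here I would instead argue directly: the hypothesis $P \subseteq \abstractable(S \cup \{x\}, t)$ is exactly what guarantees that along any path, the siblings hanging off the path to an occurrence of $x$ are themselves derivable from $S \cup \{x\}$, which lets me reconstruct $\subtermat{t}{q}$ from the pieces.

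The main obstacle I anticipate is the $(\subseteq)$ direction, precisely because replacing $r$ back by $x$ is not a substitution one can perform freely when $r$ is a compound term that may already occur elsewhere in $t$. The hypothesis $P \subseteq \abstractable(S \cup \{x\}, t)$ is doing real work here and I expect the proof to hinge on using it carefully: it ensures that every occurrence of $x$ in $t$ sits at a position all of whose ancestors' siblings are $(S\cup\{x\})$-derivable, so that when we pass to $t'$ the newly-inserted copies of $r$ cannot create ``spurious'' derivability that wasn't already licensed. I would handle this by inducting on the length of $p$ (equivalently, on the structure of $t$), carrying the statement that for every position $q$ on the path, $\subtermat{t}{q}$ and $\subtermat{t'}{q}$ are inter-derivable from the respective knowledge sets, and feeding in the abstractability of $P$ at each branch point. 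The remaining bookkeeping — that $\mathbb{Q}_p$ for $t$ and for $t'$ correspond under the position map, and that $\positions{t} \subseteq \positions{t'}$ — is routine given the leaf-replacement observation above.
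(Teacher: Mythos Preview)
Your plan is correct and the two directions you isolate are the right ones, but the organization is genuinely different from the paper's. You work position-by-position through $\mathbb{Q}_p$: for $(\supseteq)$ you invoke substitution-closure of $\DYderives$ (if $S\cup\{x\}\DYderives v$ and $S\DYderives r$ with $x\notin\varsof(S)$, then $S\DYderives v[x\mapsto r]$), and for $(\subseteq)$ you reconstruct $\subtermat{t}{q}$ from the abstractability of the $x$-positions. The paper instead first proves two structural lemmas about abstractable sets --- that they are prefix- and sibling-closed, and that a prefix/sibling-closed set $Q$ lies inside $\abstractable(T,a)$ as soon as $T$ derives the subterms at the \emph{maximal} positions of $Q$ --- and then argues entirely at the level of maximal positions of $A=\abstractable(S\cup\{x\},t)$ and of $B=\abstractable(S,t')\cap\positions{t}$. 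The payoff of the paper's route is that because $x$ is a leaf, every $x$-position is automatically maximal in $A$, so at each maximal position the subterm is either literally $x$ or $x$-free; this makes both inclusions a two-line case split with no induction. Your route avoids the auxiliary lemmas but pays for it with more bookkeeping on $\mathbb{Q}_p$.

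One small remark on your $(\subseteq)$ sketch: the induction on $|p|$ you propose is not actually needed. For $q\in\mathbb{Q}_p$, either $x\notin\varsof(\subtermat{t}{q})$ (so $\subtermat{t}{q}=\subtermat{t'}{q}$ and you inherit derivability from $S$), or $q$ is a prefix of some $p'\in P$; in the latter case the hypothesis $p'\in\abstractable(S\cup\{x\},t)$ directly gives $S\cup\{x\}\DYderives\subtermat{t}{q0},\subtermat{t}{q1}$ (both children of $q$ lie in $\mathbb{Q}_{p'}$), hence $\subtermat{t}{q}$ by a constructor. So a single case distinction on whether $x$ occurs below $q$ already closes the argument.
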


\begin{table*}[ht]
	\centering {\small
        \tabulinesep=2mm
        \setlength{\tabcolsep}{0.22em}
        \begin{tabu}{|c|c|c|c|c|c|}
            \hline
            \multicolumn{2}{|c|}{
			\begin{math}
                {
                    \begin{prooftree}
                        \justifies S; A \cup \{\alpha\} \vdash \alpha \using \rnax
					\end{prooftree}
			    }
			\end{math}
            }
			&			
            \multicolumn{2}{|c|}{
			\begin{math}
                {
					\begin{prooftree}
						S \DYderives t
						\justifies S; A \vdash \equals{t}{t} \using \rneq
					\end{prooftree}
                }
			\end{math}
            }
			&
            \multicolumn{2}{|c|}{
            \begin{math}
                {
                    \begin{prooftree}
                        S;A \vdash \equals{t}{u}
                        \justifies S;A \vdash \equals{u}{t} \using \rnsymm
                    \end{prooftree}
                }
            \end{math}
            }
			\\
			\hline
            \multicolumn{3}{|c|}{
            \begin{math}
                {
                    \begin{prooftree}
                        S; A \vdash \equals{t_{0}}{u_{0}} \quad S; A \vdash \equals{t_{1}}{u_{1}}
                        \justifies S; A \vdash \equals{\func(t_{0},t_{1})}{\func(u_{0},u_{1})} \using \rncons
					\end{prooftree}
                }
            \end{math}
            }
			&
            \multicolumn{3}{|c|}{
			\begin{math}
                {
                    \begin{prooftree}
                        S; A \vdash \equals{\func(t_{0},t_{1})}{\func(u_{0}, u_{1})}
                        \justifies S; A \vdash \equals{t_{i}}{u_{i}} \using \rnproj_{i}^{\P}
					\end{prooftree}
                }
            \end{math}
            }
			\\
			\hline
            \multicolumn{6}{|c|}{
			\begin{math}
                {
                    \begin{prooftree}
                        S; A \vdash \equals{t_{1}}{t_{2}} \quad \cdots \quad S;A \vdash \equals{t_{k}}{t_{k+1}}
                        \justifies S; A \vdash \equals{t_{1}}{t_{k+1}} \using \rntrans
					\end{prooftree}
                }
			\end{math}
            }
            \\
            \hline 
            \multicolumn{3}{|c|}{
			\begin{math}
                {
                    \begin{prooftree}
                        S; A \vdash \replsubtermat{\alpha}{P}{t} \quad S \DYderives t
                        \justifies S; A \vdash \exists{x}.\alpha \using \rnexintro^{\ddag}
                    \end{prooftree}
                }
            \end{math}
            }
            &
            \multicolumn{3}{|c|}{
                \begin{math}
                    {
                        \begin{prooftree}
                            S; A \vdash \exists{}{x}.\alpha \quad S \cup \{x\}; A \cup \{\alpha\} \vdash \gamma
                            \justifies S; A \vdash \gamma \using \rnexelim^{\S}
                        \end{prooftree}
                    }
                \end{math}
            }
			\\
			\hline
		\end{tabu}
	}
	\caption{
        Derivation system $\assderives$ for assertions. $\func$ is pairing or encryption.
        $\P$ states that $S \DYderives \{t_{0}, t_{1}, u_{0}, u_{1}\}$.
        $\ddag$ stands for $P = \posof{x}{\alpha} \subseteq \abstractable(S \cup \{x\}, \alpha)$. $\S$ states that $x \notin \freevars(S) \cup \freevars(A) \cup \freevars(\gamma)$. 
    }
	\label{tab:asstheory}
\end{table*}

The assertion proof system is shown in Table~\ref{tab:asstheory}. We say that $(S; A) \assderives \alpha$ if there is a proof of $(S;A) \vdash \alpha$ using these rules. $(S;A) \assderives \Gamma$ means that $(S;A) \assderives \gamma$ for every $\gamma \in \Gamma$. We say that $(S;A) \eqderives \alpha$ if there is a proof of $(S;A) \vdash \alpha$ that does not use the $\rnexintro$ and $\rnexelim$ rules. We use the notations $\axiomsof(\pi)$, $\termsof(\pi)$ and $\concof(\pi)$, just as with $\DYderives$ proofs.

\subsection{Remarks about the proof rules}
Since we reason about formulas in a security context, a few subtleties arise in our proof system. We discuss these below. 
\begin{itemize}[leftmargin=*]
\item ``Standard'' reasoning with equality would allow us to derive $\equals{m}{x}$ from $\equals{\func(m,k)}{\func(x,y)}$. But suppose $\func$ is encryption. Security considerations dictate that one should not be able to derive $\equals{m}{x}$ from $\equals{\{m\}_{k}}{\{x\}_{y}}$, unless one can derive $m$. So, in general, one needs to reason with equalities in the presence of a set of terms $S$. For the $\rnproj$ rule in particular, we require that all immediate subterms are derivable from this set $S$. This, however, does not rule out reasonable derivations, as illustrated in the next example.

\begin{example}\label{ex:eqderiv}
Let $A = \bigl\{\equals{x}{\{m\}_{y}}, \equals{x}{\{z\}_{k}}\bigr\}$ and $S = \{m,k,y,z\}$. One can derive $\equals{x}{\{m\}_{k}}$ from $S$ and $A$, as in Figure~\ref{fig:exeqderiv}. (For readability, we only present the RHS of every sequent in the proof.) 
\end{example}

\begin{figure}
\begin{center}
{\footnotesize
\begin{math}
\begin{prooftree}
\[
\justifies \equals{x}{\{m\}_{y}} \using \rnax
\]
\[
\[
S \DYderives m
\justifies \equals{m}{m} \using \rneq
\]
\[
\[
\[
\[
\justifies \equals{x}{\{m\}_{y}} \using \rnax
\]
\justifies \equals{\{m\}_{y}}{x} \using \rnsymm
\]
\[
\justifies \equals{x}{\{z\}_{k}} \using \rnax
\]
\justifies \equals{\{m\}_{y}}{\{z\}_{k}} \using \rntrans
\]
\justifies \equals{y}{k} \using \rnproj
\]
\justifies \equals{\{m\}_{y}}{\{m\}_{k}} \using \rncons
\]
\justifies \equals{x}{\{m\}_{k}} \using \rntrans
\end{prooftree}
\end{math}
}
\end{center}
\caption{Proof tree for Example~\ref{ex:eqderiv}}
\label{fig:exeqderiv}
\end{figure}

\item As mentioned earlier, abstractability plays a crucial role in the application of the $\rnexintro$ rule. Note that the $\rnexintro$ rule in Table~\ref{tab:asstheory} requires a \emph{constructible witness}. We now explain this modification from the ``standard'' $\rnexintro$ rule.

Suppose $P$ is a set of positions in an assertion $\beta$ where a term $t$ occurs. Abstracting the occurrences of $t$ in these positions can be thought of as obtaining $\exists{x}.\replsubtermat{\beta}{P}{x}$ from $\beta$. The above is possible in the presence of a set of terms $S$ only when $P \subseteq \abstractable(S,\beta)$. Alternatively, suppose $\alpha = \replsubtermat{\beta}{P}{x}$. Then $\beta = \replsubtermat{\alpha}{P}{t}$. Lemma~\ref{lem:absderiv} guarantees that $P \subseteq \abstractable(S,\beta)$ iff $P \subseteq \abstractable(S \cup \{x\}, \alpha)$, under the assumption $S \DYderives t$, so we take that as a pre-requisite for $\rnexintro$.

\item A \emph{substitution} rule is standard for systems handling equality.  For $\assderives$, we consider the following $\rnsubst$ rule, with the abstractability side conditions $P \subseteq \posof{x}{\equals{t}{u}} \cap \abstractable(S \cup \{x\}, \equals{t}{u})$ and $S \DYderives \{r,s\}$). 
\[
{
\begin{prooftree}
(S;A) \vdash \replsubtermat{(\equals{t}{u})}{P}{r} \quad (S;A) \vdash \equals{r}{s}
\justifies (S;A) \vdash \replsubtermat{(\equals{t}{u})}{P}{s} \using \rnsubst
\end{prooftree}
}
\]
We can simulate this rule using $\rneq$, $\rnsymm$, $\rntrans$ and $\rncons$, as proved in Appendix~\ref{app:derivability}. Hence, we omit it from Table~\ref{tab:asstheory}.
\end{itemize}

\section{Derivability problem for \texorpdfstring{$\assderives$}{assertions}}\label{sec:derivability}
The derivability problem requires one to check if $\alpha$ is derivable from $(S; A)$. To check whether an assertion of the form $\exists{x}.~\beta$ is derivable, one would in general have to check if $\beta(t)$ is derivable for some witness $t$. Here we show that if there is a witness at all, there is one of small size. 

One way to represent this is via a substitution $\mu$ which maps each quantified variable $x$ to the corresponding $t$. In any proof, one can remove quantifiers from the LHS (instantiated by ``eigenvariables'') to move to an LHS consisting solely of equality formulas. Any normal proof from such an LHS will not involve $\rnexelim$. One can also simplify the RHS -- instead of proving an assertion $\exists \vec{x}. \equals{t}{u}$, it suffices to prove $\equals{t}{u}$ in the $\eqderives$ system (the system without the rules for quantification) with variables instantiated appropriately. 

Hereon, in all assertions we consider, no variable appears both free and bound. Further, whenever we use $(S; A)$, we mean that $S$ is a set of terms, $A$ is a set of assertions, and $(S; A)$ is \emph{sanitized}, as defined below. We use the notation $\varsof(S;A)$ to mean $\varsof(S) \cup \varsof(A)$ and $\freevars(S;A)$ to mean $\varsof(S) \cup \freevars(A)$. 

\begin{definition}
$(S; A)$ is \defemph{sanitized} if $\freevars(S;A) \cap \qvar = \emptyset$ and $\publics(\alpha) \in S$ for all $\alpha \in A$.
\end{definition}

\begin{definition}[Derivability problem for $\assderives$] 
    Given a set $S$ of terms, and a set $A \cup \{\alpha\}$ of assertions such that $(S; A)$ is sanitized, check whether $S; A \assderives \alpha$.
\end{definition}

\begin{restatable}{lemma}{leftex}\label{lem:leftex}
    Let $S, A, \exists x.\alpha$ and $\gamma$ be such that $x \notin \varsof(S) \cup \varsof(A \cup \{\gamma\})$ and $\posof{x}{\alpha}\subseteq \abstractable(S \cup \{x\}, \alpha)$. Then $(S; A \cup \{\exists x.\alpha\}) \assderives \gamma$ iff $(S \cup \{x\}; A \cup \{\alpha\}) \assderives \gamma$.
\end{restatable}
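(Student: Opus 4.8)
The plan is to prove the two directions separately, each by induction on the structure of a normal derivation. The forward direction ($\Leftarrow$) is the easy one and is essentially a single application of $\rnexelim$: given a proof $\pi$ of $(S\cup\{x\}; A\cup\{\alpha\})\assderives\gamma$, I note that by $\rnax$ we have $(S;A\cup\{\exists x.\alpha\})\assderives\exists x.\alpha$, and the side condition $\S$ for $\rnexelim$ holds precisely because $x\notin\freevars(S)\cup\freevars(A)\cup\freevars(\gamma)$, which is part of the hypothesis. So $\rnexelim$ applied to these two premises yields $(S;A\cup\{\exists x.\alpha\})\assderives\gamma$. (One should double-check that $x$ does not clash with any eigenvariable introduced inside $\pi$, but this can be arranged by renaming, since eigenvariables are bound.)

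The reverse direction ($\Rightarrow$) is the substantive one. Given a normal proof $\pi$ of $(S; A\cup\{\exists x.\alpha\})\assderives\gamma$, I want to transform it into a proof of $(S\cup\{x\}; A\cup\{\alpha\})\assderives\gamma$. The natural approach is induction on $\pi$, pushing the "extra hypothesis" $x$ on the left and $\alpha$ in place of $\exists x.\alpha$ through every rule. The key observation is that adding $x$ to the term-side $S$ and replacing the assertion $\exists x.\alpha$ by the stronger $\alpha$ can only make more things derivable: by monotonicity of $\abstractable$ (more terms derivable means more abstractable positions), all the side conditions $\P$, $\ddag$, $\S$ that held in $\pi$ still hold after the change, using the hypothesis $\posof{x}{\alpha}\subseteq\abstractable(S\cup\{x\},\alpha)$ together with Lemma \ref{lem:absderiv} where a position of the form $\replsubtermat{\cdot}{P}{t}$ shows up. The only rule whose treatment requires thought is an application of $\rnax$ that uses the axiom $\exists x.\alpha$ itself: in the transformed proof we no longer have $\exists x.\alpha$ on the left, but we do have $\alpha$, and from $\alpha$ we can re-derive $\exists x.\alpha$ by a single $\rnexintro^{\ddag}$ — the side condition $\ddag$ is exactly $\posof{x}{\alpha}\subseteq\abstractable(S\cup\{x\},\alpha)$, which is our hypothesis (with the witness $x$ itself, using that $S\cup\{x\}\DYderives x$ by $\rnax$). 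For an $\rnexelim$ step inside $\pi$ that eliminates some other quantifier $\exists y.\beta$, one just recurses on both premises, taking care that the fresh eigenvariable $y$ is chosen distinct from $x$.

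The main obstacle I anticipate is bookkeeping around eigenvariable freshness and the interaction of $x$ with the $\rnexelim$ side condition $\S$ at nested applications: when I push $x$ into $S$, any later $\rnexelim$ in the proof that eliminates a quantifier $\exists y.\beta$ needs $y\notin\freevars(S\cup\{x\};\ldots)$, so I must verify $y\neq x$, which holds since $x$ is globally fixed and eigenvariables can be renamed apart from it. A secondary subtlety is making sure that when the transformation re-derives $\exists x.\alpha$ via $\rnexintro$ in the middle of a proof, the ambient term set at that node is still large enough for $\ddag$ to hold; this follows because the term set only grows as one moves up a proof (rules only add eigenvariables on the left), so $\abstractable$ only grows, and the hypothesis gives $\ddag$ already at the root set $S\cup\{x\}$. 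Modulo these routine checks, the induction goes through cleanly; the bulk of the argument is the monotonicity of $\abstractable$ and a careful case analysis on the last rule of $\pi$.
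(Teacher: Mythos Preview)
Your $\Leftarrow$ direction (from $(S\cup\{x\}; A\cup\{\alpha\})\assderives\gamma$ to $(S; A\cup\{\exists x.\alpha\})\assderives\gamma$) coincides with the paper's proof: one $\rnax$ followed by one $\rnexelim$.

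For the $\Rightarrow$ direction your approach is correct but takes a different route from the paper. The paper gives a two-line argument via cut: from $(S\cup\{x\}; A\cup\{\alpha\})$ one derives $\exists x.\alpha$ by $\rnax$ on $\alpha$ and then $\rnexintro$ with witness $x$ (the side condition $\ddag$ is precisely the hypothesis $\posof{x}{\alpha}\subseteq\abstractable(S\cup\{x\},\alpha)$, and $S\cup\{x\}\DYderives x$ trivially); one then weakens the assumed proof of $(S; A\cup\{\exists x.\alpha\})\vdash\gamma$ and applies the admissible $\rncut$ rule against this derivation to obtain $(S\cup\{x\}; A\cup\{\alpha\})\vdash\gamma$. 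Your structural induction is essentially an explicit unfolding of that cut: you push the replacement of $\exists x.\alpha$ by $\alpha$ (and the addition of $x$ to the term side) through every rule, and your handling of the $\rnax$ case---re-deriving $\exists x.\alpha$ from $\alpha$ via $\rnexintro$---is exactly the little proof the paper cuts against. What the paper buys is brevity, at the price of invoking cut admissibility as a black box (it is asserted but not proved in the paper); what your approach buys is self-containment, at the price of the rule-by-rule bookkeeping you correctly identify (monotonicity of $\abstractable$, eigenvariable freshness in nested $\rnexelim$). Both are sound. As a minor terminological quibble, you label the easy direction ``forward ($\Leftarrow$)'' and the hard one ``reverse ($\Rightarrow$)'', which is confusing naming, though your arrows and the content under them are consistent.
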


With the help of this lemma (proved in Appendix~\ref{app:derivability}), we can transform any proof to one where the LHS consists solely of atomic formulas. This leads us to a notion of \emph{kernel}.

\begin{definition}
    The \defemph{kernel} of $(S;A)$, denoted $\dc(S; A)$, is defined to be $(S \cup \boundvars(A); \{\equals{t}{u} \mid \exists{\vec{x}}.\equals{t}{u} \in A\})$. 
\end{definition}

Each $x \in \boundvars(A)$ that is added to the kernel can be thought of as an eigenvariable which stands for the witness for an existential assertion in $A$. The following statement is an easy consequence of the way $\dc(S;A)$ is defined.
\begin{restatable}{lemma}{breakdownass}\label{lem:breakdownass}
	For any $\gamma$, $(S; A) \assderives \gamma$ iff $\dc(S; A) \assderives \gamma$ without using the $\rnexelim$ rule.
\end{restatable}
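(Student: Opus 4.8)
The plan is to prove both directions by induction on proof structure, using Lemma~\ref{lem:leftex} as the workhorse for stripping quantified assertions off the left-hand side. Note first that by the sanitization assumption and the convention that no variable occurs both free and bound, for every $\exists\vec{x}.\equals{t}{u} \in A$ the eigenvariables $\vec{x}$ are fresh: they do not occur in $S$, nor free in $A$, and we may assume (renaming if necessary) they do not occur in $\gamma$ either. Moreover, the abstractability side condition $\posof{x}{\alpha} \subseteq \abstractable(S \cup \{x\}, \alpha)$ needed to invoke Lemma~\ref{lem:leftex} holds because $\publics(\exists\vec{x}.\equals{t}{u}) \subseteq S$; intuitively, every maximal quantifier-free subterm is already in $S$, so each position on the path to an occurrence of a bound variable is abstractable once the bound variables themselves are thrown in. I would isolate this as a small observation before the main argument.

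For the forward direction, suppose $(S;A) \assderives \gamma$. Enumerate the existential assertions in $A$ as $\exists\vec{x}_{1}.\equals{t_{1}}{u_{1}}, \ldots, \exists\vec{x}_{m}.\equals{t_{m}}{u_{m}}$ (each $\vec{x}_j$ a block of quantified variables). Applying Lemma~\ref{lem:leftex} repeatedly --- once per quantifier in each block, peeling the outermost existential each time --- we replace each $\exists\vec{x}_{j}.\equals{t_j}{u_j}$ on the left by $\equals{t_j}{u_j}$ while adding the block $\vec{x}_j$ to the term-set. Since the eigenvariables of distinct assertions are distinct and fresh, the freshness hypothesis of Lemma~\ref{lem:leftex} is maintained at every step (the variables introduced so far stay fresh for the remaining assertions). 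After all $m$ blocks are processed we have exactly $(S \cup \boundvars(A); \{\equals{t}{u} \mid \exists\vec{x}.\equals{t}{u} \in A\}) = \dc(S;A)$ on the left, and $(\dc(S;A)) \assderives \gamma$. Finally, since the left-hand side of $\dc(S;A)$ contains no existential assertions, any $\rnexelim$ application in a proof from it would have a major premise $\exists x.\alpha$ that must itself be derived; one can push such applications upward and ultimately eliminate them --- or, more cleanly, argue that a normal proof from an existential-free left-hand side has no $\rnexelim$, invoking the normalization discussion preceding the lemma. The converse direction is immediate: a proof of $\dc(S;A) \assderives \gamma$ can be turned into a proof of $(S;A) \assderives \gamma$ by running the same $m$ applications of Lemma~\ref{lem:leftex} in reverse (the ``only if'' directions of those biconditionals), reintroducing each $\exists\vec{x}_j.\equals{t_j}{u_j}$; the resulting proof is a fortiori a proof from $(S;A)$, and whether or not it uses $\rnexelim$ is irrelevant for this direction.

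The main obstacle, I expect, is the claim that a proof from an existential-free left-hand side can be taken to avoid $\rnexelim$. This is the only point where we really need proof-transformation rather than a black-box appeal to Lemma~\ref{lem:leftex}. The cleanest route is: in a proof of $\dc(S;A) \vdash \gamma$, consider a topmost $\rnexelim$; its major premise $\exists x.\alpha$ is derived by a subproof $\pi'$ from $\dc(S;A)$ whose left-hand side has no existentials, so $\pi'$ must end in $\rnexintro$ (it cannot end in $\rnax$ since $\dc(S;A)$ has no existential assertions, and no other rule concludes an existential). Then a standard $\exists$-reduction --- substitute the constructible witness $t$ of the $\rnexintro$ for $x$ throughout the minor premise's subproof, using that $x$ is not free in $S$, $A$, or $\gamma$, and that the witness $t$ is derivable from the relevant term-set (which is what the $\ddag$ side condition of $\rnexintro$ guarantees) --- removes this $\rnexelim$ without introducing new ones above it. Iterating gives an $\rnexelim$-free proof. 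I would present this as a short induction on the number of $\rnexelim$ instances, citing the normalization machinery already alluded to in the text rather than redoing all the commuting-conversion bookkeeping.
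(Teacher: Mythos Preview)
Your overall plan is correct and is exactly what the paper (which only says ``easy consequence'' and gives no detail) would have in mind: iterate Lemma~\ref{lem:leftex} to pass between $(S;A)$ and $\dc(S;A)$, and separately argue that $\rnexelim$ is eliminable when the left-hand side contains only equalities. Your abstractability observation is also right: since $\publics(\alpha)\subseteq S$ for every $\alpha\in A$, every subterm of $t,u$ is constructible from $S\cup\boundvars(\alpha)$, so all positions are abstractable and the hypothesis of Lemma~\ref{lem:leftex} is met at each step.

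There is, however, a genuine gap in your sketch of the $\rnexelim$-elimination. You claim that a \emph{topmost} $\rnexelim$ has its major premise proved from $\dc(S;A)$ itself, hence from a left-hand side with no existentials. That is not true in general: a topmost $\rnexelim$ may sit inside the minor-premise branch of a lower $\rnexelim$, and the latter adds the body $\alpha$ to the left-hand side. In this calculus an assertion has the shape $\exists x_{1}\ldots x_{k}.\equals{t}{u}$, so after one $\rnexelim$ the added $\alpha=\exists x_{2}\ldots x_{k}.\equals{t}{u}$ is still existential whenever $k\ge 2$. The major premise of your topmost $\rnexelim$ can then be discharged by $\rnax$ on that very formula, and your principal reduction does not fire. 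You cannot simultaneously have ``no $\rnexelim$ in the subproofs'' and ``left-hand side is exactly $\dc(S;A)$'' unless there is only one $\rnexelim$ in the whole proof.

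The fix is standard but not free: either permute $\rnexelim$ downwards past other rules (so that the outermost $\rnexelim$ really has left-hand side $\dc(S;A)$, and handle its possibly $\rnexelim$-containing major premise by an outer induction on proof size), or add the commuting conversion that merges an $\rnexelim$ whose major premise is an $\rnax$ on a formula introduced by an enclosing $\rnexelim$ into that enclosing instance. Either way, this is precisely the ``commuting-conversion bookkeeping'' you proposed to skip; since the paper proves normalization only for $\eqderives$ (Theorem~\ref{thm:normeq}) and gives no normalization theorem for the full system with $\rnexintro/\rnexelim$, you cannot simply cite it and must carry out at least a short argument here.
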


We will refer to $\dc(S;A)$ for any sanitized $(S;A)$ as \emph{pure}. The following property of pure $(T;E)$ is proved in Appendix~\ref{app:derivability}.

\begin{restatable}{obs}{dcpure}\label{obs:dc-pure}
Let $(T; E)$ be pure. If $(T;E) \assderives \alpha$ and $a \in \publics(\alpha)$, then $T \DYderives a$. If $(T;E) \eqderives \equals{t}{u}$ then $T \DYderives t$ and $T \DYderives u$.
\end{restatable}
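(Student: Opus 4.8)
\textbf{Proof plan for Observation~\ref{obs:dc-pure}.}
The plan is to prove both statements by induction on the structure of a normal proof $\pi$ witnessing the respective derivability, exploiting the fact that $(T;E)$ being pure means $E$ consists only of atomic equalities $\equals{t}{u}$ and hence no $\rnexintro$ or $\rnexelim$ rule can be the last rule (the second claim rules these out by hypothesis, and for the first claim a normal proof of an atomic-conclusion-free assertion from a pure context can still in principle end in $\rnexintro$, so I will first argue that if $(T;E)\assderives\alpha$ with $\alpha$ not of the form $\exists\vec{x}.\ldots$ having any quantifiers then we are in the $\eqderives$ fragment; more carefully, I will prove the first statement for arbitrary $\alpha$ by handling the $\rnexintro$ case directly). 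I will in fact prove the second statement first, since it is the cleaner one and feeds into the first.

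For the second statement, suppose $(T;E)\eqderives\equals{t}{u}$ via a normal proof $\pi$, and induct on $\pi$. In the base cases, $\rnax$ gives $\equals{t}{u}\in E$; since $(T;E)$ is pure, every equality in $E$ was obtained by stripping quantifiers from some $\exists\vec{x}.\equals{t}{u}\in A$ for a sanitized $(S;A)$, and $\publics$ of that assertion lies in $S\subseteq T$, and moreover the bound variables $\vec{x}$ were added to $T$; a short argument shows every maximal quantifier-free subterm of $t$ (resp.\ $u$) is either in $\publics$ or is one of the added eigenvariables, hence $T$ derives all of them and therefore $t$ (resp.\ $u$) by the constructor rules $\rnpair,\rnenc$. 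The $\rneq$ case is immediate since its side condition is exactly $T\DYderives t$. For the inductive rules: $\rnsymm$ is trivial by the induction hypothesis; $\rncons$ applied to $\equals{t_0}{u_0}$ and $\equals{t_1}{u_1}$ gives $t=\func(t_0,t_1)$ derivable from $T\DYderives t_0,t_1$ by a constructor rule (and symmetrically for $u$); $\rnproj_i^{\P}$ has side condition $\P$ which directly states $S\DYderives\{t_0,t_1,u_0,u_1\}$, hence $T$ derives $t_i$ and $u_i$; $\rntrans$ from $\equals{t_1}{t_2},\ldots,\equals{t_k}{t_{k+1}}$ gives $T\DYderives t_1$ and $T\DYderives t_{k+1}$ directly by the induction hypothesis on the first and last premises.

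For the first statement, suppose $(T;E)\assderives\alpha$ and $a\in\publics(\alpha)$. Again induct on a normal proof $\pi$. If $\pi$ does not use $\rnexintro$ then $\pi$ is an $\eqderives$ proof, $\alpha=\equals{t}{u}$, $\publics(\alpha)$ consists of the maximal quantifier-free subterms of $t$ and $u$, and a straightforward strengthening of the argument in the previous paragraph — tracking not just that $T$ derives $t$ and $u$ but that it derives each maximal quantifier-free subterm appearing anywhere in the proof's conclusion — gives $T\DYderives a$; concretely, in each rule one checks the maximal quantifier-free subterms of the conclusion are among those of the premises or are immediately derivable via the side conditions ($\P$ for $\rnproj$, $S\DYderives t$ for $\rneq$). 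The remaining case is $\pi$ ending in $\rnexintro^{\ddag}$: then $\alpha=\exists x.\beta$, the premises are $(T;E)\vdash\replsubtermat{\beta}{P}{r}$ and $T\DYderives r$, with $P=\posof{x}{\beta}\subseteq\abstractable(T\cup\{x\},\beta)$. Since $a\in\publics(\exists x.\beta)$, $a$ contains no quantified variable and in particular $x\notin\varsof(a)$, so $a$ occurs in $\replsubtermat{\beta}{P}{r}$ at the same positions (none of which lie inside an occurrence of $x$ being replaced, as those get $r$ and $x\notin\varsof(a)$), and one checks $a\in\publics(\replsubtermat{\beta}{P}{r})$ — here the point is that replacing $x$ by the quantifier-free $r$ only enlarges the quantifier-free portion, so a maximal quantifier-free subterm of $\exists x.\beta$ remains maximal quantifier-free in $\replsubtermat{\beta}{P}{r}$ (or is absorbed into a larger one built using $r$, in which case $a\in\st(r)$ and $T\DYderives r\DYderives a$). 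By the induction hypothesis $T\DYderives a$.

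\textbf{Main obstacle.} I expect the delicate point to be the bookkeeping around $\publics$ under the position-substitution $\replsubtermat{\cdot}{P}{r}$ in the $\rnexintro$ case: one must verify that every element of $\publics(\exists x.\beta)$ is ``accounted for'' in $\publics(\replsubtermat{\beta}{P}{r})$ or inside $\st(r)$, which requires being careful about maximality of quantifier-free subterms when $x$ is replaced by a possibly non-atomic derivable term $r$. The other statements reduce to routine structural induction once the base case (reading off $\publics$ and eigenvariables from purity, and invoking sanitization of the originating $(S;A)$) is set up correctly.
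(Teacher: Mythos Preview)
Your overall approach—structural induction on a proof with no $\rnexelim$—matches the paper's, and your treatment of the $\eqderives$ rules is essentially the same. The order in which you handle the two claims is reversed from the paper (the paper proves the $\publics$ statement by induction and then derives $T\DYderives t,u$ from it), but this is cosmetic.

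There is, however, a real gap in your $\rnexintro$ case, exactly at the point you flagged as delicate. Your claim is that if $a\in\publics(\exists x.\beta)$ fails to be in $\publics(\replsubtermat{\beta}{P}{r})$, then $a$ is ``absorbed into a larger [public term] built using $r$, in which case $a\in\st(r)$''. This is false: $a$ sits \emph{next to} the replaced occurrences of $x$, not inside $r$. Concretely, take $\beta=\equals{(a,x)}{c}$ with $c\in\qvar$; then $\publics(\exists x.\beta)=\{a\}$, but after replacing $x$ by $r$ we get $\publics(\gamma)=\{(a,r)\}$, and $a$ need not be a subterm of $r$ at all. Even if it were, $T\DYderives r$ does not give $T\DYderives a$ in general (e.g.\ $r=\{a\}_k$ with $\inv{k}$ unavailable).

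The missing idea is to use the abstractability side condition $\ddag$. If $a=\subtermat{\beta}{q}$ is no longer maximal in $\gamma$, then the sibling subtree at $q'$ must have lost its only quantified variable, namely $x$; so some $p\in P=\posof{x}{\beta}$ lies below $q'$. Since $P\subseteq\abstractable(T\cup\{x\},\beta)$ and abstractable positions are prefix-closed and sibling-closed, $q'$ and hence $q$ are abstractable, which gives $T\cup\{x\}\DYderives a$; and since $x\notin\varsof(a)$, in fact $T\DYderives a$. That is the argument the paper uses.
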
 

Another property we desire of $(T;E)$ is \emph{consistency} -- one should not be able to prove absurdities like $\equals{(m,n)}{\{p\}_{k}}$ or $\equals{m}{(m,n)}$. This idea is formalised below. 
\begin{definition}\label{def:consistent}
    $(T;E)$ is \defemph{consistent} if there is a ground substitution $\lambda$ s.t.\ $\lambda(t) = \lambda(u)$ for each $\equals{t}{u} \in E$.
\end{definition} 

Hereafter, we will only consider pure and consistent $(T; E)$. 

A proof $\pi$ of $(T;E) \vdash \exists{}x_{1}\ldots{}x_{k}.(\equals{t}{u})$ can be separated into a proof of $\equals{t}{u}$ with each $x_{i}$ instantiated by some witness, followed by $k$ applications of $\rnexintro$. This leads to the following theorem (whose proof is in Appendix~\ref{app:derivability}).
\begin{restatable}{theorem}{eqstogamma}\label{thm:eqs-to-gamma}
    Let $\alpha = \exists{}x_{1}\ldots{}x_{k}.(\equals{t}{u})$, and $(S;A)$ be s.t. $\boundvars(\alpha) \cap \varsof(S;A) = \emptyset$, with $(T;E) = \dc(S;A)$. $(S;A) \assderives \alpha$ iff there is a substitution $\mu$ with $\dom(\mu) = \boundvars(\alpha)$ s.t.: 
    \begin{enumerate}[label={[\arabic*]}]
        \item\label{item:dycond} $\forall{}x \in \dom(\mu): T \DYderives \mu(x)$.
        \item\label{item:abscond} $\forall{}x \in \dom(\mu), r \in \{t,u\}: \posof{x}{r} \subseteq \abstractable(T \cup \dom(\mu), r)$.
        \item\label{item:eqcond} $(T;E) \eqderives \equals{\mu(t)}{\mu(u)}$.  
    \end{enumerate} 
\end{restatable}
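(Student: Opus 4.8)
The plan is to strip the $\rnexelim$ rule using Lemma~\ref{lem:breakdownass} and then argue by induction on $k$, working entirely inside the pure pair $(T;E)=\dc(S;A)$. By Lemma~\ref{lem:breakdownass}, $(S;A)\assderives\alpha$ iff $(T;E)\assderives\alpha$ by an $\rnexelim$-free proof, and since $E$ consists only of equalities we have $\dc(T;E)=(T;E)$, so for any conclusion from $(T;E)$ the qualifier ``$\rnexelim$-free'' can be freely inserted or removed. It thus suffices to prove, by induction on $k$, that for pure $(T;E)$ with $\{x_1,\ldots,x_k\}\cap\varsof(T;E)=\emptyset$, $(T;E)\assderives\exists x_1\ldots x_k.(\equals{t}{u})$ without $\rnexelim$ iff there is $\mu$ with $\dom(\mu)=\{x_1,\ldots,x_k\}$ satisfying~\ref{item:dycond}--\ref{item:eqcond}. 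The structural fact driving this is that, since $(T;E)$ is pure, in an $\rnexelim$-free proof an existential assertion can only be produced by $\rnexintro$ ($\rnax$ yields only equalities), and an existential assertion can only be consumed as the existential premise of another $\rnexintro$; chasing these dependencies down to the root shows that such a proof of $\exists x_1\ldots x_k.(\equals{t}{u})$ is a chain of exactly $k$ $\rnexintro$ steps (peeling $x_1,\ldots,x_k$ with witnesses $w_1,\ldots,w_k$) sitting on top of an $\eqderives$-proof of $\equals{t'}{u'}$, where $t',u'$ arise from $t,u$ by replacing each $x_i$ with $w_i$.

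For $k=0$ the structural fact says an $\rnexelim$-free proof of $\equals{t}{u}$ from pure $(T;E)$ uses no $\rnexintro$ at all, hence is exactly an $\eqderives$-proof, which is condition~\ref{item:eqcond} ($\mu$ empty, so~\ref{item:dycond} and~\ref{item:abscond} are vacuous). For the inductive step I would write $\alpha=\exists x_1.\alpha'$ with $\alpha'=\exists x_2\ldots x_k.(\equals{t}{u})$, peel one $\rnexintro$ step with witness $w_1$, and pass to $\alpha'':=\exists x_2\ldots x_k.(\equals{t_1}{u_1})$ where $t_1=t[x_1:=w_1]$ and $u_1=u[x_1:=w_1]$; the freshness hypothesis is used here to guarantee that $w_1$, whose variables lie in $\varsof(T)$, clashes with none of $x_1,\ldots,x_k$, so substitutions compose cleanly and $\{x_2,\ldots,x_k\}\cap\varsof(T;E)=\emptyset$ persists, letting the induction hypothesis apply to $(T;E)$ and $\alpha''$. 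The premise of this $\rnexintro$ step is exactly $\replsubtermat{\alpha'}{\posof{x_1}{\alpha'}}{w_1}=\alpha''$; its side conditions are $T\DYderives w_1$ and, after unfolding Definition~\ref{def:assabs}, $\posof{x_1}{r}\subseteq\abstractable(T\cup\{x_1,\ldots,x_k\},r)$ for $r\in\{t,u\}$; and with $\mu':=\mu\restriction\{x_2,\ldots,x_k\}$ the identities $\mu(t)=\mu'(t_1)$, $\mu(u)=\mu'(u_1)$ make conditions~\ref{item:dycond} and~\ref{item:eqcond} pass between the two levels verbatim, while condition~\ref{item:abscond} for $x_1$ is literally the abstractability side condition just displayed. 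In the $(\Leftarrow)$ direction I feed $\mu'$ to the induction hypothesis to obtain an $\rnexelim$-free proof of $(T;E)\vdash\alpha''$ and close with this $\rnexintro$ step; in the $(\Rightarrow)$ direction I run the same construction backwards, extracting $\mu'$ from the premise subproof via the induction hypothesis and taking $\mu:=\mu'\cup\{x_1\mapsto w_1\}$.

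The one genuinely delicate point, and the step I expect to be the main obstacle, is transferring condition~\ref{item:abscond} for $x_2,\ldots,x_k$ between $t$ and $t_1$ (and between $u$ and $u_1$). Here I would first observe that, since $x_1$ occupies only leaf positions in $t$ and $w_1$ contains none of $x_2,\ldots,x_k$, replacing $x_1$ by $w_1$ does not disturb the positions of $x_j$, so $\posof{x_j}{t_1}=\posof{x_j}{t}$ for $j\ge 2$. Then I would invoke Lemma~\ref{lem:absderiv} with $S=T\cup\{x_2,\ldots,x_k\}$, $x=x_1$ and $r=w_1$ -- its hypotheses hold because $T\DYderives w_1$, because $x_1\notin\varsof(T)\cup\{x_2,\ldots,x_k\}$, and because $\posof{x_1}{t}\subseteq\abstractable(T\cup\{x_1,\ldots,x_k\},t)$ is exactly condition~\ref{item:abscond} for $x_1$ -- to get $\abstractable(T\cup\{x_2,\ldots,x_k\},t_1)\cap\positions{t}=\abstractable(T\cup\{x_1,\ldots,x_k\},t)$. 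Intersecting the relevant inclusions with $\positions{t}$ and using this identity makes condition~\ref{item:abscond} for $t$ with respect to $T\cup\{x_1,\ldots,x_k\}$ and for $t_1$ with respect to $T\cup\{x_2,\ldots,x_k\}$ interchangeable, which is precisely what both directions of the induction require; everything else is routine substitution bookkeeping.
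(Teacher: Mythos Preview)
Your proposal is correct and follows essentially the same approach as the paper: both reduce to the pure pair $(T;E)$ via Lemma~\ref{lem:breakdownass}, observe that an $\rnexelim$-free proof from $(T;E)$ of $\exists x_{1}\ldots x_{k}.(\equals{t}{u})$ must end in a chain of exactly $k$ applications of $\rnexintro$, read off the witnesses to form $\mu$, and use Lemma~\ref{lem:absderiv} to translate the abstractability side conditions of the individual $\rnexintro$ steps into condition~\ref{item:abscond} (and back). The only difference is presentational: you package the argument as an induction on $k$, peeling the outermost quantifier $x_{1}$ first and recursing on $\alpha''$, whereas the paper unrolls the whole chain at once and argues the abstractability transfer for a generic $x_{i}$ (illustrated with $k=3$, using $\mu' = \mu\restr\{x_{1}\}$). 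Your inductive formulation is arguably cleaner bookkeeping, but the key lemma, the structural observation about $\rnexintro$ chains, and the way Lemma~\ref{lem:absderiv} is instantiated are the same in both.
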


To decide whether $(S;A) \assderives \alpha$, one can guess a $\mu$ and check if the above conditions are satisfied. This amounts to guessing $\mu(x)$ for each $x \in \boundvars(\alpha)$, so it would be good to have a bound on  the size of $\mu(x)$. We choose the measure for size of a term to be the number of distinct subterms. 
\begin{definition} 
    A substitution $\lambda$ is said to be $M$-bounded if $\dagsize{\lambda(x)} \le M$, for all $x \in \dom(\lambda)$. 
\end{definition}

The following theorem, proved in Appendix~\ref{app:decidable-assderives}, is key to solving the derivability problem. 
\begin{theorem}\label{thm:deriv-witness-bounds}
    If there is a $\mu$ satisfying the conditions in Theorem~\ref{thm:eqs-to-gamma}, there is an $M$-bounded $\nu$ satisfying the same conditions, where $M = |\st(S) \cup \st(A\cup\{\alpha\})|$. 
\end{theorem}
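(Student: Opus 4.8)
The plan is to start with a $\mu$ satisfying conditions \ref{item:dycond}--\ref{item:eqcond} of Theorem~\ref{thm:eqs-to-gamma} and shrink the witnesses $\mu(x)$ one at a time, in the style of the ``zapping'' argument sketched in the introduction, until they are all bounded by $M = |\st(S) \cup \st(A \cup \{\alpha\})|$. The key structural observation is that every quantified variable $x \in \dom(\mu) = \boundvars(\alpha)$ is either \emph{good} (its image $\mu(x)$ unifies with a non-variable subterm mentioned in $S, A$ or $\alpha$, in the sense made precise by the abstractability/equality bookkeeping) or \emph{bad} (it does not). For a bad variable, I want to replace $\mu(x)$ wholesale by a fixed atomic name $\fixedname$ not occurring anywhere, obtaining $\mu'$, and argue that $\mu'$ still satisfies all three conditions. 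For \ref{item:dycond} this is immediate since $T \DYderives \fixedname$ can be arranged (or $\fixedname$ is taken from $T$). For \ref{item:abscond}, zapping a bad variable only \emph{removes} structure, and since abstractability of a position in $r$ depends on the derivability of the proper-prefix siblings — which for a bad variable's occurrence were being supplied by $\mu$-images that we are now making trivial — one checks that the abstractable set does not shrink below $\posof{x}{r}$; here Lemma~\ref{lem:absderiv} is the workhorse. The delicate part is \ref{item:eqcond}: we need $(T;E) \eqderives \equals{\mu'(t)}{\mu'(u)}$, i.e. the $\eqderives$-proof must survive the zap.

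To handle \ref{item:eqcond} I would pass to a \emph{normal} $\eqderives$-proof $\pi$ of $\equals{\mu(t)}{\mu(u)}$ and analyze how $\mu(x)$ (for bad $x$) can appear in $\pi$. By Observation~\ref{obs:dc-pure} and the subterm property for $\eqderives$-proofs (the analogue of the term-subterm property, which should be established alongside), every term occurring in a normal $\eqderives$-proof from $(T;E)$ lies in $\st(T \cup E \cup \{\mu(t),\mu(u)\})$. The point of ``bad'' is precisely that $\mu(x)$ does not unify with any non-variable subterm of $T \cup E$, so wherever $\mu(x)$ shows up in $\pi$ it must occur at a position that is the image of a variable position in the template — it is never ``looked into'' by a $\rnproj$ or $\rncons$ step in an essential way. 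Replacing $\mu(x)$ by $\fixedname$ uniformly throughout $\pi$ then yields a valid $\eqderives$-proof of $\equals{\mu'(t)}{\mu'(u)}$: axiom instances remain axiom instances (their conclusions live in $E$, which is untouched, or are reflexivity instances $\equals{s}{s}$ whose side condition $T \DYderives s$ is preserved), $\rnsymm$, $\rntrans$, $\rncons$ are stable under a uniform substitution, and for $\rnproj$ the side condition $\P$ (derivability of the four immediate subterms) is preserved because the subterms being projected are not the zapped ones. This is essentially the equality-preservation lemma that the introduction flags as the new difficulty relative to~\cite{RT03}, and it is the step I expect to be the main obstacle — in particular, making precise that bad variables' images are never the major premise of a $\rnproj$ in a normalized proof, and that the projection side conditions survive.

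Once a single zap is justified, I would iterate: after zapping, the good/bad classification may change (a previously good variable can become bad, as the examples in the introduction warn), but the total size $\sum_{x} \dagsize{\mu(x)}$ strictly decreases with each zap of a bad variable, so the process terminates in a $\mu^{*}$ with no bad variables. For the terminal $\mu^{*}$, every $x$ is good, so $\mu^{*}(x)$ unifies with a non-variable subterm $r_x \in \st(S) \cup \st(A \cup \{\alpha\})$; standard unification/DAG reasoning (as in~\cite{RT03}) gives $\dagsize{\mu^{*}(x)} \le |\st(S) \cup \st(A \cup \{\alpha\})| = M$, since in a most-general setting the witness is built only from subterms of the template and the other witnesses, and a fixpoint argument on the DAG of all $\mu^{*}(x)$ together bounds each by the number of template subterms. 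Setting $\nu = \mu^{*}$ completes the proof. The only genuinely new content beyond~\cite{RT03} is the equality-preservation argument of the second paragraph together with the simultaneous treatment of the witness substitution $\mu$ and the term set $T$ inside the abstractability conditions; everything else is bookkeeping and a termination measure.
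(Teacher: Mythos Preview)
Your overall shape is right, but the equality-preservation step has a real gap. You argue that because a bad $x$ has $\mu(x) \notin \st(T) \cup \st(E)$, the term $\mu(x)$ can only show up in the normal proof at ``variable positions of the template''. That inference fails: $\mu(x)$ can perfectly well occur as a \emph{proper} subterm of $\mu(y)$ for some other $y \in \dom(\mu)$, and then it sits inside $\mu(t)$ (hence inside the proof) at a position that is strictly below a $y$-position, not at any variable position. Uniformly replacing $\mu(x)$ by $\fixedname$ throughout $\pi$ then still yields a valid $\eqderives$-derivation, but its conclusion is $\equals{\mu(t)[\mu(x)\!\mapsto\!\fixedname]}{\mu(u)[\mu(x)\!\mapsto\!\fixedname]}$, which is \emph{not} $\equals{\mu'(t)}{\mu'(u)}$ because $\mu'(y)=\mu(y)$ is left untouched. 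So the single-variable zap does not obviously preserve condition~\ref{item:eqcond}, and the iteration never gets off the ground. (A small side remark: condition~\ref{item:abscond} is phrased purely in terms of $T$, $\dom(\mu)$ and the fixed terms $t^{0},u^{0}$; it does not mention $\mu$ at all, so it is trivially preserved and Lemma~\ref{lem:absderiv} plays no role there.)

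The paper's proof avoids this nesting problem by taking a different route: there is no iteration. One defines $t \approx u$ to mean $(T;E) \eqderives \equals{\mu(t)}{\mu(u)}$ and calls $x$ \emph{minimal} when there is no $s \in \constst\setminus\dom(\mu)$ with $x \approx s$ --- this is provable equality, not your syntactic match. A well-founded order $\prec$ on $\dom(\mu)$ (roughly, $x \prec y$ if $x$ occurs in some $s$ with $s \approx y$) lets one translate every $t \in \constst$ to an $\approx$-equivalent $t^{*}$ whose only $\dom(\mu)$-variables are minimal; acyclicity of $\prec$ uses consistency of $(T;E)$. One then sets $\nu(x) = (x^{*})[\text{minimal vars}\mapsto\fixedname]$ in one shot. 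The simulation lemma (if $(T;E)\eqderives\equals{\mu(r)}{\mu(s)}$ for $r,s$ of this translated form, then the same holds with $\nu$) goes by induction on a normal proof; the crucial observation is that whenever $\mu(r) \in \st(T)\cup\st(E)$, the translated $r$ must already be free of $\dom(\mu)$-variables, so axioms and $\rnproj$-conclusions are literally unchanged. The $\rncons$ case where $r$ is a minimal variable is exactly where $\approx$ (rather than syntactic equality) is needed: since $r \approx s$, the term $s$ must also be a minimal variable, and both collapse to $\fixedname$. The $M$-bound then follows as in Theorem~\ref{thm:vlambda-small}. To repair your argument you would need either this $\approx$-based machinery or a ``deep'' zap that simultaneously rewrites inside every $\mu(y)$ --- which is essentially what the $t^{*}$ construction accomplishes.
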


After obtaining a small substitution $\nu$, we still have to check if $(T;E) \eqderives \equals{\nu(t)}{\nu(u)}$. We solve this using normalization and subterm property for $\eqderives$ (proved in Appendix~\ref{app:normalization}).  

\begin{restatable}{definition}{eqnormal}\label{def:eqnormal}
	An $\eqderives$ proof $\pi$ of $T; E \vdash \equals{t}{u}$ is \defemph{normal} if:
	\begin{itemize}[leftmargin=*]
		\item All $\DYderives$ subproofs are normal
		\item The premise of $\rnsymm$ can only be the conclusion of $\rnax$
		\item The premise of $\rneq$ can only be the conclusion of a destructor
		\item No premise of a $\rntrans$ is of the form $\equals{a}{a}$ 
		\item The conclusion of a $\rntrans$ is not a premise of $\rntrans$
		\item Adjacent premises of a $\rntrans$ are not conclusions of $\rncons$
		\item No subproof ending in $\rnproj$ contains $\rncons$. 
	\end{itemize}
\end{restatable}

\begin{restatable}{theorem}{normeq}\label{thm:normeq}
    If $(T;E) \eqderives \equals{t}{u}$, then $(T;E) \eqderives \equals{t}{u}$ via a normal proof.
\end{restatable}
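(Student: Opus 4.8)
The plan is to set up a terminating rewriting procedure on $\eqderives$ proofs, where each rewrite step removes a local violation of one of the clauses in Definition~\ref{def:eqnormal}, and to argue that no rewrite can create a fresh violation that leads to non-termination. First I would fix a well-founded measure on proofs — the natural candidate being a lexicographically-ordered tuple whose first component is the total number of $\rnproj$ applications that dominate an $\rncons$ (to handle the last clause), and whose remaining components count, in decreasing priority, the number of $\rntrans$--over--$\rntrans$ patterns, the number of $\rntrans$ with a trivial premise $\equals{a}{a}$, the number of adjacent $\rncons$-conclusions feeding a $\rntrans$, the number of $\rnsymm$ applications whose premise is not an axiom, and the number of $\rneq$ applications whose premise is not a destructor — together with the sizes of the $\DYderives$-subproofs (made normal by the already-established normalization for $\DYderives$). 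Since all these quantities are bounded below, any rewrite strategy that decreases this tuple must terminate at a normal proof.

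The key steps are the individual proof transformations, each a standard permutation/simplification argument:
\begin{itemize}[leftmargin=*]
\item \emph{$\rnsymm$ permutations.} If the premise of $\rnsymm$ is the conclusion of $\rneq$, replace $\rnsymm(\rneq(\equals{t}{t}))$ by $\rneq(\equals{t}{t})$ directly. If it is a conclusion of $\rnsymm$, cancel the double $\rnsymm$. If it is a conclusion of $\rncons$, push $\rnsymm$ to the two premises (possible since the side condition $\P$ is symmetric in $\{t_0,t_1,u_0,u_1\}$). If it is a conclusion of $\rntrans$, reverse the chain and apply $\rnsymm$ to each premise. If it is a conclusion of $\rnproj$, push $\rnsymm$ above $\rnproj$ (again the side condition is symmetric). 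In every case the number of ``bad'' $\rnsymm$ strictly drops and nothing else of higher priority increases.
\item \emph{$\rneq$ and trivial $\rntrans$-premises.} A premise $\equals{a}{a}$ of a $\rntrans$ can simply be deleted from the chain; if the chain collapses to a single premise, remove the $\rntrans$. Likewise a $\rntrans$ whose conclusion feeds another $\rntrans$ is merged into one longer chain. These reduce the corresponding counters.
\item \emph{Adjacent $\rncons$ under $\rntrans$.} If two adjacent premises $\equals{\func(t_0,t_1)}{\func(u_0,u_1)}$ and $\equals{\func(u_0,u_1)}{\func(v_0,v_1)}$ are both conclusions of $\rncons$, replace them by a single $\rncons$ built from $\rntrans$ of the component proofs (using that $S\DYderives$ all immediate subterms, which is exactly the content of $\P$, so the merged $\rncons$ still satisfies its side condition).
\item \emph{$\rnproj$ over $\rncons$.} The crucial commuting conversion: if $\rnproj_i$ is applied to a $\rncons$ conclusion $\equals{\func(t_0,t_1)}{\func(u_0,u_1)}$, the whole thing reduces to the $i$-th premise $\equals{t_i}{u_i}$ of that $\rncons$. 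To eliminate $\rnproj$-over-$\rncons$ \emph{anywhere} inside a subproof ending in $\rnproj$, I would first normalize everything above so that the premise of the outer $\rnproj$ is built only from $\rnax$, $\rneq$, $\rnsymm$ (on axioms), $\rntrans$ and $\rnproj$ — i.e., no $\rncons$ — by the transformations above; then observe that such a proof contains no $\rncons$ at all below the outer $\rnproj$, discharging the last clause. This is where the lexicographic priority matters: $\rnproj$-over-$\rncons$ elimination is given top priority so that the cleanup permutations, which may locally rearrange $\rntrans$ and $\rnsymm$, never reintroduce a $\rnproj$-over-$\rncons$.
\end{itemize}

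The main obstacle I expect is confluence/termination bookkeeping rather than any single transformation: several of the rewrites (pushing $\rnsymm$ through $\rntrans$, merging $\rntrans$ chains, pushing $\rnproj$ through $\rnsymm$) can each locally recreate a pattern that another clause forbids, so a naive ``fix one violation at a time'' loop need not terminate. The fix is precisely the careful choice of the lexicographic measure above, ordered so that the transformations used to expose a $\rnproj$-over-$\rncons$ redex strictly decrease a higher-priority component than any $\rnsymm$/$\rntrans$ counter they might increase; one must check, transformation by transformation, that this ordering is respected. I would also need the side remark that the side conditions $\P$ (all immediate subterms $\DYderives$-derivable) are preserved under all these manipulations — which holds because the relevant subterms only ever get \emph{moved}, never enlarged, and $\DYderives$-derivability of a term is inherited by its occurrences in the rewritten proof. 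Once termination is secured, the terminal proof satisfies every clause of Definition~\ref{def:eqnormal} by construction, proving the theorem.
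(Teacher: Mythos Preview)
Your overall architecture---local rewrite rules plus a well-founded measure---matches the paper's, and most of your transformations are the right ones (they are essentially the paper's rules R1--R8). However, there is a genuine gap in the $\rnproj$/$\rncons$ interaction.

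You only describe the \emph{direct} redex $\rnproj_i(\rncons(\pi_1,\pi_2)) \rightsquigarrow \pi_i$, and then assert that ``normalizing everything above'' with the other rules will leave the premise of $\rnproj$ free of $\rncons$. That is false: your $\rnsymm$-pushing, $\rntrans$-flattening, and adjacent-$\rncons$ merging rules merely \emph{rearrange} occurrences of $\rncons$; none of them eliminates a $\rncons$. After applying them you can still be left with $\rnproj\bigl(\rntrans(\pi_1,\ldots,\rncons(\cdot,\cdot),\ldots,\pi_{r-1})\bigr)$, where the $\rncons$ sits strictly below a $\rntrans$, so your direct $\rnproj$-over-$\rncons$ rule does not fire. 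The paper handles exactly this configuration with an additional rule (its R9): split the chain at the $\rncons$, obtaining $\rntrans\bigl(\rnproj_j(\rntrans(\pi_1,\ldots,\pi_{i-1})),\,\pi^{j}_{i},\,\rnproj_j(\rntrans(\pi_{i+1},\ldots,\pi_{r-1}))\bigr)$. This is the step you are missing, and it is also where the standing hypotheses on $(T;E)$ are actually used: consistency is needed to know that the endpoints $t_1,t_i,t_{i+1},t_r$ all share the same top constructor (so that the two new $\rnproj$ applications are well-formed), and purity (Observation~\ref{obs:dc-pure}) is needed to discharge the side condition $\P$ for the new $\rnproj$'s on the intermediate terms $t_i,t_{i+1}$. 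Your remark that ``subterms only ever get moved'' does not cover this, since R9 creates $\rnproj$ applications on brand-new pairs of terms.

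A secondary point: once you add R9, your proposed measure becomes problematic, because R9 turns one $\rnproj$ into two, so ``number of $\rnproj$ applications dominating a $\rncons$'' can increase. The paper avoids this with a much simpler measure: after first pushing all $\rnsymm$'s up to the axioms, it uses the lexicographic triple $(\text{total size of }\DYderives\text{ subproofs},\ \text{number of }\rncons,\ \text{proof size})$. The key observation is that each of the three $\rncons$-touching rules (adjacent-$\rncons$ merge, direct $\rnproj$-over-$\rncons$, and the R9 split) removes exactly one $\rncons$, so the middle component strictly drops; the remaining cleanup rules strictly drop the proof size without increasing the $\rncons$ count.
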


\begin{restatable}[Subterm property]{theorem}{subtermeq}\label{thm:subtermeq}
    For any normal proof $\pi$ of $T; E \eqderives \equals{t}{u}$, then $\termsof(\pi) \subseteq \st(T \cup \{t,u\}) \cup \st(E)$. If $\rncons$ does not occur in $\pi$, then $\termsof(\pi) \subseteq \st(T) \cup \st(E)$.
\end{restatable}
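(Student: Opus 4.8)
The plan is to prove the statement by induction on the structure of a normal proof $\pi$ of $T; E \eqderives \equals{t}{u}$, using the restrictions imposed by Definition~\ref{def:eqnormal} to control what can appear above each rule. The key invariant I would carry through is: every term occurring in a normal $\eqderives$ proof with conclusion $\equals{t}{u}$ lies in $\st(T \cup \{t,u\}) \cup \st(E)$, and moreover if the proof contains no $\rncons$ then it lies in the smaller set $\st(T) \cup \st(E)$. Since all embedded $\DYderives$ subproofs are normal by the first clause of normality, the ordinary subterm property for $\DYderives$ (quoted in Section~\ref{sec:dolevyao}) already tells us that any such subproof of $T \DYderives v$ contributes only terms in $\st(T \cup \{v\})$, and in fact only $\st(T)$ when it ends in a destructor; I would fold this into the induction as a base fact.

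I would then go rule by rule on the last rule of $\pi$. For $\rnax$ the conclusion $\equals{t}{u}$ is in $E$, so $\termsof = \st(\equals{t}{u}) \subseteq \st(E)$, and there is no $\rncons$. For $\rneq$ with conclusion $\equals{t}{t}$: by normality its premise is the conclusion of a destructor $\DYderives$-subproof, so by the strong form of the $\DYderives$ subterm property all its terms are in $\st(T)$; in particular $t \in \st(T)$, and there is no $\rncons$. For $\rnsymm$: by normality the premise is a conclusion of $\rnax$, hence $\equals{u}{t} \in E$ and everything is in $\st(E)$, again no $\rncons$. For $\rncons$ building $\equals{\func(t_0,t_1)}{\func(u_0,u_1)}$ from $\equals{t_i}{u_i}$: apply the induction hypothesis to the two premise subproofs, note $\st(t_i), \st(u_i) \subseteq \st(\func(t_0,t_1)) \cup \st(\func(u_0,u_1))$, and the new terms $\func(t_0,t_1), \func(u_0,u_1)$ are in $\st$ of the conclusion; here $\rncons$ does occur, so only the weaker inclusion is claimed. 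For $\rnproj_i$ deriving $\equals{t_i}{u_i}$ from $\equals{\func(t_0,t_1)}{\func(u_0,u_1)}$: by the last clause of normality the premise subproof contains no $\rncons$, so by the (no-$\rncons$) induction hypothesis all its terms lie in $\st(T) \cup \st(E)$; also the side condition $\P$ gives $T \DYderives \{t_0,t_1,u_0,u_1\}$ with normal $\DYderives$-subproofs, contributing only $\st(T)$-terms — and this whole subproof still contains no $\rncons$, so the conclusion inherits the strong inclusion. For $\rntrans$ deriving $\equals{t_1}{t_{k+1}}$ from $\equals{t_1}{t_2},\dots,\equals{t_k}{t_{k+1}}$: by normality no premise is an $\rntrans$ conclusion, no premise is trivial, and adjacent premises are not both $\rncons$ conclusions; applying the IH to each premise subproof gives terms in $\st(T \cup \{t_j,t_{j+1}\}) \cup \st(E)$, and the only subtlety is that the intermediate $t_j$ for $1 < j < k+1$ need not be subterms of the conclusion. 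This is the point where I expect the main work.

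The main obstacle, then, is showing that the intermediate waypoints $t_2,\dots,t_k$ of a $\rntrans$ chain are themselves in $\st(T \cup \{t_1,t_{k+1}\}) \cup \st(E)$ — or, when $\rncons$ is absent, in $\st(T) \cup \st(E)$. The idea is that each premise $\equals{t_j}{t_{j+1}}$ of the $\rntrans$ is, by normality, the conclusion of $\rnax$, $\rnsymm$, $\rneq$, $\rnproj$, or $\rncons$ (not $\rntrans$). In the first four cases both sides of that premise are already forced into $\st(E) \cup \st(T)$ by the case analysis above, so the corresponding endpoints $t_j, t_{j+1}$ are fine. The only way an intermediate $t_j$ can fail to be a subterm of $T \cup E$ is if it is the top of a $\func$ produced by an $\rncons$ premise; but then, since adjacent premises cannot both be $\rncons$ conclusions, the neighbouring premise pins $t_j$ down as a side of an $\rnax/\rnsymm/\rneq/\rnproj$ conclusion, hence in $\st(E) \cup \st(T)$ — except possibly when $t_j$ or $t_{j+1}$ is an actual endpoint $t_1$ or $t_{k+1}$, which is allowed in the conclusion anyway. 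This bootstrapping over consecutive premises is the crux; once it is in place, the $\rncons$-free refinement follows because if the full proof has no $\rncons$, none of the $\rntrans$ premises is an $\rncons$ conclusion, so every waypoint is a side of an $\rnax/\rnsymm/\rneq/\rnproj$ conclusion and thus in $\st(T) \cup \st(E)$. I would isolate the consecutive-premise argument as a short sublemma about the shape of normal $\rntrans$ chains, then assemble the global statement by a straightforward structural induction using it.
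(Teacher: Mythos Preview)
Your proposal is correct and follows essentially the same approach as the paper's proof: a structural induction on the normal proof with a rule-by-rule case analysis, where the $\rntrans$ case is handled by exactly the bootstrapping argument you describe (each intermediate waypoint is shared with a non-$\rncons$ neighbour and hence lies in $\st(T)\cup\st(E)$). The paper likewise isolates the observation that a non-$\rncons$ premise of a normal $\rntrans$ contains no $\rncons$ at all as a small auxiliary fact, which is precisely the sublemma you propose to extract.
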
 


To check whether $(T;E) \eqderives \equals{t}{u}$ we compute the set $\conseq{T}{E}{t}{u}$ given by
\[\bigl\{\equals{r}{s} \mid r,s \in \st(T\cup\{t,u\}) \cup \st(E) \text{ and } (T; E) \eqderives \equals{r}{s}\bigr\}\]
and then check whether $\equals{t}{u} \in \conseq{T}{E}{t}{u}$. The set $\conseq{T}{E}{t}{u}$ can be computed using a saturation-based procedure that runs in polynomial time. The details are provided in Appendix~\ref{app:conseq}.
%

\section{Protocols and the insecurity problem}\label{sec:protruns}
A \emph{protocol} is given by a finite set of roles corresponding to the actions of \emph{honest agents}. Each role consists of a finite sequence of alternating receives and sends (each send triggered by a receive). Every sent message is added to the intruder's knowledge base. Each received message is assumed to have come from the intruder, so it must be derivable by the intruder. Only assertions are communicated -- a term $t$ can be modelled via the assertion $\equals{t}{t}$, whose only public term is $t$.
\begin{definition}
    A \defemph{protocol} $\prot$ is a finite set of \defemph{roles}, each of the form $({\beta_{1}},{\alpha_{1}})\ldots({\beta_{m}},{\alpha_{m}})$, where the $\alpha_{i}$s and $\beta_{i}$s are assertions. An $x \in \freevars(\prot)$ is said to be an \emph{agent variable} if it occurs first in an $\alpha_{i}$; otherwise it is an \emph{intruder variable}. 
\end{definition}

\begin{definition} A \defemph{session} of a protocol $\prot$ is a sequence of the form $u:\recsend{\beta_{1}}{\alpha_{1}}\ \cdots\ u:\recsend{\beta_{\ell}}{\alpha_{\ell}}$ where $u \in \ag$ and $(\beta_{1}, \alpha_{1}) \cdots (\beta_{\ell}, \alpha_{\ell})$ is a prefix of a role of $\prot$ with all the agent variables instantiated by values from $\names$. A set of sessions $S$ of $\prot$ is \defemph{coherent} if $\freevars(\xi) \cap \freevars(\xi') = \emptyset$ for distinct $\xi, \xi' \in S$. One can always achieve coherence by renaming intruder variables as necessary.
\end{definition}

\begin{obs}\label{obs:earlierbetaj}
Since agent variables are mapped to names, the only free variables in sessions are intruder variables. Thus, for $i \leq \ell$ and any $x \in \freevars(\alpha_{i})$, there is a $j < i$ s.t.\ $x \in \freevars(\beta_{j})$.
\end{obs}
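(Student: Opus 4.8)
The plan is to prove the statement in two stages. First I would verify the preliminary sentence—that every variable still free in a session is an intruder variable—by unfolding the definitions of \emph{session} and of \emph{agent/intruder variable}. Then, for an intruder variable occurring in a send $\alpha_{i}$, I would locate the earlier receive $\beta_{j}$ that introduces it, using the defining property of intruder variables together with the strict receive/send alternation of a role. No heavier machinery from Section~\ref{sec:dolevyao} (abstractability, derivability, the proof systems) is needed; the whole argument is a direct unfolding of the three definitions.

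For the first stage, recall that a session is obtained from a prefix $(\beta_{1},\alpha_{1})\cdots(\beta_{\ell},\alpha_{\ell})$ of a role by instantiating every agent variable with a name from $\names$. By definition a variable is an agent variable precisely when its first occurrence—scanning the role in the order $\beta_{1},\alpha_{1},\beta_{2},\alpha_{2},\ldots$—lies in some send $\alpha_{i}$, and every such variable is replaced by a name, hence is no longer free. Since replacing agent variables by names neither creates nor moves any remaining variable, the first-occurrence classification is inherited unchanged from the role to the session. Therefore any $x\in\freevars(\xi)$ of a session $\xi$ is an intruder variable, i.e.\ its first occurrence lies in some receive $\beta_{j}$.

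For the second stage, fix $i\le\ell$ and $x\in\freevars(\alpha_{i})$. By the first stage $x$ is an intruder variable, so in the scanning order its first occurrence is in a receive, say $\beta_{j}$, whence $x\in\freevars(\beta_{j})$. To bound $j$, note that $x$ occurs again in the send $\alpha_{i}$, so its first occurrence strictly precedes that occurrence in the scanning order. As receives and sends strictly alternate and the send $\alpha_{i}$ follows the receive that first supplies $x$, the introducing receive lies at an earlier pair than $\alpha_{i}$, giving $j<i$; this is the $j$ we want, and the argument is uniform in $i$.

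The step that needs the most care is precisely this final index computation—reading off $j<i$ from ``$\beta_{j}$ carries the first occurrence of $x$ while $\alpha_{i}$ carries a later occurrence.'' The hard part is to keep the bookkeeping clean: I would fix once and for all the single linear scanning order $\beta_{1},\alpha_{1},\ldots,\beta_{\ell},\alpha_{\ell}$ on the role, define ``first occurrence'' relative to it, and establish as the key lemma that an intruder variable reappearing in a send must already have been supplied by a strictly earlier receive. Granting that lemma, the conclusion $j<i$ is immediate, and the rest is routine.
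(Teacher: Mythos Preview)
Your approach matches the paper's: the observation carries no separate proof there, and its one-sentence justification is exactly the definitional unfolding you spell out. However, there is a genuine slip in your final index computation. From ``$\beta_{j}$ strictly precedes $\alpha_{i}$ in the scanning order $\beta_{1},\alpha_{1},\ldots,\beta_{\ell},\alpha_{\ell}$'' you can only conclude $j\le i$, not $j<i$: within each pair, $\beta_{i}$ precedes $\alpha_{i}$ but carries the \emph{same} index. Concretely, if a role has $x$ occurring in both $\beta_{1}$ and $\alpha_{1}$, then $x$ is an intruder variable and $x\in\freevars(\alpha_{1})$, yet the only witnessing receive is $\beta_{1}$ with $j=1=i$. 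So the sentence ``the introducing receive lies at an earlier pair than $\alpha_{i}$, giving $j<i$''---precisely the step you yourself flag as the delicate one---does not go through.

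In fact the statement as printed shares this off-by-one: the bound that actually follows from the definitions is $j\le i$, and that is also what is used downstream (see the proof of Lemma~\ref{lem:sigmax-in-sigmaTp}, where the conclusion drawn from this observation is $k\le j$). Your argument is correct once ``$j<i$'' is replaced by ``$j\le i$'' throughout.
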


A run is an interleaving of sessions where each message sent by an agent should be constructible from their knowledge. This is formalised by the notion of \emph{knowledge state}. 
\begin{definition}
    A \defemph{knowledge state} is a pair $(X; \Phi)$ where $X$ is a finite set of terms and $\Phi$ is a finite set of assertions. A \defemph{knowledge function} $\knowfunc$ is such that $\dom(\knowfunc) = \ag$ and for each $a \in \ag$, $\knowfunc(a)$ is a knowledge state. 

    Given a knowledge state $(X; \Phi)$ and an assertion $\alpha$, we define $\knowupd((X;\Phi), \alpha) \coloneqq (X \cup \publics(\alpha), \Phi \cup \{\alpha\})$.
\end{definition}

\begin{definition}\label{def:protruns}
    A \defemph{run} of a protocol $\prot$ is a pair $(\xi, \intsub)$ where:
    \begin{itemize}[leftmargin=*]
		\item $\xi \coloneqq u_{1}:\recsend{\beta_{1}}{\alpha_{1}}, \ldots, u_{n}:\recsend{\beta_{n}}{\alpha_{n}}$ is an interleaving of a finite, coherent set of sessions of $\prot$.
        \item $\intsub$ is a ground substitution with $\dom(\intsub) = \freevars(\xi)$.
        \item There is a sequence $\knowfunc_{0} \ldots \knowfunc_{n}$ of knowledge functions s.t.:
        \begin{itemize}
            \item $\knowfunc_{0}(a) = (X_{a}; \emptyset)$, where $X_{a}$ is a finite set of initial terms known to $a$ ($a$'s secret key, public keys, public names etc). 
            \item For all $i < n$, 
            \[
                \knowfunc_{i+1}(a) = 
                \begin{cases}
                    \knowfunc_{i}(a) & \mbox{if $a \neq u_{i}, a \neq I$} \\ 
                    \knowupd(\knowfunc_{i}(a), \beta_{i}) & \mbox{if $a = u_{i}$} \\
                    \knowupd(\knowfunc_{i}(a), \alpha_{i}) & \mbox{if $a = I$}
                \end{cases}
            \]
            \item For $i \leq n$, $\knowfunc_{i}(u_{i}) \assderives \alpha_{i}$ and $\intsub(\knowfunc_{i-1}(I)) \assderives \intsub(\beta_{i})$.
        \end{itemize}
    \end{itemize}
\end{definition}

The $\beta_{i}$s are implicit sends by $I$, and added to $u_{i}$'s state, while the $\alpha_{i}$s are sent by $u_{i}$, and added to $I$'s state. Every concrete message $\intsub(\beta_{i})$ should be derivable from the concrete knowledge state of $I$, but it suffices for each $\alpha_{i}$ to be derived from $u_{i}$'s state even without $\intsub$. This models the usual expectation of protocols -- that honest agent sends are enabled based on messages received earlier in their role, and not on accidental unification with terms generated by the intruder. We explicitly model honest agent derivability so as to allow \emph{conditional actions}, i.e. to take some action only if an assertion is derivable, and abort otherwise.  

\begin{definition}
    A \defemph{secrecy property} is given by an assertion $\gamma$ which the intruder should not know. A \defemph{$K$-bounded attack} which violates the secrecy of $\gamma$ is a run of the protocol with at most $K$ sessions where $\intsub(\knowfunc_{n}(I)) \assderives \intsub(\gamma)$.
\end{definition}

\begin{definition}[$K$-bounded insecurity problem]
    Given a protocol $\prot$, an assertion $\gamma$, and $K \in \mathbb{N}$, check if there exists a $K$-bounded attack on $\prot$ violating the secrecy of $\gamma$. 
\end{definition}


\section{Insecurity is in NP}\label{sec:insecurity}
Given a protocol $\prot$, a secrecy property specified by $\gamma$ and a bound $K$, we now describe an NP algorithm to check if there is a $K$-bounded attack. To do this, we need to guess a coherent set of sessions of size $K$ and an interleaving $\xi = u_{1}:\recsend{\beta_{1}}{\alpha_{1}}, \ldots, u_{n}:\recsend{\beta_{n}}{\alpha_{n}}$. We also need to guess a substitution $\intsub$ with $\dom(\intsub) = \freevars(\xi)$, and check that $(\xi, \intsub)$ satisfies the conditions in Definition~\ref{def:protruns}. 
We need to ensure that $\intsub$ is $M$-bounded for some $M$ which is polynomial in $K$ and the sizes of $\prot$ and $\gamma$. For this, we apply Lemma~\ref{lem:breakdownass} and Theorem~\ref{thm:eqs-to-gamma} to each derivability of the form $\intsub(\knowfunc_{i-1}(I)) \assderives \intsub(\beta_{i})$. Thus we have a set of substitutions $\{\intsub, \iwsub_{1}, \dots, \iwsub_{n}\}$ s.t.\ $\dc(\intsub(\knowfunc_{i-1}(I))) \eqderives \equals{\intsub\iwsub_{i}(r_{i})}{\intsub\iwsub_{i}(s_{i})}$ for each $i \leq n$ and $r_{i}$ and $s_{i}$ as appropriate.  

We need to consider all these substitutions together and find $M$-bounded equivalents $\vintsub, \viwsub_{1}, \ldots, \viwsub_{n}$ such that $\vintsub(\knowfunc_{i-1}(I)) \eqderives \equals{\vintsub\viwsub_{i}(r_{i})}{\vintsub\viwsub_{i}(s_{i})}$ for each $i \leq n$. 
Thus, it suffices to guess small substitutions and check that these derivabilities hold for the intruder. For honest agent derivations of the form $\knowfunc_{i}(u_{i}) \assderives \alpha_{i}$, we can use the NP algorithm to the derivability problem outlined in Section~\ref{sec:derivability}. Thus we get an NP algorithm for insecurity. 

An important difference between the derivability problem and the insecurity problem is the following. In the derivability problem, the LHS is fixed, the $\intsub$ is ``already applied'', and we only need to show that if there is a witness substitution $\mu$ that satisfies the conditions of Theorem~\ref{thm:eqs-to-gamma}, there is a bounded substitution $\nu$ satisfying the same. However, for the insecurity problem, $\intsub$ appears on the LHS of derivabilities, and preserving them even after changing $\intsub$ to $\vintsub$ is a challenge.

\subsection{Preliminaries}
We fix a protocol $\prot$ and a run $(\xi, \intsub)$ of $\prot$. By renaming variables if necessary, we can ensure that $\freevars(\xi) \cap \qvar = \emptyset$. Since $\dom(\intsub) = \freevars(\xi)$, we have $\intsub(x) = x$ for all $x \in \qvar$. It follows that $\intsub(\dc(S;A)) = \dc(\intsub(S;A))$, for any $(S;A)$. 

For the rest of this section, we fix the following notation. 
\begin{itemize}
\item $(T_{i};E_{i}) \coloneqq \dc(\knowfunc_{i}(I))$ and $(U_{i};F_{i}) \coloneqq \dc(\knowfunc_{i}(u_{i}))$.
\item $\intterms_{i} \coloneqq \publics(\beta_{i})$ and $\honterms_{i} \coloneqq \publics(\alpha_{i})$
\end{itemize}
Note that $T_{i} \subseteq T_{i+1}$ and $E_{i} \subseteq E_{i+1}$ for every $i$.
\begin{restatable}{obs}{varsindomsigma}\label{obs:varsTi-in-domsigma}
    Any $t \in T_{i}$ is either a bound variable from $\beta_{j}$ (and hence in $\qvar$) or in $\publics(\beta_{j})$ for some $j$. In the latter case, $t$ contains no variable from $\qvar$, and thus, $\varsof(t) \cap \boundvars(\xi) = \emptyset$, i.e. $\varsof(t) \subseteq \freevars(\xi) = \dom(\intsub)$. Reasoning similarly for $U_{i}$, we see that for any $t \in T_{i} \cup U_{i}$, $\varsof(t) \subseteq \dom(\intsub)$ or $t \in \qvar$. 
\end{restatable}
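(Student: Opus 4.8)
The plan is to prove the statement by directly unfolding the definitions of the knowledge functions, of $\knowupd$, and of the kernel operator $\dc$, and then doing a short case analysis on the origin of a term $t \in T_i \cup U_i$; no clever argument is required.

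First I would record the shape of the two knowledge states. A routine induction on $i$, using the recurrence of Definition~\ref{def:protruns} together with $\knowupd((X;\Phi),\alpha) = (X \cup \publics(\alpha),\, \Phi \cup \{\alpha\})$, shows that $\knowfunc_i(I) = \bigl(X_I \cup \bigcup \publics(\alpha_j);\ \{\alpha_j\}\bigr)$, where $j$ ranges over those honest sends $\alpha_j$ already folded into the intruder's state by step $i$, and $X_I$ is the (variable-free) initial intruder knowledge. Applying $\dc$ then gives $T_i = X_I \cup \bigcup_j \publics(\alpha_j) \cup \bigcup_j \boundvars(\alpha_j)$. An entirely parallel unfolding, using the $a = u_i$ clause of the recurrence, yields $U_i = X_{u_i} \cup \bigcup_j \publics(\beta_j) \cup \bigcup_j \boundvars(\beta_j)$, the union now ranging over the steps $j$ at which $u_i$ acts. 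Hence every $t \in T_i \cup U_i$ is of one of three kinds: (a) a term of some initial knowledge set $X_a$; (b) an element of $\publics(\beta_j)$ or of $\publics(\alpha_j)$ for some communicated assertion; (c) a bound variable of some communicated assertion.

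It remains to dispatch the three cases. In case (c), $t$ is a single bound variable of an assertion occurring in $\xi$, so $t \in \boundvars(\xi) \subseteq \qvar$ by the syntactic restriction that quantified variables lie in $\qvar$; this is the second disjunct. In case (a), each $X_a$ consists of ground terms, so $\varsof(t) = \emptyset \subseteq \dom(\intsub)$. In case (b), the definition of $\publics$ guarantees $\varsof(t) \cap \qvar = \emptyset$; moreover $t$ is a subterm of a term occurring in $\xi$, so $\varsof(t) \subseteq \varsof(\xi) = \freevars(\xi) \cup \boundvars(\xi)$, and since $\boundvars(\xi) \subseteq \qvar$ is disjoint from $\varsof(t)$, we obtain $\varsof(t) \subseteq \freevars(\xi) = \dom(\intsub)$, and in particular $\varsof(t) \cap \boundvars(\xi) = \emptyset$. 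For the $U_i$ part of case (b) -- where the $\beta_j$ genuinely carry intruder variables -- one additionally uses that in a session all agent variables are instantiated by names, so $\freevars(\beta_j)$ consists of intruder variables and hence lies in $\freevars(\xi) = \dom(\intsub)$ (cf.\ Observation~\ref{obs:earlierbetaj}). Collecting the cases yields both the detailed statement about $T_i$ and the weaker statement about all of $T_i \cup U_i$.

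I expect no genuine obstacle. The only points deserving attention are bookkeeping: identifying by induction which honest sends $\alpha_j$ have been absorbed into $\knowfunc_i(I)$ and which received messages $\beta_j$ into $\knowfunc_i(u_i)$; and observing that, unlike $T_i$ (built from the intruder's accumulated assertions), $U_i$ may contain received messages with free variables still present, so one must invoke the session discipline -- agent variables replaced by names -- to place those variables inside $\dom(\intsub)$.
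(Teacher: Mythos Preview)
Your approach is correct and matches the paper's: the observation is stated with its justification inline (there is no separate proof), and that justification is precisely the unfold-the-definitions-and-case-split argument you give, only far more terse. Your version is in fact more careful on two points: you correctly trace $T_i$ back to the honest sends $\alpha_j$ absorbed into the intruder's state (the paper's statement writes $\beta_j$ here, which appears to be a slip, as confirmed by the use of $\honterms_j = \publics(\alpha_j)$ in the proof of Lemma~\ref{lem:sigmax-in-sigmaTp}), and you explicitly handle the ground initial knowledge $X_a$ as a separate case.
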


Applying Theorem~\ref{thm:eqs-to-gamma} to the derivabilities in Definition~\ref{def:protruns}, for every $i \leq n$ we get substitutions $\iwsub_{i}$ and $\hwsub_{i}$ (with respective domains $\boundvars(\beta_{i})$ and $\boundvars(\alpha_{i})$) s.t.: 
\begin{itemize}
    \item for every $x \in \dom(\iwsub_{i})$, $\intsub(T_{i-1}) \DYderives \iwsub_{i}(x)$
        \item $\intsub(T_{i-1}; E_{i-1}) \eqderives \intsub\iwsub_{i}(\equals{r}{s})$ for $\equals{r}{s} \in \SF(\beta_{i})$
    \item for every $x \in \dom(\hwsub_{i})$, $\intsub(U_{i}) \DYderives \hwsub_{i}(x)$
        \item $\intsub(U_{i}; F_{i}) \eqderives \intsub\hwsub_{i}(\equals{r}{s})$, where $\equals{r}{s} \in \SF(\alpha_{i})$
\end{itemize}

Note that for $i \leq n$, $\knowfunc_{i}(u_{i}) \assderives \alpha_{i}$, and thus $\intsub(\knowfunc_{i}(u_{i})) \assderives \intsub(\alpha_{i})$. So we have a $\hwsub_{i}$ for honest agent derivations.

Define $\Substs \coloneqq \{\intsub, \hwsub_{i}, \iwsub_{i} \mid i \leq n\}$. We can ensure that distinct $\lambda, \lambda' \in \Substs$ have disjoint domains. Assume that there is an $\fixedname \in T_{0} \cap \names$ s.t.\ $\fixedname \notin \st(\{\alpha_{i}, \beta_{i}\}) \cup \st(\rng(\hwsub_{i}) \cup \rng(\iwsub_{i}))$ for all $i$. This can be thought of as a fixed ``spare name'' that does not appear in the run.  
 
\begin{obs}
    For every $i \le n$ and for every $x \in \dom(\iwsub_{i})$, $\intsub(T_{i-1}) \DYderives \iwsub_{i}(x)$, and so any variable in $\iwsub_{i}(x)$ must come from $\varsof(\intsub(T_{i-1})) \subseteq \bigcup_{j<i}(\dom(\iwsub_{j}) \cup \dom(\hwsub_{j}))$. Similarly for any $x \in \dom(\hwsub_{i})$, $\varsof(\hwsub_{i}(x)) \subseteq \bigcup_{j<i}(\dom(\iwsub_{j}) \cup \dom(\hwsub_{j})) \cup \dom(\iwsub_{i})$. 
%
\end{obs}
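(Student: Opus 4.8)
The plan is to reduce both containments to one elementary fact about term derivability: \emph{the relation $\DYderives$ never introduces fresh variables}, i.e.\ if $X \DYderives t$ then $\varsof(t) \subseteq \varsof(X)$. This is a one-line induction on a proof — $\rnax$ places a term of $X$ on the conclusion; the constructors $\rnpair, \rnenc$ produce a conclusion whose variable set is the union of those of its premises; and the destructors $\rnfst, \rnsnd, \rndec$ produce a conclusion whose variables are among those of the major premise. Now recall the facts already listed just above, obtained by pushing each intruder derivability $\intsub(\knowfunc_{i-1}(I)) \assderives \intsub(\beta_i)$ (resp.\ each honest derivability $\intsub(\knowfunc_i(u_i)) \assderives \intsub(\alpha_i)$) through Lemma~\ref{lem:breakdownass} and Theorem~\ref{thm:eqs-to-gamma}; using that $\dc$ commutes with $\intsub$, the term part of the left-hand side there is exactly $\intsub(T_{i-1})$ (resp.\ $\intsub(U_i)$), so $\intsub(T_{i-1}) \DYderives \iwsub_i(x)$ for $x \in \dom(\iwsub_i)$ and $\intsub(U_i) \DYderives \hwsub_i(x)$ for $x \in \dom(\hwsub_i)$. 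By the elementary fact, it then suffices to show $\varsof(\intsub(T_{i-1})) \subseteq \bigcup_{j<i}\dom(\hwsub_j)$ and $\varsof(\intsub(U_i)) \subseteq \bigcup_{j \le i}\dom(\iwsub_j)$.

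For these I would first apply Observation~\ref{obs:varsTi-in-domsigma}: every $t \in T_{i-1} \cup U_i$ is either a term with $\varsof(t) \subseteq \dom(\intsub)$ — in which case $\intsub(t)$ is ground, since $\intsub$ is a ground substitution, and contributes no variable — or else is itself a quantified variable, in which case $\intsub$ fixes it (its domain $\freevars(\xi)$ is disjoint from $\qvar$) and it is the only variable it contributes. Hence $\varsof(\intsub(T_{i-1})) = T_{i-1} \cap \qvar$ and $\varsof(\intsub(U_i)) = U_i \cap \qvar$. Then I would unfold the knowledge states: the intruder starts from $(X_I; \emptyset)$ and only ever accumulates the sent assertions $\alpha_j$ (with their public terms), so $\knowfunc_{i-1}(I) = (X_I \cup \bigcup_{j<i}\publics(\alpha_j);\{\alpha_j \mid j<i\})$ and $\dc$ adjoins precisely $\bigcup_{j<i}\boundvars(\alpha_j)$; since $X_I$ is ground and each $\publics(\alpha_j)$ contains no quantified variable by definition, $T_{i-1} \cap \qvar = \bigcup_{j<i}\boundvars(\alpha_j) = \bigcup_{j<i}\dom(\hwsub_j)$. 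Symmetrically, $u_i$'s state accumulates only the received assertions $\beta_j$ (with their public terms) at the steps $j \le i$ where $u_i$ acts, so $U_i \cap \qvar \subseteq \bigcup_{j \le i}\boundvars(\beta_j) = \bigcup_{j \le i}\dom(\iwsub_j) = \dom(\iwsub_i) \cup \bigcup_{j<i}\dom(\iwsub_j)$. Substituting into the two containments gives exactly the stated bounds.

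I do not expect a genuine obstacle; the only delicate point is the bookkeeping about which assertions populate which knowledge state at step $i$. The intruder never receives a $\beta_j$, so $T_{i-1}$ carries only the bound variables $\boundvars(\alpha_j)$ for $j<i$, i.e.\ the $\dom(\hwsub_j)$ for $j<i$; whereas $u_i$'s state at step $i$ already includes the message $\beta_i$ it has just received, which is precisely why $\dom(\iwsub_i)$ appears in the bound for $\hwsub_i(x)$ while $\dom(\hwsub_i)$ does not appear in the bound for $\iwsub_i(x)$. The other thing to watch is to invoke the preliminary identity $\intsub(\dc(S;A)) = \dc(\intsub(S;A))$ at the right place, so that Theorem~\ref{thm:eqs-to-gamma} is genuinely being applied with left-hand term part $\intsub(T_{i-1})$ (resp.\ $\intsub(U_i)$), and not $T_{i-1}$ (resp.\ $U_i$).
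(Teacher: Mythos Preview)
Your argument is correct and is exactly the reasoning the paper intends: this is stated in the paper as a bare observation with no proof, the justification being precisely that $\DYderives$ introduces no fresh variables, so $\varsof(\iwsub_i(x)) \subseteq \varsof(\intsub(T_{i-1}))$, and then one reads off which quantified variables sit in $T_{i-1}$ (resp.\ $U_i$) from the knowledge-state bookkeeping. Your unpacking of the latter via Observation~\ref{obs:varsTi-in-domsigma} and the explicit description of $\dc(\knowfunc_{i-1}(I))$ and $\dc(\knowfunc_i(u_i))$ is exactly right, and you even obtain the sharper inclusions $\varsof(\intsub(T_{i-1})) \subseteq \bigcup_{j<i}\dom(\hwsub_j)$ and $\varsof(\intsub(U_i)) \subseteq \bigcup_{j\le i}\dom(\iwsub_j)$; the paper's looser bounds (with both $\iwsub_j$ and $\hwsub_j$ on each side) are written symmetrically but are not tight.
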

%

Since each knowledge state arises from a sequence of updates to a sanitized initial state, each $(T_{i}; E_{i})$ and $(U_{i}; F_{i})$ is pure. But we also need them to be consistent (as in Definition~\ref{def:consistent}). Towards this, we define a substitution $\bigsub$, which is an appropriate composition of the substitutions in $\Substs$. We motivate this by the following example.

\begin{example}
Suppose $y \in \boundvars(\beta_{1})$, and $x \in \boundvars(\alpha_{2})$. Consider a situation where $\hwsub_{2}(x) = \{y\}_{k}$ and $\iwsub_{1}(y) = (m_{0}, m_{1})$. Also suppose $(T_{2};E_{2}) \vdash \equals{x}{z}$ for some $z \in \dom(\intsub)$. For consistency, we need a ground $\lambda$ s.t. $\lambda(x) = \lambda(z)$. 
We can take $\lambda$ to be $\intsub\iwsub_{1}\hwsub_{2}$. We see that $\lambda(x) = \intsub(\iwsub_{1}(\hwsub_{2}(x))) = \intsub(\iwsub_{1}(\{y\}_{k})) = \intsub(\{(m_{0},m_{1})\}_{k}) = \{(m_{0},m_{1})\}_{k}$. Observe that $\dom(\lambda) = \dom(\intsub) \cup \dom(\iwsub_{1}) \cup \dom(\hwsub_{2})$, and since $z \notin \dom(\iwsub_{1}) \cup \dom(\hwsub_{2})$, $\lambda(z) = \intsub(z)$. 
\end{example}

\begin{definition}\label{def:bigsub}
$\bigsub \coloneqq \intsub\iwsub_{1}\hwsub_{1}\ldots\iwsub_{n}\hwsub_{n}$. 
\end{definition}

\begin{obs}\label{obs:bigsublambda}
$\bigsub$ is ground, and for $\lambda \in \Substs, \bigsub(\lambda(x)) = \bigsub(x)$.
\end{obs}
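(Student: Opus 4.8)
The plan is to exploit the layered (``triangular'') structure of the substitutions in $\Substs$. Re-index them as $\rho_{0} \coloneqq \intsub$ and, for $1 \le i \le n$, $\rho_{2i-1} \coloneqq \iwsub_{i}$ and $\rho_{2i} \coloneqq \hwsub_{i}$, so that $\bigsub = \rho_{0}\rho_{1}\cdots\rho_{2n}$, i.e.\ $\bigsub(t) = \rho_{0}(\rho_{1}(\cdots \rho_{2n}(t)\cdots))$ with $\rho_{2n}$ applied first; call $k$ the \emph{level} of $\rho_{k}$. Three facts are in hand: (i) the $\dom(\rho_{k})$ are pairwise disjoint (we arranged the members of $\Substs$ to have disjoint domains); (ii) $\rho_{0} = \intsub$ is ground; and (iii) by the Observation stated before Definition~\ref{def:bigsub}, for $x \in \dom(\iwsub_{i})$ every variable of $\iwsub_{i}(x)$ lies in $\bigcup_{j<i}(\dom(\iwsub_{j}) \cup \dom(\hwsub_{j})) = \bigcup_{m<2i-1}\dom(\rho_{m})$, and for $x \in \dom(\hwsub_{i})$ every variable of $\hwsub_{i}(x)$ lies in $\bigcup_{j<i}(\dom(\iwsub_{j}) \cup \dom(\hwsub_{j})) \cup \dom(\iwsub_{i}) = \bigcup_{m<2i}\dom(\rho_{m})$. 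So applying $\rho_{k}$ to a variable of level $k$ yields a term whose variables all have level strictly below $k$.

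From (i) and (iii) I would first extract the following: if $\varsof(t) \subseteq \bigcup_{m \le k}\dom(\rho_{m})$ then $\varsof(\rho_{k}(t)) \subseteq \bigcup_{m < k}\dom(\rho_{m})$ --- variables of $t$ in $\dom(\rho_{k})$ get replaced by level-$<k$ terms, and the remaining variables of $t$ already have level $<k$ and are untouched. Iterating this and using (ii) at the base, a downward induction on $k$ gives: for every term $t$ with $\varsof(t) \subseteq \bigcup_{m \le k}\dom(\rho_{m})$, the term $\rho_{0}(\rho_{1}(\cdots\rho_{k}(t)\cdots))$ is ground. Now any variable occurring in the run is either free in $\xi$ (hence in $\dom(\intsub) = \dom(\rho_{0})$) or bound in some $\beta_{i}$ or $\alpha_{i}$ (hence in $\dom(\iwsub_{i}) \cup \dom(\hwsub_{i})$); so for each $x \in \dom(\bigsub)$ we have $x \in \dom(\rho_{k})$ for a unique $k$, and by (i) $\rho_{m}(x) = x$ for $m > k$, whence $\bigsub(x) = \rho_{0}(\rho_{1}(\cdots\rho_{k}(x)\cdots))$, which the induction shows is ground. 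This proves $\bigsub$ is ground (and likewise $\bigsub(t)$ is ground for any term $t$ appearing in the run).

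For the second claim, fix $\lambda = \rho_{k} \in \Substs$ and a variable $x$. If $x \notin \dom(\rho_{k})$ then $\rho_{k}(x) = x$ and there is nothing to prove, so suppose $x \in \dom(\rho_{k})$. As above, (i) gives $\bigsub(x) = \rho_{0}(\rho_{1}(\cdots\rho_{k}(x)\cdots))$. On the other hand, by the displayed consequence $\varsof(\rho_{k}(x)) \subseteq \bigcup_{m<k}\dom(\rho_{m})$, which by (i) is disjoint from $\dom(\rho_{m})$ for every $m \ge k$; hence $\rho_{m}(\rho_{k}(x)) = \rho_{k}(x)$ whenever $k \le m \le 2n$, so $\bigsub(\rho_{k}(x)) = \rho_{0}(\rho_{1}(\cdots\rho_{2n}(\rho_{k}(x))\cdots)) = \rho_{0}(\rho_{1}(\cdots\rho_{k}(x)\cdots)) = \bigsub(x)$, as required.

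I do not expect a genuine obstacle: the whole argument is bookkeeping around the disjoint-domain convention and around the fact --- supplied by the Observation preceding Definition~\ref{def:bigsub} --- that each $\rho_{k}$ strictly lowers the level of the variables it introduces, so that the composition is a finite well-founded cascade that bottoms out at the ground substitution $\intsub$. The only point to watch is keeping the index arithmetic ($2i-1$ versus $2i$, and ``$<k$'' versus ``$\le k$'') consistent throughout.
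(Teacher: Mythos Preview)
Your argument is correct. The paper states this as a bare observation without any proof, relying on the reader to see that the disjoint-domain convention and the preceding Observation (that each $\iwsub_{i}$, $\hwsub_{i}$ only introduces variables from earlier levels) make the claim immediate; your re-indexing into levels $\rho_{0},\ldots,\rho_{2n}$ and the ``strictly lowers the level'' induction simply make that implicit reasoning explicit, so there is nothing to compare beyond noting that you have spelled out in full what the paper leaves to the reader.
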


The next two lemmas show that each $E_{i}$ and $F_{i}$ is consistent and that derivations preserve consistency. Any proofs omitted hereon can be found in Appendix~\ref{app:insecurity}.
\begin{restatable}{lemma}{equndertheta}\label{lem:equndertheta}
	Suppose $\lambda$ is such that $\lambda(r) = \lambda(s)$ for each $\equals{r}{s} \in E$, and $T; E \eqderives \equals{t}{u}$. Then $\lambda(t) = \lambda(u)$.
\end{restatable}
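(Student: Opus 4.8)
The plan is to prove Lemma~\ref{lem:equndertheta} by induction on the structure of the $\eqderives$ proof $\pi$ of $T; E \eqderives \equals{t}{u}$. Recall that $\eqderives$ proofs use only the rules $\rnax$, $\rneq$, $\rnsymm$, $\rncons$, $\rnproj$, and $\rntrans$ (the quantifier rules are excluded), so there are exactly these six cases. The hypothesis on $\lambda$ is precisely that it is a ``solution'' of the equations in $E$, and we want to show this propagates to any derived equation. Note that $\lambda$ here need not be ground and need not have domain exactly $\varsof(E)$; the statement is purely about $\lambda$ being a homomorphism on the term algebra that identifies the two sides of every equation in $E$.

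First I would handle the base cases. For $\rnax$, the conclusion $\equals{t}{u}$ is a member of $E$ (or of the singleton added on the right — but in the pure/$\eqderives$ setting the axioms on the right are exactly elements of $E$), so $\lambda(t) = \lambda(u)$ holds by assumption. For $\rneq$, the conclusion is $\equals{t}{t}$ for some term $t$ with $T \DYderives t$, and trivially $\lambda(t) = \lambda(t)$. Then the inductive cases: for $\rnsymm$, if $\lambda(t) = \lambda(u)$ then $\lambda(u) = \lambda(t)$. For $\rntrans$, chaining $\lambda(t_1) = \lambda(t_2) = \cdots = \lambda(t_{k+1})$ gives $\lambda(t_1) = \lambda(t_{k+1})$. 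For $\rncons$, from $\lambda(t_0) = \lambda(u_0)$ and $\lambda(t_1) = \lambda(u_1)$ we get $\lambda(\func(t_0,t_1)) = \func(\lambda(t_0),\lambda(t_1)) = \func(\lambda(u_0),\lambda(u_1)) = \lambda(\func(u_0,u_1))$, using that substitution commutes with the term constructors. The one case requiring a little care is $\rnproj_i^{\P}$: here the premise gives $\lambda(\func(t_0,t_1)) = \lambda(\func(u_0,u_1))$, i.e.\ $\func(\lambda(t_0),\lambda(t_1)) = \func(\lambda(u_0),\lambda(u_1))$, and since $\func$ is a free constructor (pairing or encryption) in our term algebra — distinct constructors never collide and a constructor term is never a name or variable — we may conclude $\lambda(t_i) = \lambda(u_i)$, which is the desired conclusion. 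The side condition $\P$ (derivability of the immediate subterms from $T$) plays no role in this particular argument; it is only needed elsewhere for soundness/consistency considerations.

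I do not expect any real obstacle here: this is a routine ``soundness of equational reasoning under a model'' argument, and the only point worth stating explicitly is that the $\rnproj$ case relies on the term constructors being free (injective, with disjoint ranges, and distinct from atoms), which is immediate from the grammar $t ::= x \mid m \mid (t,u) \mid \enc(t,k)$. I would write the proof compactly, treating $\rnax$ and $\rneq$ together as base cases and then going rule by rule through $\rnsymm$, $\rntrans$, $\rncons$, $\rnproj$, each in a sentence or two. This lemma is then immediately applied (in the two lemmas that follow in the paper) with $\lambda = \bigsub$ to conclude that each $E_i$ and $F_i$ is consistent and that $\eqderives$-derivations from these sets preserve the property of having $\bigsub$ as a solution.
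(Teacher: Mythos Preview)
Your proposal is correct and follows essentially the same approach as the paper: a straightforward induction on the structure of the $\eqderives$ proof, with the six cases $\rnax$, $\rneq$, $\rnsymm$, $\rntrans$, $\rncons$, $\rnproj$ handled exactly as you describe. Your explicit remark that the $\rnproj$ case relies on the freeness of the term constructors is a useful clarification that the paper leaves implicit.
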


\begin{restatable}{lemma}{prooftotruth}\label{lem:prooftotruth}
	For any $i \in \{1, \ldots, n\}$, 
	\begin{enumerate}[leftmargin=*]
		\item if $\equals{t}{u} \in E_{i} \cup F_{i}$, then $\bigsub(t) = \bigsub(u)$.
		\item if $\intsub(T_{i-1}; E_{i-1}) \eqderives \intsub\iwsub_{i}(\equals{t}{u})$, then $\bigsub(t) = \bigsub(u)$.
		\item if $\intsub(U_{i}; F_{i}) \eqderives \intsub\hwsub_{i}(\equals{t}{u})$, then $\bigsub(t) = \bigsub(u)$.
	\end{enumerate}
\end{restatable}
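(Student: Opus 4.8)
The plan is to prove the three parts simultaneously by induction on $i$, using two ingredients. The first is the bookkeeping fact that $\bigsub$ absorbs every substitution in $\Substs$: for each $\lambda \in \Substs$ and each term $t$, $\bigsub(\lambda(t)) = \bigsub(t)$. This is a routine structural induction on $t$ whose only nontrivial case, the variable case, is exactly Observation~\ref{obs:bigsublambda}; iterating it gives $\bigsub(\intsub(t)) = \bigsub(t)$, $\bigsub(\intsub\iwsub_i(t)) = \bigsub(t)$, and $\bigsub(\intsub\hwsub_i(t)) = \bigsub(t)$ for every term $t$. The second ingredient is Lemma~\ref{lem:equndertheta} itself: if $\bigsub$ identifies the two sides of every equality in a set $E$ and $(T;E) \eqderives \equals{p}{q}$, then $\bigsub(p) = \bigsub(q)$. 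Recall also that, from the derivabilities in the run together with Lemma~\ref{lem:breakdownass} and Theorem~\ref{thm:eqs-to-gamma}, for every $j \le n$ the matrix $\equals{t}{u}$ of $\beta_j$ satisfies $\intsub(T_{j-1};E_{j-1}) \eqderives \intsub\iwsub_j(\equals{t}{u})$, and the matrix of $\alpha_j$ satisfies $\intsub(U_j;F_j) \eqderives \intsub\hwsub_j(\equals{t}{u})$.

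Now fix $i$ and assume all three parts for all smaller indices. I would establish the three claims at level $i$ in a fixed order, each using only earlier ones. (1)~\emph{Part 2 at $i$.} Given $\intsub(T_{i-1};E_{i-1}) \eqderives \intsub\iwsub_i(\equals{t}{u})$, the assertion part of the left side is $\{\equals{\intsub(r)}{\intsub(s)} \mid \equals{r}{s} \in E_{i-1}\}$; by Part 1 at $i-1$ (induction hypothesis) and the absorption fact, $\bigsub$ identifies both sides of each such equality, so Lemma~\ref{lem:equndertheta} gives $\bigsub(\intsub\iwsub_i(t)) = \bigsub(\intsub\iwsub_i(u))$, which is $\bigsub(t)=\bigsub(u)$ (for $i=1$, $E_0=\emptyset$ and the hypothesis of Lemma~\ref{lem:equndertheta} is vacuous). (2)~\emph{Part 1 at $i$ for $\equals{t}{u}\in F_i$.} Such an equality is the matrix of some $\beta_j$ with $j\le i$, and the run supplies $\intsub(T_{j-1};E_{j-1}) \eqderives \intsub\iwsub_j(\equals{t}{u})$; applying Part 2 at $j$ (induction hypothesis if $j<i$, step (1) if $j=i$) yields $\bigsub(t)=\bigsub(u)$. (3)~\emph{Part 3 at $i$.} Identical to step (1) but with left side $\intsub(U_i;F_i)$, whose assertion part is $\{\equals{\intsub(r)}{\intsub(s)} \mid \equals{r}{s}\in F_i\}$; step (2) supplies the hypothesis of Lemma~\ref{lem:equndertheta}. (4)~\emph{Part 1 at $i$ for $\equals{t}{u}\in E_i$.} Such an equality is the matrix of some $\alpha_j$ with $j\le i$, the run supplies $\intsub(U_j;F_j) \eqderives \intsub\hwsub_j(\equals{t}{u})$, and Part 3 at $j$ (induction hypothesis if $j<i$, step (3) if $j=i$) yields $\bigsub(t)=\bigsub(u)$. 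Steps (2) and (4) together give Part 1 at $i$, completing the induction.

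The delicate point is exactly this interleaving. The three claims look mutually dependent: a witness chosen by $\iwsub_i$ or $\hwsub_i$ is constrained by equalities that were themselves produced by intruder or honest-agent derivations whose soundness under $\bigsub$ is what we are trying to prove. The resolution — essentially the only content beyond routine substitution arithmetic — is the observation that the dependency is acyclic: Part 3 at $i$ needs Part 1 at $i$ only for equalities in $F_i$, which is handled by Part 2 at $i$, and Part 2 at $i$ reaches back only to level $i-1$. I expect no further obstacle; the absorption fact packages all the substitution composition cleanly.
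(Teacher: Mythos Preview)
Your proposal is correct and follows essentially the same approach as the paper: a simultaneous induction on $i$ driven by Observation~\ref{obs:bigsublambda} (your ``absorption fact'') together with Lemma~\ref{lem:equndertheta}. Your explicit ordering (1)--(4) makes the acyclic dependency among the three parts more transparent than the paper's compressed presentation, and is in fact slightly more robust (the paper handles $\equals{t}{u}\in E_i$ first, appealing only to IH3 at $j<i$, whereas your step~(4) also covers the boundary case $j=i$ by having already established Part~3 at $i$ in step~(3)); but the underlying argument is identical.
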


We developed this preliminary setup for both honest agent derivations as well as intruder derivations in order to demonstrate the interplay between $\hwsub$ and $\iwsub$, as evidenced in the definition of $\bigsub$. However, the insecurity problem itself is concerned only with intruder derivability, and therefore, in the next few sections we will focus on $\intterms_{i}, \beta_{i}, (T_{i}; E_{i})$, and $\iwsub_{i}$.  

\subsection{Typed \texorpdfstring{$\DYderives$}{DY} proofs}
\begin{definition}[Types]
We use the sets $\constst$ (consisting of the terms occurring in $\xi$ before applying any substitution) and $\stnonvars$ (the same set, but without variables) to \defemph{type} the terms appearing in any proof. They are defined as follows.
\[ \constst \coloneqq \bigcup_{i \leq n}\bigl\{\bigl(\st(T_{i} \cup U_{i}) \cup \st(E_{i} \cup F_{i})\bigr)\bigr\} \quad \quad \stnonvars \coloneqq \constst \setminus \vars \]
\end{definition}

We show that under $\intsub$, any term $t$ with no type in $\constst$ appears in a received assertion $\beta$ first, and is generated by the intruder by putting information together, i.e. via a normal proof ending in a constructor rule. Using this, we show that every proof from $\intsub(T_{i})$ can be turned into a ``typed'' proof. Typed proofs help preserve derivability even after zapping variables.

Consider a proof $\pi$ of $\intsub(T_{i}) \vdash t$ for some $t$. It is possible that $\pi$ mentions terms from outside $\bigsub(\constst)$, even if $t \in \bigsub(\constst)$. In a ``typed'' equivalent of $\pi$, every subproof deriving such terms ends in a constructor rule. If a proof ends in a destructor rule, however, one can anchor the conclusion to some term in $\constst$, i.e. provide a ``type'' from $\constst$. 

\begin{definition}\label{def:welltyped}[Typed $\DYderives$ proof]
	A $\DYderives$ proof $\pi$ is \defemph{typed} if for each subproof $\pi'$, either $\pi'$ ends in a constructor rule, or $\concof(\pi') \in \intsub(\stnonvars) \cup \qvar$. 
\end{definition}

If a term $t$ does not match any pattern mentioned in the protocol, and appears in the intruder's knowledge set (even as a subterm of some term in the set), then it must have appeared first in an intruder send. We formalize this as follows.
\begin{restatable}{lemma}{sigmaxinsigmaTp}\label{lem:sigmax-in-sigmaTp}
    Suppose $t \notin \intsub(\stnonvars) \cup \qvar$. For  $i \leq n$, if $t \in \st(\intsub(T_{i}))$, then there is a $k < i$ such that $t \in \st(\intsub(\intterms_{k}))$.
\end{restatable}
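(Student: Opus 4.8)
\textbf{Proof plan for Lemma~\ref{lem:sigmax-in-sigmaTp}.}
The plan is to proceed by induction on $i$, analyzing how $T_i$ grows from $T_{i-1}$ via the knowledge update at step $i$. Recall that $T_i = T_{i-1} \cup \boundvars(\beta_i)$-adjusted contributions coming from $\dc(\knowfunc_i(I))$; concretely, the new terms added when $I$ receives $\alpha_i$ at step $i$ are those in $\publics(\alpha_i)$ together with $\boundvars(\alpha_i)$, but we must be careful: the statement is about $\intterms_k = \publics(\beta_k)$, the \emph{received} assertions, so we need to track where an arbitrary subterm of $\intsub(T_i)$ can have come from. The base case $i = 0$ holds because $T_0$ consists of initial terms known to $I$, which (after applying $\intsub$, which is the identity on these ground terms) all lie in $\stnonvars$ essentially by the setup — any genuinely new, untyped term cannot already be in the initial knowledge, so there is nothing to prove, or the hypothesis $t \in \st(\intsub(T_0))$ with $t \notin \intsub(\stnonvars)\cup\qvar$ is vacuous.

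For the inductive step, suppose $t \notin \intsub(\stnonvars) \cup \qvar$ and $t \in \st(\intsub(T_i))$. By Observation~\ref{obs:varsTi-in-domsigma}, every $s \in T_i$ either lies in $\qvar$ or has $\varsof(s) \subseteq \dom(\intsub)$, so $\st(\intsub(T_i)) = \bigcup_{s \in T_i} \st(\intsub(s))$. Thus $t \in \st(\intsub(s))$ for some $s \in T_i$. If $s \in T_{i-1}$, the induction hypothesis gives a $k < i - 1 < i$ with $t \in \st(\intsub(\intterms_k))$, and we are done. Otherwise $s$ is one of the terms newly contributed at step $i$, i.e. $s \in \publics(\alpha_{i-1})$ (the assertion $I$ receives into its state going from $\knowfunc_{i-1}$ to $\knowfunc_i$; I will double-check the indexing against Definition~\ref{def:protruns}, where $\knowfunc_{i+1}(I) = \knowupd(\knowfunc_i(I), \alpha_i)$, so the new terms in $T_i$ over $T_{i-1}$ come from $\publics(\alpha_{i-1})$ and $\boundvars(\alpha_{i-1})$), or $s \in \boundvars(\alpha_{i-1}) \subseteq \qvar$. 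In the latter case $\intsub(s) = s \in \qvar$, so $t = s \in \qvar$, contradicting our assumption. So $s \in \publics(\alpha_{i-1}) \subseteq \st(\alpha_{i-1})$, and hence $\intsub(s) \in \st(\intsub(\alpha_{i-1}))$, so $t \in \st(\intsub(\alpha_{i-1}))$.

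Now the crux: we have $t \in \st(\intsub(\alpha_{i-1}))$ but $t \notin \intsub(\stnonvars)$, and we want to push $t$ back to an \emph{earlier received} assertion. Since $\alpha_{i-1}$ is sent by $u_{i-1}$ — and in fact $\knowfunc_{i-1}(u_{i-1}) \assderives \alpha_{i-1}$ — every term in $\alpha_{i-1}$ is, before substitution, either in $\constst$ (it occurs in $\xi$) or is built up from variables in $\freevars(\alpha_{i-1})$. By Observation~\ref{obs:earlierbetaj}, each such free variable $x$ of $\alpha_{i-1}$ already occurs free in some $\beta_j$ with $j < i-1$. So a subterm $t$ of $\intsub(\alpha_{i-1})$ that is \emph{not} of the form $\intsub(\stnonvars)$ must "reach into" the image under $\intsub$ of one of these variables, i.e. $t \in \st(\intsub(x))$ for some $x \in \freevars(\alpha_{i-1}) \subseteq \freevars(\beta_j)$, $j < i-1$. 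Since $x \in \freevars(\beta_j)$, we have $x \in \varsof(\intterms_j)$ or at least $\intsub(x) \in \st(\intsub(\beta_j))$ via the public terms; more carefully, $x$ occurs in $\st(\beta_j)$ and, because $\beta_j$ is received, $x$ occurs in a position that is "public" (it is an intruder variable), so $\intsub(x) \in \st(\intsub(\intterms_j))$, giving $t \in \st(\intsub(\intterms_j))$ with $j < i$ — done. I expect the main obstacle to be exactly this last step: pinning down precisely \emph{why} a non-typed subterm of $\intsub(\alpha_{i-1})$ must live inside $\intsub(x)$ for a free (hence intruder) variable $x$ of $\alpha_{i-1}$, rather than getting "created" by the interaction of $\intsub$ with the structure of $\alpha_{i-1}$. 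This requires the observation that applying a substitution to a term $r$ only produces subterms that are either substitution-images of subterms of $r$ or subterms of the substituted values — combined with the fact that subterms of $r$ that are not variables lie in $\stnonvars$ and thus their $\intsub$-images lie in $\intsub(\stnonvars)$ — so the only escape route for an untyped $t$ is through $\intsub(x)$ for a variable $x$, and then Observation~\ref{obs:earlierbetaj} routes us to an earlier $\beta_j$. Some care with the quantified-variable bookkeeping (distinguishing $\boundvars$, which stay fixed under $\intsub$, from free intruder variables) will also be needed, but this is routine given the standing conventions.
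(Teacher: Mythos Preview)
Your proposal is correct and uses the same core idea as the paper: any $t \notin \intsub(\stnonvars) \cup \qvar$ that lies in $\st(\intsub(u))$ for some $u \in T_i$ must in fact lie inside $\st(\intsub(y))$ for some free (intruder) variable $y \in \varsof(u)$, and Observation~\ref{obs:earlierbetaj} then locates $y$ in an earlier $\beta_k$, giving $t \in \st(\intsub(\intterms_k))$. The paper, however, argues this \emph{directly} rather than by induction on $i$: every $u \in T_i \setminus \qvar$ is already in $\honterms_j$ for some earlier $j$, so one simply picks any $u$ witnessing $t \in \st(\intsub(T_i))$ and runs the variable-tracing argument once --- your inductive scaffolding is harmless but redundant, and dropping it yields a shorter proof.
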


If a term as above is derived by the intruder at some stage $i$, the derivation either ends in a construction rule, so $t$ is freshly put together at this stage, or there is an earlier point at which the intruder derived $t$. Thus, terms that cannot be provided a type from $\stnonvars$ do not originate in an honest agent send. 
\begin{restatable}{lemma}{earlierproof}\label{lem:earlier-proof}
    Suppose $i \leq n$, $t \notin \intsub(\stnonvars) \cup \qvar$ and $\intsub(T_{i}) \DYderives t$ via a normal proof $\pi$ ending in a destructor rule. Then there is an $\ell < i$ such that $\intsub(T_{\ell}) \DYderives t$.
\end{restatable}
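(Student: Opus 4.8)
The plan is to trace $t$ back through time: Lemma~\ref{lem:sigmax-in-sigmaTp} forces an untyped $t$ in the intruder's knowledge to occur inside a public term of some earlier \emph{received} assertion $\beta_k$, and the definition of a run guarantees the intruder could derive $\intsub(\beta_k)$ one step earlier; from there we either extract $t$ directly, or we are back in the same situation at a strictly smaller stage and iterate.

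First I would use normalization and the subterm property: since $\pi$ is normal and ends in a destructor, $t = \concof(\pi) \in \termsof(\pi) \subseteq \st(\intsub(T_i))$. As $t \notin \intsub(\stnonvars) \cup \qvar$, Lemma~\ref{lem:sigmax-in-sigmaTp} yields an index $k$ with $1 \le k < i$ and $t \in \st(\intsub(\intterms_k)) = \st(\intsub(\publics(\beta_k)))$; fix $p \in \publics(\beta_k)$ with $t \in \st(\intsub(p))$. By the run conditions $\intsub(\knowfunc_{k-1}(I)) \assderives \intsub(\beta_k)$; writing $\beta_k$ as $\exists\vec{z}.\,\equals{r}{s}$ and combining the facts recorded after Theorem~\ref{thm:eqs-to-gamma} in the preliminaries with purity of the kernel (preserved under $\intsub$) and Observation~\ref{obs:dc-pure}, we get $\intsub(T_{k-1}) \DYderives \intsub\iwsub_k(r)$ and $\intsub(T_{k-1}) \DYderives \intsub\iwsub_k(s)$. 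Since $p$ is public, $\varsof(p) \cap \qvar = \emptyset$, so $\iwsub_k$ fixes $p$; hence $\intsub(p) = \intsub\iwsub_k(p) \in \st(\intsub\iwsub_k(w))$ for whichever $w \in \{r,s\}$ contains $p$, and so $t \in \st(\intsub\iwsub_k(w))$ with $\intsub(T_{k-1}) \DYderives \intsub\iwsub_k(w)$.

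The second ingredient is a general extraction fact: if $X \DYderives v$ and $t \in \st(v)$, then $X \DYderives t$ or $t \in \st(X)$. To see this, take a normal proof $\rho$ of $X \vdash v$; if $v = t$ we are done, and otherwise $t$ is a proper subterm, so $v = \func(v_0,v_1)$ with $t \in \st(v_0) \cup \st(v_1)$. If $\rho$ ends in $\rnax$ then $v \in X$, so $t \in \st(X)$; if $\rho$ ends in a destructor the subterm property gives $v \in \st(X)$, so $t \in \st(X)$; if $\rho$ ends in a constructor, the premise deriving the $v_j$ containing $t$ is a smaller normal proof and we recurse, the recursion terminating on term size. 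Applying this with $X = \intsub(T_{k-1})$ and $v = \intsub\iwsub_k(w)$, we get either $\intsub(T_{k-1}) \DYderives t$, which finishes the proof with $\ell = k-1 < i$, or $t \in \st(\intsub(T_{k-1}))$.

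In the latter case I would iterate the argument with $k-1$ in place of $i$: $t$ is still outside $\intsub(\stnonvars) \cup \qvar$, so Lemma~\ref{lem:sigmax-in-sigmaTp} applies again and produces an index strictly below $k-1$ at which the same reasoning runs. This yields a strictly decreasing sequence of stage indices in $\mathbb{N}$, so after finitely many rounds the iteration must exit through the branch that outputs $\intsub(T_\ell) \DYderives t$ with $\ell < i$. I expect the crux to be precisely this descent: a single invocation of Lemma~\ref{lem:sigmax-in-sigmaTp} only places $t$ \emph{syntactically inside} a term that the intruder can assemble one step earlier --- it may be buried under an encryption whose key appears only later in the run --- so extracting $t$ cleanly requires the extraction fact together with a well-founded iteration, which is kept alive exactly by the standing hypothesis $t \notin \intsub(\stnonvars) \cup \qvar$.
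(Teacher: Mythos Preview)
Your argument is correct and follows the same route as the paper's, with only an organizational difference. The paper picks the \emph{earliest} index $j$ with $t \in \st(\intsub(\intterms_{j}))$ upfront; your extraction fact is phrased there as choosing a minimal subproof $\chi$ of a normal proof of $\intsub(T_{j-1}) \vdash \intsub(a)$ (for $a \in \publics(\beta_{j})$) with $t \in \st(\concof(\chi))$. If $\chi$ ended in an axiom or destructor, its conclusion would lie in $\st(\intsub(T_{j-1}))$, and Lemma~\ref{lem:sigmax-in-sigmaTp} would then produce a strictly earlier index, contradicting the minimality of $j$; hence $\chi$ ends in a constructor and, by minimality, $\concof(\chi) = t$. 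Your explicit descent and the paper's minimal-index-plus-contradiction are the two standard presentations of the same well-founded argument. One minor simplification available to you: the paper invokes Observation~\ref{obs:dc-pure} directly on the public term $a$ to get $\intsub(T_{j-1}) \DYderives \intsub(a)$, avoiding the detour through $\intsub\iwsub_{k}(r)$ and the subterm inclusion $\intsub(p) \in \st(\intsub\iwsub_{k}(w))$.
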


\begin{theorem}\label{thm:rustur}
	For all $t$ and all $i \in \{0,\ldots,n\}$, if $\intsub(T_{i}) \DYderives t$, then there is a typed normal proof $\pi^{*}$ of the same.
\end{theorem}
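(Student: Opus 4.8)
The plan is to proceed by induction on the structure of a normal proof $\pi$ of $\intsub(T_i) \DYderives t$ (which exists by Normalization), showing that each such $\pi$ can be converted into a typed normal proof of the same sequent. Fix $i$ and suppose the claim holds for all proper subproofs of $\pi$ (and, since $T_\ell \subseteq T_i$, for all indices $\ell \le i$ as well, so the induction is really on a well-founded combination of proof size and stage index). We distinguish cases on the last rule of $\pi$.

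If $\pi$ ends in a constructor rule ($\rnpair$ or $\rnenc$), then by definition $\pi$ itself already satisfies the typedness condition at its root, and we simply apply the induction hypothesis to its immediate subproofs and reassemble; normality is preserved because a constructor at the root of a normal proof never violates the ``no constructor as major premise of a destructor'' condition when placed below arbitrary typed normal subproofs. If $\pi$ ends in the axiom rule $\rnax$, then $t \in \intsub(T_i)$, so either $t \in \qvar$ (and we are done, since $\qvar$ is explicitly allowed) or $t = \intsub(t')$ for some $t' \in T_i$ with, by Observation~\ref{obs:varsTi-in-domsigma}, $\varsof(t') \subseteq \dom(\intsub)$; then $t \in \intsub(\st(T_i)) \subseteq \intsub(\constst)$, and either $t \in \intsub(\stnonvars)$ (done) or $t$ is the image of a variable, hence itself a possibly-non-typed term, in which case we appeal to Lemma~\ref{lem:sigmax-in-sigmaTp} and then Lemma~\ref{lem:earlier-proof} — but this is really the destructor case in disguise, so I handle it there. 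The substantive case is when $\pi$ ends in a destructor rule ($\rnfst$, $\rnsnd$, or $\rndec$). Here I split on whether $t \in \intsub(\stnonvars) \cup \qvar$. If it is, then $\concof(\pi) \in \intsub(\stnonvars) \cup \qvar$, so the root is fine, and I apply the induction hypothesis to the subproofs (the major-premise subproof still ends in a destructor or axiom after the transformation, by the induction hypothesis preserving this where needed — I need to check that a typed normal proof whose conclusion is \emph{not} in $\intsub(\stnonvars) \cup \qvar$ must end in a constructor, which is exactly the definition, so no constructor-as-major-premise issue arises). If $t \notin \intsub(\stnonvars) \cup \qvar$, then Lemma~\ref{lem:earlier-proof} gives an $\ell < i$ with $\intsub(T_\ell) \DYderives t$; by the induction hypothesis (smaller stage index) there is a typed normal proof of $\intsub(T_\ell) \vdash t$, and since $T_\ell \subseteq T_i$ the same tree is a typed normal proof of $\intsub(T_i) \vdash t$. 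The base case $i = 0$ is covered since $T_0$'s terms contain no variables outside $\qvar$ in the relevant sense, and Lemma~\ref{lem:sigmax-in-sigmaTp} rules out untypable terms appearing in $\st(\intsub(T_0))$.

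The main obstacle I anticipate is making the induction measure precise so that the appeal to Lemma~\ref{lem:earlier-proof} (which replaces a proof from $\intsub(T_i)$ by \emph{some} proof from $\intsub(T_\ell)$, $\ell < i$, of unrelated size) is well-founded: one must induct on the lexicographic pair (stage index $i$, size of the normal proof), or equivalently strengthen the statement to ``for all $i$ and all $t$'' and do the outer induction on $i$ with an inner induction on proof size. A secondary technical point is verifying that the transformation genuinely preserves \emph{normality} — when I splice a typed normal subproof (ending, say, in a destructor) beneath a destructor at the root, I must ensure no constructor has been promoted into a major-premise position; this follows because the only place the transformation changes a subproof's final rule is when it \emph{replaces} a destructor-ending subproof by an entirely new proof from an earlier $T_\ell$, and that new proof is itself normal, so the ``constructor not major premise of destructor'' invariant is checked locally at the one splice point and holds by the typedness dichotomy. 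I would also want to double check the axiom case interacts correctly with Lemma~\ref{lem:sigmax-in-sigmaTp}: a variable-image $t = \intsub(x) \in \intsub(T_i)$ that is not in $\intsub(\stnonvars)$ falls under the lemma's hypothesis, placing it in some $\st(\intsub(\intterms_k))$ with $k < i$, but to convert its trivial axiom proof into a typed one I still need the destructor-case machinery via Lemma~\ref{lem:earlier-proof} applied at stage $k+1 \le i$, so the axiom and destructor cases should be unified under a single ``$t \notin \intsub(\stnonvars) \cup \qvar$ implies $t$ was constructed earlier'' argument.
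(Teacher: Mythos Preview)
Your overall architecture matches the paper's: outer induction on the stage index $i$, inner induction on the structure of a normal proof, with Lemma~\ref{lem:earlier-proof} supplying the descent to an earlier stage when the conclusion is untyped. Your discussion of the induction measure and of the axiom case is fine (and you correctly notice that the axiom case reduces to the same ``derived earlier'' argument as the destructor case).

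There is one genuine gap in the destructor case. You split first on whether the \emph{conclusion} $t$ is in $\intsub(\stnonvars)\cup\qvar$, and in the ``$t$ typed'' branch you want to apply the destructor to the transformed major-premise subproof $\pi_1^*$. For the result to be \emph{normal}, $\pi_1^*$ must not end in a constructor. Your parenthetical justification (``a typed normal proof whose conclusion is not in $\intsub(\stnonvars)\cup\qvar$ must end in a constructor, which is exactly the definition'') establishes only the contrapositive direction: non-constructor-ending $\Rightarrow$ typed conclusion. You need the converse applied to $s=\concof(\pi_1)$, and you have neither shown that $s$ is typed nor that a typed conclusion forces a non-constructor ending (it does not). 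Concretely, $s$ could be $\intsub(x)$ for some $x\in\dom(\intsub)$ with $s\notin\intsub(\stnonvars)$, in which case the transformation replaces $\pi_1$ by an earlier-stage proof that may well end in a constructor, even though $t$ happens to be typed.

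The paper avoids this by reversing the case split: first transform $\pi_1$ to $\pi_1^*$, then branch on whether $\pi_1^*$ ends in a constructor. If it does, simply take the immediate subproof of $\pi_1^*$ that already proves $t$ (this is where the potential normality violation is absorbed). If it does not, then by the typedness of $\pi_1^*$ one gets $s\in\intsub(\stnonvars)$ (since $s$ is non-atomic), hence $t\in\intsub(\constst)$; only \emph{now} split on whether $t$ is typed, and in the untyped subcase invoke Lemma~\ref{lem:earlier-proof}. So your plan is salvageable, but the order of the two case distinctions matters, and the ``take the immediate subproof when $\pi_1^*$ ends in a constructor'' step is the missing ingredient.
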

\begin{proof}
    Assume the theorem holds for all $t$ and for all $j < i$. We show how to transform any proof $\pi$ of $\intsub(T_{i}) \vdash t$ into a typed normal proof $\pi^{*}$ of the same. 
    \begin{itemize}[leftmargin=*]
        \item {$\pi$ ends in $\textsf{ax}$:} $t \in \intsub(T_{i}) \subseteq \intsub(\constst)$. 
        If $t \in \intsub(\stnonvars) \cup \qvar$, we take $\pi^{*}$ to be $\pi$ itself. Otherwise, by Lemma~\ref{lem:earlier-proof}, there is some $j < i$ such that $\intsub(T_{j}) \DYderives t$. Since $j < i$, we can get a typed normal proof $\pi^{*}$ of $\intsub(T_{j}) \vdash t$ and obtain the required result by weakening the LHS.

        \item {$\pi$ ends in a constructor:} We can find typed normal equivalents for all immediate subproofs, and apply the same constructor rule to get the desired $\pi^{*}$. 
        
        \item {$\pi$ ends in a destructor:} Let $\pi_{1}$ and $\pi_{2}$ be the immediate subproofs of $\pi$, with $\concof(\pi_{1}) = s$, where $t$ is an immediate subterm of $s$. We can find typed normal equivalents $\pi^{*}_{1}$ and $\pi^{*}_{2}$. If $\pi^{*}_{1}$ ends in a constructor, then we choose $\pi^{*}$ to be the immediate subproof of $\pi^{*}_{1}$ s.t.\ $\concof(\pi^{*}) = t$. $\pi^{*}_{1}$ is typed normal, and so is $\pi^{*}$. 
        
        If $\pi^{*}_{1}$ does not end in a constructor, $s \in \intsub(\stnonvars) \cup \qvar$. Since a destructor rule $\rnrule$ was applied on $s$, $s \notin \qvar$. So $s \in \intsub(\stnonvars)$, and hence $t \in \intsub(\constst)$. If $t \in \intsub(\stnonvars) \cup \qvar$, we obtain a typed normal $\pi^{*}$ by applying $\rnrule$ on $\pi^{*}_{1}$. Otherwise, as with $\rnax$, by Lemma~\ref{lem:earlier-proof} and weakening, we get a typed and normal proof $\pi^{*}$ of $\intsub(T_{i}) \vdash t$. \qedhere 
    \end{itemize}
\end{proof}

\subsection{Typed equality proofs}
We now define a notion of ``minimal variables'', which are variables that do not unify with any non-atomic term mentioned in the protocol. The idea is that images of minimal variables can be freely ``zapped'' to an atomic constant. 
\begin{definition}[Minimal variable]
    $x \in \dom(\bigsub)$ is \defemph{minimal} if there is no $t \in \stnonvars$ such that $\bigsub(x) = \bigsub(t)$.
\end{definition}

\begin{definition}
    A term $t$ is \defemph{zappable} if there is a minimal $x$ such that $\bigsub(t) = \bigsub(x)$. 
\end{definition}

\begin{obs}\label{obs:tzap-iff-uzap}
\phantom{a}
\begin{itemize}
\item If a term $t$ is zappable, then $t \notin \stnonvars$.
\item If a term $t \in \bigsub(\constst)$ is not zappable, then $t \in \bigsub(\stnonvars)$. 
\item For $t, u$ s.t.\ $\bigsub(t) = \bigsub(u)$, $t$ is zappable iff $u$ is zappable. 
\end{itemize}
\end{obs}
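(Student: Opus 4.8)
The plan is to unpack each of the three bullets directly from the definitions of \emph{minimal variable} and \emph{zappable term}, relying only on the fact that $\bigsub$ is ground (Observation~\ref{obs:bigsublambda}) and on basic properties of the sets $\constst$ and $\stnonvars = \constst \setminus \vars$. I would state each bullet as a small sublemma and dispatch it in a sentence or two.

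For the first bullet, suppose toward a contradiction that $t$ is zappable and yet $t \in \stnonvars$. Zappability gives a minimal variable $x$ with $\bigsub(t) = \bigsub(x)$. But then $x$ witnesses its own non-minimality: there exists a term in $\stnonvars$ (namely $t$) with $\bigsub(x) = \bigsub(t)$, contradicting the definition of minimal. Hence $t \notin \stnonvars$. For the second bullet, take $t \in \bigsub(\constst)$ that is not zappable, so $t = \bigsub(s)$ for some $s \in \constst$. If $s \in \stnonvars$ we are done since then $t = \bigsub(s) \in \bigsub(\stnonvars)$. Otherwise $s \in \constst \cap \vars \subseteq \dom(\bigsub)$, so $s$ is a variable in $\dom(\bigsub)$; if $s$ were minimal, then $\bigsub(t) = \bigsub(s)$ would make $t$ zappable, contradicting our assumption. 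So $s$ is a non-minimal variable, meaning there is some $u \in \stnonvars$ with $\bigsub(s) = \bigsub(u)$; hence $t = \bigsub(s) = \bigsub(u) \in \bigsub(\stnonvars)$.

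For the third bullet, assume $\bigsub(t) = \bigsub(u)$ and that $t$ is zappable, witnessed by a minimal $x$ with $\bigsub(t) = \bigsub(x)$. Then $\bigsub(u) = \bigsub(t) = \bigsub(x)$, so the same minimal variable $x$ witnesses zappability of $u$; by symmetry the converse holds, so $t$ is zappable iff $u$ is.

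I do not expect any serious obstacle here: all three parts are immediate consequences of the definitions once one observes that a term equal (under $\bigsub$) to a minimal variable cannot itself lie in $\stnonvars$, and that membership in $\bigsub(\constst)$ means being the $\bigsub$-image of either a non-variable element of $\constst$ (landing in $\bigsub(\stnonvars)$) or a variable in $\dom(\bigsub)$ (which is either minimal, hence the term is zappable, or non-minimal, hence again the term lands in $\bigsub(\stnonvars)$). The only mild subtlety is making sure the quantifier in the definition of \emph{minimal} is read correctly — ``there is no $t \in \stnonvars$ with $\bigsub(x) = \bigsub(t)$'' — which is exactly what makes the first bullet's contradiction go through.
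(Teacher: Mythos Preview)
Your proposal is correct and is exactly the straightforward unpacking of definitions that the paper intends; the paper records this as an Observation with no proof, so there is nothing further to compare against. The one point you might make explicit is why $\constst \cap \vars \subseteq \dom(\bigsub)$ in the second bullet (it follows from the construction of $\constst$ and $\bigsub$, cf.\ Observation~\ref{obs:varsTi-in-domsigma} and Definition~\ref{def:bigsub}), but this is a minor presentational matter.
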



\begin{definition}[Typed term]
    A term $t$ is \defemph{typed} if $t \in \intsub(\stnonvars) \cup \bigsub(\constst) \cup \qvar$. 
\end{definition} 

\begin{example}
Note that we consider $\intsub(\stnonvars)$ separately from $\bigsub(\constst)$. Consider a term of the form $(m, x) \in \stnonvars$, where $x \notin \dom(\intsub)$. $\intsub((m, x)) = (m, x)$, but this cannot be in $\bigsub(\constst)$, since $\bigsub(\constst)$ only contains ground terms. Thus, $\intsub(\stnonvars) \not\subseteq \bigsub(\constst)$.
\end{example}

We now consider a small example which will motivate our choices for the definition of a \emph{typed $\eqderives$ proof}.

\begin{example}
    Suppose $\intsub(x) = (t_{1}, t_{2})$ for some minimal variable $x$, and $\intsub(u) = (u_{1}, u_{2})$ for some term $u$. Suppose we also have a proof $\pi$ of $\equals{t_{1}}{u_{1}}$ obtained by applying $\rnproj_{1}$ to $\equals{\intsub(x)}{\intsub(u)}$, and we want to derive a ``corresponding'' equality assertion under a small $\vintsub$. The most straightforward strategy is to follow the structure of $\pi$. However, under $\vintsub$, we would ``zap'' $x$ to an atomic value. We cannot apply $\rnproj$ to this, so we can no longer preserve the structure of $\pi$. 
\end{example}

To avoid such situations, we define a typed $\eqderives$ proof as one 
whose structure can be preserved under zapping. We then show that any $\eqderives$ proof can be simulated by a typed one.

\begin{definition}\label{def:welltypedeq}[Typed $\eqderives$ proof]
    A proof $\pi$ of $X; A \vdash \equals{r}{s}$ is \defemph{typed} if for every subproof $\pi'$ with conclusion $X; A \vdash \equals{t}{u}$,
    \begin{itemize}
        \item $\pi'$ contains an occurrence of the $\rncons$ rule, or
        \item $t = u$, or 
        \item $t$ and $u$ are typed terms.
    \end{itemize}
\end{definition}

Observe that for any $t \in \intsub(\constst)$, we have $t \in \intsub(\stnonvars)$, or $t = \intsub(x) = x$ for some $x \notin \dom(\intsub)$ (in which case $t \in \qvar$), or $t = \intsub(x)$ for $x \in \dom(\intsub)$ (in which case $t = \bigsub(x)$, so $t \in \bigsub(\constst)$). Thus $\intsub(\constst) \subseteq \intsub(\stnonvars) \cup \bigsub(\constst) \cup \qvar$, i.e. every term in $\intsub(\constst)$ is typed. Using this, we can prove the following lemma.

\begin{restatable}{lemma}{projcase}\label{lem:projcase}
    Suppose $t = \func(t_{0},t_{1})$ and $u = \func(u_{0},u_{1})$ are typed, and $\bigsub(t) = \bigsub(u)$. One of the following is true:
 \begin{itemize}[leftmargin=*]
        \item $t$ and $u$ are not zappable, and $t_{0}, t_{1}, u_{0}, u_{1}$ are typed, or
        \item $t$ and $u$ are zappable, and $t = u$.  
    \end{itemize}   
\end{restatable}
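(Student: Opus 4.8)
\textbf{Proof proposal for Lemma~\ref{lem:projcase}.}

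The plan is to do a case analysis on whether the compound term $t = \func(t_0,t_1)$ is zappable, and to exploit the two-sided structure of the hypotheses via Observation~\ref{obs:tzap-iff-uzap}. First I would observe that, since $\bigsub(t) = \bigsub(u)$, Observation~\ref{obs:tzap-iff-uzap} gives that $t$ is zappable iff $u$ is zappable, so the two bullets of the conclusion are genuinely exhaustive and mutually exclusive. This reduces the lemma to establishing: (a) if $t$ (equivalently $u$) is zappable then $t = u$; and (b) if $t$ (equivalently $u$) is not zappable then $t_0, t_1, u_0, u_1$ are all typed.

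For case (a), suppose $t$ is zappable, so there is a minimal variable $x$ with $\bigsub(t) = \bigsub(x)$. I would first note that $t$ is typed and $t = \func(t_0,t_1)$ is not a variable, so from the definition of typed term, $t \in \intsub(\stnonvars) \cup \bigsub(\constst)$. If $t \in \intsub(\stnonvars)$, I would argue $t \in \stnonvars$ would be forced (pushing $\intsub$ through the compound structure), since $\intsub(\stnonvars)$ contains only images of $\stnonvars$-terms and a compound term $\func(t_0,t_1)$ in the image must come from a compound term in $\stnonvars$; but then $\bigsub(t) = \bigsub(s)$ for $s = t \in \stnonvars$, contradicting minimality of $x$ (by the definition of minimal variable, no $s \in \stnonvars$ has $\bigsub(x) = \bigsub(s)$). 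So $t \in \bigsub(\constst)$, and since $t$ is zappable, the first bullet of Observation~\ref{obs:tzap-iff-uzap} rules nothing out here, but I still need $t = u$. The key point: $t \in \bigsub(\constst)$ means $t$ is ground, hence $t = \bigsub(t) = \bigsub(u)$; symmetrically $u$ is zappable, typed, not a variable, and the same argument gives $u \in \bigsub(\constst)$, so $u$ is ground and $u = \bigsub(u) = \bigsub(t) = t$. I expect this to be the delicate part — in particular the step that a typed compound term which is zappable cannot lie in $\intsub(\stnonvars)$ (the ``image of a pattern'' reasoning) and the bookkeeping that forces groundness.

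For case (b), suppose $t$ is not zappable. Then $t \in \bigsub(\constst)$ would give (by the second bullet of Observation~\ref{obs:tzap-iff-uzap}) $t \in \bigsub(\stnonvars)$; and if instead $t \in \intsub(\stnonvars)$ we again get $t \in \stnonvars$ (same pattern-image reasoning), and $\stnonvars \subseteq \intsub(\stnonvars)$-type facts let us fold this in. Either way $t \in \intsub(\stnonvars) \cup \bigsub(\stnonvars)$, i.e.\ $t = \intsub(v)$ or $t = \bigsub(v)$ for some $v \in \stnonvars$ with $v = \func(v_0,v_1)$ compound; then $t_0 = \intsub(v_0)$ (resp.\ $\bigsub(v_0)$) with $v_0 \in \stnonvars$ or $v_0 \in \vars$. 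If $v_0 \in \stnonvars$ then $t_0 \in \intsub(\stnonvars) \cup \bigsub(\constst)$, hence typed; if $v_0$ is a variable then $t_0 = \intsub(v_0)$, and recalling the remark just before the lemma that every term in $\intsub(\constst)$ is typed, we get $t_0$ typed. Same for $t_1$; and for $u_0, u_1$ the symmetric argument (using that $u$ is also not zappable) applies. I would close by assembling both cases and invoking Observation~\ref{obs:tzap-iff-uzap} once more to see that the zappable/not-zappable status of $t$ and $u$ always agree, so exactly one bullet of the conclusion holds. The main obstacle throughout is keeping straight the three overlapping notions — $\stnonvars$, $\intsub(\stnonvars)$, $\bigsub(\constst)$ — and justifying that a compound typed term decomposes into typed components, which is really a statement that $\intsub$ and $\bigsub$ commute with the term constructors and that $\stnonvars$ is subterm-closed (which follows from its definition via $\st(\cdot)$).
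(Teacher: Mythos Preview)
Your overall plan matches the paper's proof: split on whether $t$ is zappable, use Observation~\ref{obs:tzap-iff-uzap} to synchronize the status of $t$ and $u$, and in the non-zappable case decompose via a preimage in $\stnonvars$. The rest of case~(a) (groundness of $t,u \in \bigsub(\constst)$ forces $t = \bigsub(t) = \bigsub(u) = u$) and the component-typedness argument in case~(b) are exactly what the paper does.

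There is one genuine slip, repeated in both cases. You write that if $t \in \intsub(\stnonvars)$ then ``$t \in \stnonvars$ would be forced'' and take $s = t$. This is false: for $s = (m,x) \in \stnonvars$ with $x \in \dom(\intsub)$, the image $\intsub(s) = (m,\intsub(x))$ need not lie in $\stnonvars$ at all. What you actually have is $t = \intsub(s)$ for some $s \in \stnonvars$, and what you need is $\bigsub(t) = \bigsub(s)$. That follows not from $s = t$ but from Observation~\ref{obs:bigsublambda}, which gives $\bigsub(\intsub(r)) = \bigsub(r)$ for any term $r$; hence $\bigsub(x) = \bigsub(t) = \bigsub(\intsub(s)) = \bigsub(s)$ with $s \in \stnonvars$, contradicting minimality of $x$. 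With this correction your case~(a) goes through. In case~(b) the same spurious claim appears, but you immediately abandon it and work with the preimage $v \in \stnonvars$ anyway, so the subsequent argument there is fine; just drop the sentence asserting $t \in \stnonvars$.
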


\begin{restatable}{theorem}{rustureq}\label{thm:rustur-eq}
    For all $i \leq n$, every normal $\eqderives$ proof from $\intsub(T_{i}; E_{i})$ is a typed proof.
\end{restatable}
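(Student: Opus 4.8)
The plan is to prove Theorem~\ref{thm:rustur-eq} by structural induction on a normal $\eqderives$ proof $\pi$ from $\intsub(T_{i}; E_{i})$, where I simultaneously maintain that for every subproof $\pi'$ with conclusion $\equals{t}{u}$, either $\pi'$ contains a $\rncons$, or $t = u$, or both $t$ and $u$ are typed terms. The base cases are $\rnax$ and $\rneq$: for $\rnax$ the conclusion is some $\equals{t}{u} \in \intsub(E_{i})$, hence $\equals{t}{u} = \intsub(\equals{t'}{u'})$ with $\equals{t'}{u'} \in E_{i}$, so $t, u \in \intsub(\st(E_{i})) \subseteq \intsub(\constst)$, which we already observed is contained in $\intsub(\stnonvars) \cup \bigsub(\constst) \cup \qvar$ (using the fact that on quantified variables $\intsub$ is the identity, and on $\dom(\intsub)$ we have $\intsub = \bigsub$ restricted suitably); so $t$ and $u$ are typed. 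For $\rneq$ the conclusion is $\equals{t}{t}$, so $t = u$ trivially holds. Actually I should be careful: when $\pi$ is a proof from $\intsub(T_i; E_i)$, the terms in $E_i$ live in $\constst$, so after applying $\intsub$ they land in $\intsub(\constst)$; and terms derivable via $\DYderives$ from $\intsub(T_i)$ are handled by the subterm property combined with Theorem~\ref{thm:rustur} (typed $\DYderives$ proofs), which controls which terms can occur as conclusions of $\rneq$ in a normal proof.

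For the inductive step I would go rule by rule. The $\rnsymm$ case is immediate: by normality its premise is a $\rnax$ conclusion $\equals{u}{t}$, which is typed by the base case, so $\equals{t}{u}$ is typed too. For $\rntrans$: if any premise-subproof contains $\rncons$ then so does $\pi'$ and we are done; otherwise every intermediate term in the transitivity chain is typed (or the premise is of the form $\equals{a}{a}$, excluded by normality), so in particular the two endpoints $t_1$ and $t_{k+1}$ are typed. For $\rncons$: $\pi'$ contains $\rncons$ (itself), done. The interesting cases are $\rnproj_i$ and the $\rneq$ instances where the derived term is built by $\DYderives$. For $\rnproj_i$: its premise is $\equals{\func(t_0,t_1)}{\func(u_0,u_1)}$ with side condition $S \DYderives \{t_0, t_1, u_0, u_1\}$, and by normality no $\rncons$ occurs in the subproof ending in this $\rnproj$, so by the IH the premise's terms $\func(t_0,t_1)$ and $\func(u_0,u_1)$ are typed (they cannot be equal in a normal proof in a useful sense, but even if they were, applying projection is harmless); moreover by Lemma~\ref{lem:prooftotruth} part~(1) combined with Lemma~\ref{lem:equndertheta} we get $\bigsub(\func(t_0,t_1)) = \bigsub(\func(u_0,u_1))$, so Lemma~\ref{lem:projcase} applies. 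Lemma~\ref{lem:projcase} tells us that either $\func(t_0,t_1)$ and $\func(u_0,u_1)$ are not zappable and then $t_0, t_1, u_0, u_1$ are all typed — in particular the conclusion $\equals{t_i}{u_i}$ has typed terms — or they are zappable and then $\func(t_0,t_1) = \func(u_0,u_1)$, whence $t_i = u_i$ and the conclusion satisfies the ``$t = u$'' clause.

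The main obstacle I anticipate is making the $\rneq$ / $\DYderives$ interface airtight: in a normal $\eqderives$ proof the premise of $\rneq$ is the conclusion of a destructor $\DYderives$ step, and I need to argue that the term $t$ produced there is typed. This is exactly where Theorem~\ref{thm:rustur} is needed — a normal $\DYderives$ proof from $\intsub(T_i)$ ending in a destructor has its conclusion in $\intsub(\stnonvars) \cup \qvar$ (by Definition~\ref{def:welltyped} of typed $\DYderives$ proofs, since a destructor-ending subproof cannot satisfy the ``ends in a constructor'' disjunct), and $\intsub(\stnonvars) \cup \qvar$ is a subset of the set of typed terms. So $\rneq$ yields $\equals{t}{t}$ with $t$ typed, fine — but actually the ``$t = u$'' clause already covers it. The subtle point is rather that throughout the induction I must keep track of \emph{which} set the LHS $\DYderives$ is over: the assertion proof is from $\intsub(T_i; E_i)$, whose term part is $\intsub(T_i)$, so all embedded $\DYderives$ subproofs are from $\intsub(T_i)$ and Theorem~\ref{thm:rustur} applies verbatim (possibly after normalizing those $\DYderives$ subproofs, which normality of $\pi$ already grants). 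A secondary bookkeeping point is verifying that $\intsub(\constst) \subseteq \intsub(\stnonvars) \cup \bigsub(\constst) \cup \qvar$ holds as stated in the text right before Lemma~\ref{lem:projcase}; I would cite that observation directly rather than reprove it. Once these pieces are in place the induction closes cleanly, and the theorem follows.
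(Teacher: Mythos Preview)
Your proposal is correct and follows essentially the same route as the paper: structural induction on the normal $\eqderives$ proof, with the $\rnproj$ case handled via Lemma~\ref{lem:projcase} after using consistency (Lemma~\ref{lem:prooftotruth}) to get $\bigsub(a) = \bigsub(b)$. Two small cleanups: in the $\rnproj$ case the IH does not directly give that $a$ and $b$ are typed terms---it gives ``$a = b$ or $a,b$ typed'' (since $\rncons$ is excluded by normality), so state the $a = b$ subcase explicitly rather than parenthetically; and your discussion of the $\rneq$/$\DYderives$ interface via Theorem~\ref{thm:rustur} is unnecessary here, since the $t = u$ clause of Definition~\ref{def:welltypedeq} already dispatches $\rneq$ without any appeal to typed $\DYderives$ proofs.
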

\begin{proof}
    Let $\pi$ be a normal $\eqderives$ proof of $\intsub(T_{i}; E_{i}) \vdash \equals{t}{u}$. Assume all proper subproofs of $\pi$ are typed. We do a case analysis on the last rule of $\pi$. Most of the cases are straightforward, so here we only show the case when $\pi$ ends in $\rnproj$. The full proof is presented in Appendix~\ref{app:insecurity}. 
    
     Suppose there are $a = \func(a_{0},a_{1})$ and $b = \func(b_{0},b_{1})$ such that $t = a_{0}$ and $u = b_{0}$ and the immediate subproof $\pi'$ proves $\equals{a}{b}$. By Lemma~\ref{lem:prooftotruth} and Observation~\ref{obs:bigsublambda}, $\bigsub(a) = \bigsub(b)$. By normality, $\rncons$ does not occur in $\pi$ or $\pi'$, and since $\pi'$ is typed, two cases arise.
        \begin{itemize}[leftmargin=*]
            \item {$a = b$:} $t$ and $u$ are immediate subterms of $a$ and $b$, so $t = u$ and $\pi$ is typed.
            \item {$a$ and $b$ are typed:} We have $\bigsub(a) = \bigsub(b)$. Hence, by Lemma~\ref{lem:projcase}, either $a = b$ (so $t = u$ as above) or $t$ and $u$ are typed. In both cases, $\pi$ is typed. \qedhere
        \end{itemize} 
\end{proof}

\subsection{Small substitutions \texorpdfstring{$\vintsub, \vbigsub$, and $\viwsub_{i}$}{}}
Recall that there is a fixed name $\displaystyle\fixedname \in \constst \setminus \st(\bigcup_{\lambda \in \Substs}\rng(\lambda))$. 

\begin{definition}
    For any term $t$, we inductively define the \defemph{zap} of $t$, denoted $\zap{t}$, as follows: 
    \begin{align*}
        \zap{x} &\coloneqq x \\ 
        \zap{n} &\coloneqq \begin{cases}
        	\fixedname & \qquad \hspace{2mm} \mbox{if $n$ is zappable} \\
            	n & \qquad \hspace{2mm} \mbox{otherwise}
        \end{cases} \\
        \zap{\func(t_{1}, t_{2})} &\coloneqq \begin{cases}
            \fixedname & \mbox{if $\func(t_{1}, t_{2})$ is zappable} \\
            \func(\zap{t_{1}}, \zap{t_{2}}) & \mbox{otherwise}
        \end{cases}
    \end{align*}
    For a set of terms $X$, $\zap{X} \coloneqq \{\zap{t} \mid t \in X\}$. For a set of equalities $E$, $\zap{E} \coloneqq \{\equals{\zap{t}}{\zap{u}} \mid \equals{t}{u} \in E\}$.
\end{definition}

\begin{definition}\label{def:vlambda}
    For $\lambda \in \{\intsub, \bigsub, \iwsub_{i} \mid i \leq n\}$, the \defemph{small substitution} $\vlambda$ corresponding to $\lambda$ is defined as $\vlambda(x) \coloneqq \zap{\lambda(x)}$. 
\end{definition}

\begin{example}\label{ex:zap}
Suppose $\constst = \st(\{\fixedname, y, (y_{1}, \{y_{2}\}_{k}) \})$, where $y_{1}, y_{2}$ are minimal, and $\iwsub_{2}(y) = (y_{1}, \{y_{2}\}_{k})$. Then, $\viwsub_{2}(y) = (y_{1}, \{y_{2}\}_{k})$, i.e. there is no zap even though $y_{1}$ and $y_{2}$ are minimal (because $\zap{y_{i}} = y_{i} \in \vars$). The pair itself is not zapped since it is in $\constst$. The zap for $y_{1}$ and $y_{2}$ occurs when we consider $\vbigsub(y)$, which is $(\fixedname, \{\fixedname\}_{k})$, since $\bigsub(y_{i})$ is a zappable term.

However, consider $\constst = \st(\{\fixedname, y, y_{2}, (y_{1}, x)\})$ and the same $\iwsub_{2}$, with $x$ minimal and $\intsub(x) = \iwsub_{2}(\{y_{2}\}_{k})$. Then, since $\{y_{2}\}_{k} \notin \constst$, we see that $\viwsub_{2}(y) = (y_{1}, \fixedname)$ and $\vbigsub(y) = (\fixedname, \fixedname)$. 

%
%
\end{example}
%
%
%

\begin{example}
Let $t = \intsub(x)$ for a minimal $x$. So $\varsof(t) = \emptyset$, and $\vintsub(t) = t$. However, $\zap{\intsub(t)} = \zap{t} = \fixedname$, since $t$ is zappable. Thus, it is not true that $\vlambda(t) = \zap{\lambda(t)}$ for all $t$. 
\end{example}

However, the next lemma shows that this holds for all $t \in \constst$.
\begin{restatable}{lemma}{zaptaunut}\label{lem:zap-taunut}
    For $i \leq n$ and $t \in \constst$, $\vintsub\viwsub_{i}(t) = \zap{\intsub\iwsub_{i}(t)}$. 
\end{restatable}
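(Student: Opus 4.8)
The plan is to prove the identity by structural induction on $t$, using that $\constst$ is subterm-closed (it is a union of subterm-closed sets $\st(\cdot)$, so $\func(t_1,t_2)\in\constst$ gives $t_1,t_2\in\constst$). Before the case analysis I would record two auxiliary facts. First, a \emph{zappability transfer} fact: for an \emph{arbitrary} term $s$, $\intsub\iwsub_i(s)$ is zappable iff $s$ is zappable. This holds because $\bigsub(\intsub\iwsub_i(s)) = \bigsub(s)$ --- extend Observation~\ref{obs:bigsublambda} from variables to all terms by an immediate structural induction and apply it twice, with $\intsub,\iwsub_i\in\Substs$ --- so the defining condition of zappability (existence of a minimal $x$ with $\bigsub(\cdot)=\bigsub(x)$) is preserved in both directions. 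Second: if $x\in\dom(\iwsub_i)$ then $\varsof(\iwsub_i(x))\subseteq\qvar$; indeed $\intsub(T_{i-1})\DYderives\iwsub_i(x)$ forces $\varsof(\iwsub_i(x))\subseteq\varsof(\intsub(T_{i-1}))$, and by Observation~\ref{obs:varsTi-in-domsigma} together with the fact that $\intsub$ is ground, $\varsof(\intsub(T_{i-1}))\subseteq\qvar$. Since $\qvar$ is disjoint from $\dom(\intsub)=\dom(\vintsub)$ and $\zap{\cdot}$ never introduces fresh variables (only the name $\fixedname$), both $\intsub$ and $\vintsub$ act as the identity on $\iwsub_i(x)$ and on $\zap{\iwsub_i(x)}$.

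For the \textbf{variable case} $t=x$, I would split on whether $x\in\dom(\iwsub_i)$. If $x\notin\dom(\iwsub_i)$, then $\viwsub_i(x)=x$ and $\iwsub_i(x)=x$, so the left side is $\vintsub(x)$ and the right side is $\zap{\intsub(x)}$; these agree by Definition~\ref{def:vlambda} when $x\in\dom(\intsub)$, and both equal $x$ when $x\in\qvar\setminus\dom(\iwsub_i)$ (recall $\dom(\intsub)$ and $\qvar$ are disjoint). If $x\in\dom(\iwsub_i)$, then $\viwsub_i(x)=\zap{\iwsub_i(x)}$, so the left side is $\vintsub(\zap{\iwsub_i(x)})=\zap{\iwsub_i(x)}$ by the second auxiliary fact, while the right side is $\zap{\intsub(\iwsub_i(x))}=\zap{\iwsub_i(x)}$ since $\intsub$ fixes $\iwsub_i(x)$. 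So both sides equal $\zap{\iwsub_i(x)}$.

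For the \textbf{name case} $t=n\in\names\cap\constst$: then $n\in\stnonvars$, so by Observation~\ref{obs:tzap-iff-uzap} $n$ is not zappable, hence $\zap{n}=n$; and $\viwsub_i,\vintsub,\intsub,\iwsub_i$ all fix $n$, so both sides equal $n$. For the \textbf{compound case} $t=\func(t_1,t_2)\in\constst$: here $t\in\stnonvars$, so by Observation~\ref{obs:tzap-iff-uzap} $t$ is not zappable, and by the zappability transfer fact $\intsub\iwsub_i(t)$ is not zappable either. Hence $\zap{\cdot}$ distributes over the outermost $\func$ on the right, $\zap{\intsub\iwsub_i(t)}=\func\bigl(\zap{\intsub\iwsub_i(t_1)},\zap{\intsub\iwsub_i(t_2)}\bigr)$, while substitutions distribute over $\func$ on the left, $\vintsub\viwsub_i(t)=\func\bigl(\vintsub\viwsub_i(t_1),\vintsub\viwsub_i(t_2)\bigr)$; the two sides agree componentwise by the induction hypothesis applied to $t_1,t_2\in\constst$.

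I expect the main obstacle to be the variable subcase $x\in\dom(\iwsub_i)$: the outer substitution $\intsub$ (and its small counterpart $\vintsub$) could in principle interfere with the already-zapped term $\iwsub_i(x)$, and the resolution depends on the precise fact that $\iwsub_i(x)$ contains no intruder variables --- only quantified eigenvariables --- which is exactly what Observation~\ref{obs:varsTi-in-domsigma} plus groundness of $\intsub$ provides. Everything else (subterm-closure of $\constst$, distributivity of $\zap{\cdot}$ and of substitutions over $\func$, and the zappability transfer via $\bigsub$) is routine bookkeeping.
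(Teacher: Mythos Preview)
Your proposal is correct and follows essentially the same structural induction on $t\in\constst$ as the paper's proof. In fact you are more careful than the paper in the variable subcase $x\in\dom(\iwsub_i)$: where the paper simply writes ``By Definition~\ref{def:vlambda}, $\vintsub\viwsub_i(x)=\zap{\intsub\iwsub_i(x)}$'', you spell out the needed fact that $\varsof(\iwsub_i(x))\subseteq\qvar$ (so that $\intsub$ and $\vintsub$ act as the identity on $\iwsub_i(x)$ and on $\zap{\iwsub_i(x)}$), which is exactly what justifies that step.
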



We now show that we can simulate any $\lambda$ using $\vlambda$ for deriving both terms and equalities. 

\begin{obs}\label{obs:stnu}
$T_{i} \subseteq \constst$, so by Lemma~\ref{lem:zap-taunut}, $\vintsub(T_{i}) = \zap{\intsub(T_{i})}$, and similarly $\vintsub(E_{i}) = \zap{\intsub(E_{i})}$, for all $i \le k$.
\end{obs}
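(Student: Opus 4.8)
The plan is to obtain the statement as essentially a one-line corollary of Lemma~\ref{lem:zap-taunut}, once the bookkeeping mismatch between that lemma (phrased for the composites $\vintsub\viwsub_i$ and $\zap{\intsub\iwsub_i(\cdot)}$) and the present claim (phrased for $\vintsub$ and $\zap{\intsub(\cdot)}$) is cleared away.

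First I would record the containments that make Lemma~\ref{lem:zap-taunut} applicable: by the definition of $\constst$ we have $T_i\subseteq\st(T_i)\subseteq\constst$, and every term occurring in an equality of $E_i$ lies in $\st(E_i)\subseteq\constst$. Next I would observe that $\iwsub_i$, and hence $\viwsub_i$ (which has the same domain), acts as the identity on all of these terms. Indeed, $(T_i;E_i)=\dc(\knowfunc_i(I))$ is built from the initial intruder knowledge, the public terms of the assertions the intruder has received, and the eigenvariables that $\dc$ adjoins for their bound variables; so by Observation~\ref{obs:varsTi-in-domsigma} every variable appearing in $T_i$ or in a term of $E_i$ is either an intruder variable from $\dom(\intsub)$ or a bound variable of one of those earlier assertions, and by coherence none of these lies in $\dom(\iwsub_i)=\boundvars(\beta_i)$. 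Hence for any $t\in T_i\cup\st(E_i)$ we get $\intsub\iwsub_i(t)=\intsub(t)$ and $\vintsub\viwsub_i(t)=\vintsub(t)$, so Lemma~\ref{lem:zap-taunut} collapses to $\vintsub(t)=\zap{\intsub(t)}$.

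It then remains only to lift this from single terms to the sets $T_i$ and $E_i$ by unwinding the definition of $\zap{\cdot}$ on sets. For terms, $\vintsub(T_i)=\{\vintsub(t)\mid t\in T_i\}=\{\zap{\intsub(t)}\mid t\in T_i\}=\zap{\intsub(T_i)}$. For equalities, since no equality in $E_i$ carries a quantifier, $\vintsub$ and $\intsub$ distribute componentwise over $\equals{t}{u}$, so $\vintsub(E_i)=\{\equals{\vintsub(t)}{\vintsub(u)}\mid\equals{t}{u}\in E_i\}=\{\equals{\zap{\intsub(t)}}{\zap{\intsub(u)}}\mid\equals{t}{u}\in E_i\}=\zap{\intsub(E_i)}$, the final equality being exactly the definition of $\zap{\cdot}$ on a set of equalities. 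I expect the one point requiring any care to be the middle step — confirming that $\iwsub_i$ and $\viwsub_i$ really do nothing on $T_i\cup\st(E_i)$, since this is precisely what lets us strip $\iwsub_i$ off Lemma~\ref{lem:zap-taunut}; everything else is a direct unwinding of definitions.
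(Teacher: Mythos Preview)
Your proposal is correct and is exactly the kind of bridging argument the paper leaves implicit: the paper states the observation as an immediate consequence of Lemma~\ref{lem:zap-taunut} with no separate proof, and you have correctly spelled out why the $\iwsub_i$ (and hence $\viwsub_i$) in that lemma can be dropped. One small terminological point: the reason the bound variables appearing in $T_i$ and $E_i$ avoid $\dom(\iwsub_i)=\boundvars(\beta_i)$ is not ``coherence'' (which in the paper refers to disjointness of free variables across sessions) but rather the paper's explicit convention that distinct substitutions in $\Substs$ have pairwise disjoint domains; this is what guarantees $\boundvars(\alpha_j)\cap\boundvars(\beta_i)=\emptyset$.
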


\begin{lemma}\label{lem:dyzetazap}
	For $i \leq n$ and any term $t$, 
	if $\intsub(T_{i}) \DYderives t$ then $\vintsub(T_{i}) \DYderives \zap{t}$.
\end{lemma}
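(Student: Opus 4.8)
The statement to prove is Lemma~\ref{lem:dyzetazap}: if $\intsub(T_{i}) \DYderives t$, then $\vintsub(T_{i}) \DYderives \zap{t}$. The natural route is induction on the structure of a proof $\pi$ witnessing $\intsub(T_{i}) \DYderives t$, but the preceding development strongly suggests we should first invoke Theorem~\ref{thm:rustur} to replace $\pi$ by a \emph{typed normal} proof $\pi^{*}$ of the same sequent, and then induct on $\pi^{*}$. The point of using a typed proof is exactly the failure mode flagged in the text: if $\pi$ ends in a destructor applied to a zappable term, then $\zap{}$ collapses that term to $\fixedname$ and the destructor has nothing to act on. Typedness guarantees that whenever a subproof ends in a destructor, its conclusion lies in $\intsub(\stnonvars) \cup \qvar$, hence (by Observation~\ref{obs:tzap-iff-uzap}, third bullet, together with the second bullet: a term in $\bigsub(\constst)$ that is not zappable lies in $\bigsub(\stnonvars)$) the relevant subterms are not zappable, so $\zap{}$ commutes with the rule.

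**Key steps.** First, apply Theorem~\ref{thm:rustur} to obtain a typed normal proof $\pi^{*}$ of $\intsub(T_{i}) \DYderives t$, and argue by induction on $\pi^{*}$. For the base case $\pi^{*} = \rnax$: $t \in \intsub(T_{i}) = \zap{}^{-1}$-preimage aside, we have $\zap{t} \in \zap{\intsub(T_{i})} = \vintsub(T_{i})$ by Observation~\ref{obs:stnu} (using $T_{i} \subseteq \constst$ and Lemma~\ref{lem:zap-taunut}), so $\vintsub(T_{i}) \DYderives \zap{t}$ by $\rnax$. For the constructor cases ($\rnpair$, $\rnenc$): the immediate subproofs are typed normal, so by induction $\vintsub(T_{i}) \DYderives \zap{t_{1}}$ and $\vintsub(T_{i}) \DYderives \zap{t_{2}}$ (and for $\rnenc$, $\vintsub(T_{i}) \DYderives \zap{k}$); now do a case split on whether $\func(t_{1},t_{2})$ (resp.\ $\{t\}_{k}$) is zappable. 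If it is, $\zap{\func(t_{1},t_{2})} = \fixedname$, which is not zappable (it is a spare name, and $\fixedname \in \constst \cap \names$ while minimality concerns $\rng(\bigsub)$; one checks $\fixedname$ is itself not zappable, hence $\zap{\fixedname} = \fixedname$) and more to the point $\fixedname \in T_{0} \subseteq T_{i}$ so $\vintsub(T_{i}) \DYderives \fixedname$ directly by $\rnax$. If it is not zappable, $\zap{\func(t_{1},t_{2})} = \func(\zap{t_{1}},\zap{t_{2}})$ and we reassemble via the same constructor rule. For the destructor cases ($\rnfst$, $\rnsnd$, $\rndec$): let the major premise subproof $\pi^{*}_{1}$ conclude $s$, with $t$ an immediate subterm of $s$. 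Since $\pi^{*}$ is normal, $\pi^{*}_{1}$ does not end in a constructor, so by typedness $s \in \intsub(\stnonvars) \cup \qvar$; as a destructor was applied, $s \notin \qvar$, so $s \in \intsub(\stnonvars) \subseteq$ (non-zappable terms, using Observation~\ref{obs:tzap-iff-uzap} first bullet: zappable $\Rightarrow t \notin \stnonvars$, hence $\intsub(\stnonvars)$ terms are... here one must be slightly careful — the cleanest statement is that $s$ is not zappable because $s \in \intsub(\stnonvars)$ and a zappable term $t'$ satisfies $t' \notin \stnonvars$; but we need $s \notin$ zappable, for which we combine with the structure of $\constst$). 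Then $\zap{s} = \func(\zap{s_{0}},\zap{s_{1}})$ with $\zap{t}$ the appropriate component, and by induction $\vintsub(T_{i}) \DYderives \zap{s}$ (and for $\rndec$, $\vintsub(T_{i}) \DYderives \zap{\inv{k}}$), so applying the same destructor gives $\vintsub(T_{i}) \DYderives \zap{t}$.

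**Main obstacle.** The delicate point is the destructor case, specifically verifying that the major-premise conclusion $s$ is \emph{not} zappable so that $\zap{}$ does not collapse it, and similarly that the key material $\inv{k}$ behaves well under $\zap{}$ for $\rndec$. This is where typedness and normality together must do all the work, and it requires combining Observation~\ref{obs:tzap-iff-uzap} (zappable implies not in $\stnonvars$; a non-zappable term of $\bigsub(\constst)$ lies in $\bigsub(\stnonvars)$) with the observation, made just before Lemma~\ref{lem:projcase}, that $\intsub(\constst) \subseteq \intsub(\stnonvars) \cup \bigsub(\constst) \cup \qvar$ — i.e.\ that $s \in \intsub(\stnonvars)$ forces $s$ to be typed and, crucially, not zappable. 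One also has to handle the decryption key: the key $k$ appearing in $\{t\}_{k}$ is a subterm of $s \in \intsub(\stnonvars)$, and $\inv{k}$ is what is derived; since keys are atomic names, $\zap{\inv{k}}$ is either $\inv{k}$ or $\fixedname$, and one argues the latter does not arise (or if it does, matches $\zap{k} = \fixedname$ so the encryption and decryption still line up) — this bookkeeping on keys under $\zap{}$ is the one genuinely fiddly part of the argument. Everything else is a routine structural induction.
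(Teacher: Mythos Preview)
Your approach matches the paper's exactly: invoke Theorem~\ref{thm:rustur} to obtain a typed normal proof, then induct on its structure, using Observation~\ref{obs:stnu} to identify $\vintsub(T_i)$ with $\zap{\intsub(T_i)}$ for the base case and handling the constructor case by a split on zappability (with $\fixedname \in T_0 \subseteq T_i$ supplying the $\rnax$ proof when the result is zapped to $\fixedname$).

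Two remarks. First, your argument that the destructor's major premise $s$ is not zappable goes astray when you reach for Observation~\ref{obs:tzap-iff-uzap}: that observation says a zappable term is not in $\stnonvars$, which is not the same as saying it is not in $\intsub(\stnonvars)$, and you visibly stall at this point. The paper's argument is short and works straight from the definition of minimality: from $s \in \intsub(\stnonvars)$ one has $s = \intsub(u_1)$ for some $u_1 \in \stnonvars$, hence $\bigsub(s) = \bigsub(u_1)$; were $s$ zappable via some minimal $x$, then $\bigsub(x) = \bigsub(s) = \bigsub(u_1)$ with $u_1 \in \stnonvars$, contradicting minimality of $x$. That is the whole step. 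Second, the paper does \emph{not} carry out the inverse-key bookkeeping you single out as ``the one genuinely fiddly part'': after establishing that $\zap{s}$ retains the same outermost operator as $s$, it simply asserts that the destructor $\rnrule$ can be reapplied to the zapped premises $\varpi_1,\varpi_2$ to yield $\zap{t}$, without separately analysing whether $\zap{\inv{k}} = \inv{\zap{k}}$. So to match the paper's proof you should not over-invest there; your instinct that something deserves a check is reasonable, but it is not part of the argument the paper actually gives.
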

\begin{proof}
    Let $X$ and $Y$ stand for $\intsub(T_{i})$ and $\vintsub(T_{i})$. By Observation~\ref{obs:stnu}, $\zap{X} = Y$. Let $\pi$ be a typed normal $\DYderives$ proof of $X \vdash t$ (ensured by Theorem~\ref{thm:rustur}). We prove that $Y \DYderives \zap{t}$. Consider the last rule $\rnrule$ of $\pi$. The following cases arise.
    \begin{itemize}[leftmargin=*]
		\item {$\rnrule = \textsf{ax}$:} $t \in X$, and therefore $\zap{t} \in Y$. Thus $Y \DYderives \zap{t}$ by $\textsf{ax}$. 
		\item {$\rnrule$ is a constructor:} Let $t = \func(t_{1}, t_{2})$ and let the immediate subproofs of $\pi$ be $\pi_{1}, \pi_{2}$, with $\concof(\pi_{i}) = t_{i}$ for $i \le 2$. By IH, there is a proof $\varpi_{i}$ of $Y \vdash \zap{t_{i}}$ for each $i \leq 2$. If $t$ is zappable, then $\zap{t} = \fixedname \in Y$ ($\fixedname \in T_{i}$ for all $i$, so $\fixedname \in X$ and $\fixedname \in Y$), and we have $Y \DYderives \zap{t}$ using $\textsf{ax}$. If $t$ is not zappable, then $\zap{t} = \zap{\func(t_{1}, t_{2})} = \func(\zap{t_{1}}, \zap{t_{2}})$, and we can apply $\rnrule$ on the $\varpi_{i}$s to get $Y \DYderives \zap{t}$.
		\item {$\rnrule$ is a destructor:} Let the immediate subproofs of $\pi$ be $\pi_{1}, \pi_{2}$, deriving $t_{1}, t_{2}$ respectively, with $t_{1}$ being the major premise, and $t$ an immediate subterm of $t_{1}$. Since $\pi$ is typed normal, $\pi_{1}$ is also typed and ends in a destructor, so by Definition~\ref{def:welltyped}, $t_{1} \in \intsub(\stnonvars) \cup \qvar$. Since we applied a destructor on $t_{1}$, it is not in $\qvar$. Thus, there is some $u_{1} \in \stnonvars$, with the same outermost operator as $t_{1}$, such that $t_{1} = \intsub(u_{1})$. Hence, $\bigsub(t_{1}) = \bigsub(u_{1})$. 
		
		If $t_{1}$ were zappable, there would be a minimal $x$ such that $\bigsub(x) = \bigsub(t_{1}) = \bigsub(u_{1}) \in \bigsub(\stnonvars)$, which contradicts the minimality of $x$. Thus, $t_{1}$ is not zappable, and $\zap{t_{1}}$ has the same outermost structure as $t_{1}$. By IH, there is a proof $\varpi_{i}$ of $Y \vdash \zap{t_{i}}$ for each $i \leq 2$. Since $\zap{t_{1}}$ is not atomic, we can apply the destructor $\rnrule$ on the $\varpi_{i}$s to get $Y \DYderives \zap{t}$. 
 \qedhere
	\end{itemize}
\end{proof}

\begin{restatable}{lemma}{eqzetazap}\label{lem:eqzetazap}
    For $i \leq n$ and terms $t,u$, 
    if $\intsub(T_{i};E_{i}) \eqderives \equals{t}{u}$ then $\vintsub(T_{i};E_{i}) \eqderives \equals{\zap{t}}{\zap{u}}$.
\end{restatable}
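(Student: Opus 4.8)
The plan is to prove this exactly as Lemma~\ref{lem:dyzetazap} was proved, but one level up, on $\eqderives$ proofs. By Theorem~\ref{thm:normeq} we may fix a \emph{normal} $\eqderives$ proof $\pi$ of $\intsub(T_{i};E_{i}) \vdash \equals{t}{u}$, and by Theorem~\ref{thm:rustur-eq} this $\pi$ is automatically \emph{typed}. We then induct on the structure of $\pi$, with a case analysis on its last rule. Writing $X \coloneqq \intsub(T_{i})$ and $A \coloneqq \intsub(E_{i})$, Observation~\ref{obs:stnu} gives $\vintsub(T_{i}) = \zap{X}$ and $\vintsub(E_{i}) = \zap{A}$, so the goal becomes $(\zap{X};\zap{A}) \eqderives \equals{\zap{t}}{\zap{u}}$. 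As in the proof of Theorem~\ref{thm:rustur-eq}, from $\intsub(T_{i};E_{i}) \eqderives \equals{t}{u}$ together with Lemma~\ref{lem:prooftotruth}, Observation~\ref{obs:bigsublambda} and Lemma~\ref{lem:equndertheta} we obtain $\bigsub(t) = \bigsub(u)$, and hence by Observation~\ref{obs:tzap-iff-uzap} that $t$ is zappable iff $u$ is; this will be applied freely to the premises of each rule (which are themselves provable from $\intsub(T_{i};E_{i})$).

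Most cases are bookkeeping. If $\pi$ ends in $\rnax$, then $\equals{t}{u} \in A$, so $\equals{\zap{t}}{\zap{u}} \in \zap{A}$ and $\rnax$ applies. If $\pi$ ends in $\rneq$, then $u = t$ and $X \DYderives t$; Lemma~\ref{lem:dyzetazap} gives $\zap{X} \DYderives \zap{t}$, so $\rneq$ yields $\equals{\zap{t}}{\zap{t}}$. For $\rnsymm$ and $\rntrans$ we apply the same rule to the zapped premises provided by the induction hypothesis. For $\rncons$, $t = \func(t_{0},t_{1})$ and $u = \func(u_{0},u_{1})$: if $t$ (equivalently $u$) is not zappable then $\zap{t} = \func(\zap{t_{0}},\zap{t_{1}})$ and $\zap{u} = \func(\zap{u_{0}},\zap{u_{1}})$, so $\rncons$ applies to the zapped premises; if $t$ and $u$ are zappable then $\zap{t} = \zap{u} = \fixedname$, and since $\fixedname \in T_{i}$ we have $\zap{X} \DYderives \fixedname$, so $\rneq$ gives $\equals{\fixedname}{\fixedname}$.

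The real work is the case where $\pi$ ends in $\rnproj$. Here $t = c_{j}$ and $u = d_{j}$ for some $j \in \{0,1\}$, where the immediate subproof derives $\equals{a}{b}$ with $a = \func(c_{0},c_{1})$ and $b = \func(d_{0},d_{1})$, and the side condition $\P$ gives $X \DYderives \{c_{0},c_{1},d_{0},d_{1}\}$. By normality ($\rncons$ cannot occur below $\rnproj$) and since the subproof of $\equals{a}{b}$ is typed, either $a = b$ or both $a$ and $b$ are typed. If $a = b$ then $t = u$, and $X \DYderives t$ together with Lemma~\ref{lem:dyzetazap} and $\rneq$ give $\equals{\zap{t}}{\zap{t}}$. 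Otherwise $a,b$ are typed, and since $\bigsub(a) = \bigsub(b)$, Lemma~\ref{lem:projcase} applies: either $a$ and $b$ are zappable with $a = b$ (the previous subcase), or $a$ and $b$ are not zappable and $c_{0},c_{1},d_{0},d_{1}$ are all typed. In the latter subcase $\zap{a} = \func(\zap{c_{0}},\zap{c_{1}})$ and $\zap{b} = \func(\zap{d_{0}},\zap{d_{1}})$, the induction hypothesis gives $(\zap{X};\zap{A}) \eqderives \equals{\zap{a}}{\zap{b}}$, and Lemma~\ref{lem:dyzetazap} applied to $\P$ gives $\zap{X} \DYderives \{\zap{c_{0}},\zap{c_{1}},\zap{d_{0}},\zap{d_{1}}\}$, so the side condition for $\rnproj_{j}$ is met and we conclude $\equals{\zap{c_{j}}}{\zap{d_{j}}} = \equals{\zap{t}}{\zap{u}}$.

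I expect the $\rnproj$ case to be the only genuine obstacle: one must ensure that zapping does not collapse the outermost $\func$-structure of the assertion that projection acts on, and that the side condition $\P$ survives the zap. Both points are precisely what the earlier infrastructure was built for — typedness plus normality plus Lemma~\ref{lem:projcase} keep the structure intact, and Lemma~\ref{lem:dyzetazap} carries the $\DYderives$ obligations across the zap. The remaining cases are routine structural induction.
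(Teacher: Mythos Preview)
Your proposal is correct and takes essentially the same approach as the paper: structural induction on a normal, typed $\eqderives$ proof, with the $\rnproj$ case handled via normality, typedness, and Lemma~\ref{lem:projcase}, and the $\DYderives$ side conditions carried across by Lemma~\ref{lem:dyzetazap}. Your $\rncons$ case is marginally more direct (splitting on zappability via Observation~\ref{obs:tzap-iff-uzap} rather than invoking Lemma~\ref{lem:projcase} as the paper does), but the argument is otherwise identical.
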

\begin{proof}
    Let $(X;A)$ and $(Y;B)$ denote $\intsub(T_{i};E_{i})$ and $\vintsub(T_{i};E_{i})$ respectively. By Observation~\ref{obs:stnu}, $\zap{X} = Y$ and $\zap{A} = B$. Let $\pi$ be a typed normal $\eqderives$ proof of $X;A \vdash \equals{t}{u}$ (guaranteed by Theorem~\ref{thm:rustur-eq}). We prove that $Y;B \eqderives \equals{\zap{t}}{\zap{u}}$. Most of the cases are straightforward, so here we only consider the cases when $\pi$ ends in $\rnproj$ or $\rncons$. The full proof is in Appendix~\ref{app:insecurity}.
	\begin{itemize}[leftmargin=*]		
	\item {$\pi$ ends in $\rnproj$}: Let the immediate subproof of $\pi$ be $\pi'$ deriving $X;A \vdash \equals{a}{b}$ where $a = \func(a_{0},a_{1})$, $b = \func(b_{0},b_{1})$, and $t = a_{0}$ and $u = b_{0}$. By IH, there is a proof $\varpi'$ of $Y;B \vdash \equals{\zap{a}}{\zap{b}}$. 
        For $\rnproj$, we need $X \DYderives \{a_{0},a_{1},b_{0},b_{1}\}$. By Lemma~\ref{lem:dyzetazap}, $Y \DYderives \{\zap{a_{0}},\zap{a_{1}},\zap{b_{0}},\zap{b_{1}}\}$. By Lemma~\ref{lem:prooftotruth} and Observation~\ref{obs:bigsublambda}, $\bigsub(a) = \bigsub(b)$. By normality, $\rncons$ cannot occur in $\pi$. $\pi$ is also typed, so either $a = b$ or $a$ and $b$ are typed. If $a = b$, then $t = u$, and we have a proof of $Y;B \vdash \equals{\zap{t}}{\zap{u}}$ ending in $\rneq$. If $a$ and $b$ are typed, we apply Lemma~\ref{lem:projcase} and consider two cases.
        \begin{itemize}
            \item {$a$ and $b$ not zappable:} Then $\zap{a}$ and $\zap{b}$ have the same outermost structure as $a$ and $b$, and $\zap{t} = \zap{a_{0}}$ and $\zap{u} = \zap{b_{0}}$. So we can apply $\rnproj$ on $\varpi'$ to get $Y;B \eqderives \equals{\zap{t}}{\zap{u}}$. 
            \item {$a = b$:} Then $t = u$ as well, and hence $\zap{t} = \zap{u}$. Since $Y \DYderives \{\zap{t}, \zap{u}\}$, $Y;B \eqderives \equals{\zap{t}}{\zap{u}}$ with last rule $\rneq$. 
        \end{itemize}
        \item {$\pi$ ends in $\rncons$}: Let $t = \func(t_{0},t_{1})$ and $u = \func(u_{0},u_{1})$. Let $\pi$ have immediate subproofs $\pi_{0}$ and $\pi_{1}$, each $\pi_{i}$ proving $X;A \vdash \equals{t_{i}}{u_{i}}$. By IH, there are proofs $\varpi_{1}, \varpi_{2}$, each $\varpi_{i}$ proving $Y;B \vdash \equals{\zap{t_{i}}}{\zap{u_{i}}}$. By Lemma~\ref{lem:projcase}, two cases arise.
		\begin{itemize}
            \item {$t$ and $u$ not zappable:} Then $\zap{t} = \func(\zap{t_{1}}, \zap{t_{2}})$ and $\zap{u} = \func(\zap{u_{1}}, \zap{u_{2}})$. So $Y;B \eqderives \equals{\zap{t}}{\zap{u}}$ using $\rncons$ on the $\varpi_{i}$s. 
            \item {$t$ and $u$ zappable:} Then, $\zap{t} = \zap{u} = \fixedname \in Y$, so we have a proof of $Y;B \vdash \equals{\zap{t}}{\zap{u}}$ ending in $\rneq$.  \qedhere
        \end{itemize} 
	\end{itemize}
\end{proof}  

The following theorem is an immediate consequence of Lemmas~\ref{lem:zap-taunut},~\ref{lem:dyzetazap} and~\ref{lem:eqzetazap}.
\begin{restatable}{theorem}{stot}\label{thm:stot}
    Let $t, u \in \constst$ and $i \le n$. 
    \begin{itemize}[leftmargin=*]
        \item If $\intsub(T_{i-1}) \DYderives \intsub\iwsub_{i}(t)$ then $\vintsub(T_{i-1}) \DYderives \vintsub\viwsub_{i}(t)$.
        \item If $\intsub(T_{i-1}; E_{i-1}) \eqderives \intsub\iwsub_{i}(\equals{t}{u})$ then $\vintsub(T_{i-1}; E_{i-1}) \eqderives \vintsub\viwsub_{i}(\equals{t}{u})$.
    \end{itemize}
\end{restatable}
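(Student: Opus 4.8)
The plan is to derive this theorem by composing the three preceding lemmas, with Lemma~\ref{lem:zap-taunut} serving as the bridge between ``$\zap{\cdot}$ of a substituted term'' and ``the small-substitution image'' of a term. The key observation is that, although $\vintsub\viwsub_{i}$ and $\zap{\intsub\iwsub_{i}(\cdot)}$ need not agree on arbitrary terms (as the examples immediately preceding this theorem illustrate), Lemma~\ref{lem:zap-taunut} guarantees they agree on every $t \in \constst$ --- which is exactly the hypothesis at our disposal.

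For the first item, I would assume $\intsub(T_{i-1}) \DYderives \intsub\iwsub_{i}(t)$ and invoke Lemma~\ref{lem:dyzetazap} with the index $i-1$ (in range since $0 \le i-1 \le n$) and with the term $\intsub\iwsub_{i}(t)$ playing the role of ``$t$''; this produces a proof of $\vintsub(T_{i-1}) \DYderives \zap{\intsub\iwsub_{i}(t)}$. Since $t \in \constst$, Lemma~\ref{lem:zap-taunut} rewrites the conclusion as $\vintsub(T_{i-1}) \DYderives \vintsub\viwsub_{i}(t)$, which is the claim.

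For the second item, I would assume $\intsub(T_{i-1}; E_{i-1}) \eqderives \intsub\iwsub_{i}(\equals{t}{u})$, i.e.\ a proof of $\equals{\intsub\iwsub_{i}(t)}{\intsub\iwsub_{i}(u)}$ from this left side, and apply Lemma~\ref{lem:eqzetazap} with index $i-1$ and the terms $\intsub\iwsub_{i}(t), \intsub\iwsub_{i}(u)$, obtaining $\vintsub(T_{i-1}; E_{i-1}) \eqderives \equals{\zap{\intsub\iwsub_{i}(t)}}{\zap{\intsub\iwsub_{i}(u)}}$. Two applications of Lemma~\ref{lem:zap-taunut} (legal since $t, u \in \constst$) turn the conclusion into $\equals{\vintsub\viwsub_{i}(t)}{\vintsub\viwsub_{i}(u)}$, which is $\vintsub\viwsub_{i}(\equals{t}{u})$ by the definition of applying a substitution to an equality.

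The only obligations are the side conditions of the invoked lemmas: that $i-1$ lies in range, that $\intsub\iwsub_{i}(t)$ is a genuine term so that Lemma~\ref{lem:dyzetazap} (stated for ``any term'') applies, and that $\constst$ contains the terms to which Lemma~\ref{lem:zap-taunut} is applied. None of these is an obstacle. All the substantive content --- reducing to typed normal proofs via Theorems~\ref{thm:rustur} and~\ref{thm:rustur-eq}, and tracking zappability through the constructor, destructor, $\rnproj$ and $\rncons$ cases --- is already carried out inside Lemmas~\ref{lem:dyzetazap} and~\ref{lem:eqzetazap}. If there is a ``hard part'' at all, it is only the conceptual one of recognizing that Lemma~\ref{lem:zap-taunut} is precisely the statement bridging $\zap{\intsub\iwsub_{i}(\cdot)}$ and $\vintsub\viwsub_{i}(\cdot)$, and that membership in $\constst$ is exactly what licenses the rewrite.
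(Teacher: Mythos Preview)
Your proposal is correct and is exactly the argument the paper intends: it states that the theorem is an immediate consequence of Lemmas~\ref{lem:zap-taunut},~\ref{lem:dyzetazap} and~\ref{lem:eqzetazap}, and you have simply spelled out how those three lemmas compose. There is nothing to add.
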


Having shown that the $\vlambda$s simulate the $\lambda$s, we next show that they are all bounded.
\begin{restatable}{theorem}{vlambdasmall}\label{thm:vlambda-small}
    For $\lambda \in \{\intsub, \bigsub, \iwsub_{i} \mid i \leq n\}$, $\vlambda$ is $|\stnonvars|$-bounded. 
\end{restatable}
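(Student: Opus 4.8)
Since $\vlambda(x) = \zap{\lambda(x)}$, the goal is $\dagsize{\zap{\lambda(x)}} = |\st(\zap{\lambda(x)})| \le |\stnonvars|$ for every $x \in \dom(\lambda)$. If $\lambda(x)$ is zappable then $\zap{\lambda(x)} = \fixedname$, and if $\lambda(x) \in \vars$ then $\zap{\lambda(x)} = \lambda(x)$, so in both cases the size is $1 \le |\stnonvars|$ (note $\fixedname \in \stnonvars$); hence we may assume $\lambda(x)$ is non-zappable and non-variable. The plan is to build an injection $\Theta : \st(\zap{\lambda(x)}) \to \stnonvars$. Two ingredients underpin it. First, \emph{anchoring of the top}: $\bigsub(\lambda(x)) = \bigsub(x)$ — by Observation~\ref{obs:bigsublambda} when $\lambda \in \{\intsub, \iwsub_i\}$, and by groundness of $\bigsub$ when $\lambda = \bigsub$ — and since every free variable of $\xi$ sits inside a public term added to some $T_j$ or $U_j$, and every bound variable of a $\beta_j$ or $\alpha_j$ is added to some $T_j$ or $U_j$ by $\dc$, we get $\dom(\lambda) \subseteq \dom(\bigsub) \subseteq \varsof(\constst) \subseteq \constst$, hence $\bigsub(\lambda(x)) \in \bigsub(\constst)$. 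Second, the \emph{shape} of zapped terms: unwinding the definition, $\st(\zap{t}) = \{\zap{s} \mid s \in \st(t),\ \text{all proper ancestors of $s$ in $t$ are non-zappable}\}$, and for such an $s$ one has $\zap{s} = \fixedname$ if $s$ is zappable and non-variable, $\zap{s} = s$ if $s$ is a variable, and $\zap{s}$ is a name or a $\func$-term if $s$ is non-zappable and non-variable.

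The heart of the argument is an \emph{anchoring lemma}: if $\bigsub(r) \in \bigsub(\constst)$, then for every subterm $s$ of $r$ all of whose proper ancestors in $r$ are non-zappable, if $s$ is non-zappable and non-variable then $\bigsub(s) \in \bigsub(\stnonvars)$. I would prove this by structural induction on $r$. If $s = r$: then $\bigsub(r) \in \bigsub(\constst)$ and $r$ is non-zappable, so by the second item of Observation~\ref{obs:tzap-iff-uzap}, $\bigsub(r) \in \bigsub(\stnonvars)$. If $s$ is a proper subterm, then $r = \func(r_0, r_1)$, $r$ is non-zappable (it is a proper ancestor of $s$), and $s$ lies in some $r_j$. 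Write $\bigsub(r) = \bigsub(c)$ with $c \in \constst$; since $\bigsub(r)$ is a $\func$-term it is non-atomic, so $c$ cannot be a minimal variable (else $r$ would be zappable), and therefore either $c$ is a non-variable or $\bigsub(c) = \bigsub(w)$ for some non-variable $w \in \stnonvars$. Either way there is a non-variable $w' \in \constst$ with $\bigsub(w') = \bigsub(r)$; thus $w' = \func(w'_0, w'_1)$ with $w'_0, w'_1 \in \constst$ and $\bigsub(r_j) = \bigsub(w'_j) \in \bigsub(\constst)$, so the induction hypothesis applied to $r_j$ yields $\bigsub(s) \in \bigsub(\stnonvars)$.

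Applying the lemma with $r = \lambda(x)$, I would define $\Theta$ on $v \in \st(\zap{\lambda(x)})$ by choosing the leftmost--outermost subterm $s$ of $\lambda(x)$ with $\zap{s} = v$ and setting: $\Theta(v) = \fixedname$ if $s$ is zappable and non-variable; $\Theta(v)$ a fixed preimage in $\stnonvars$ of $\bigsub(s)$ (which exists by the lemma) if $s$ is non-zappable and non-variable; and deferring the case of variable $s$ to the next paragraph. Since $\fixedname \in \stnonvars$ is not zappable (otherwise $\bigsub(\fixedname) = \fixedname = \bigsub$ of the $\stnonvars$-element $\fixedname$ would contradict minimality of the alleged witness), $\Theta$ lands in $\stnonvars$. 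When $\lambda \in \{\intsub, \bigsub\}$, $\lambda(x)$ is ground, so every chosen $s$ is ground and non-variable with $\bigsub(s) = s$; then $\Theta(v) = \Theta(v')$ forces $\bigsub(s) = \bigsub(s')$, hence $s = s'$ and $v = \zap{s} = \zap{s'} = v'$, so $\Theta$ is injective — establishing the bound for $\vintsub$ and $\vbigsub$.

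The case $\lambda = \iwsub_i$ is where the work lies, because $\iwsub_i(x)$ may legitimately contain eigenvariables (those in $\varsof(\intsub(T_{i-1})) \subseteq \bigcup_{j<i}(\dom(\iwsub_j) \cup \dom(\hwsub_j))$), so the chosen $s$ can be a variable $v \in \qvar$ with $\zap{v} = v \notin \stnonvars$. The plan is to route each surviving eigenvariable through its non-zappable non-variable parent $\hat v$ in $\iwsub_i(x)$: the anchoring lemma attaches to $\hat v$ a non-variable $w = \func(w_0, w_1) \in \stnonvars$ with $\bigsub(w) = \bigsub(\hat v)$, and one sends $v$ to the matching component, using the typed normal derivation of $\intsub(T_{i-1}) \DYderives \iwsub_i(x)$ (Theorem~\ref{thm:rustur}) and its subterm property to pin down positions. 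The main obstacle is making this assignment simultaneously well-defined and injective together with the $\stnonvars$-valued map already built for the non-variable part: a priori several eigenvariables could map to components lying in one $\bigsub$-class, or to components that are themselves variables, and excluding these requires carefully leaning on minimality of the variables involved and on the typedness of the derivation (every subterm not freshly assembled by the intruder carries a $\stnonvars$-type). I expect essentially all the technical weight to concentrate in this eigenvariable bookkeeping; the ground cases and the anchoring lemma itself are routine once the definitions are unwound.
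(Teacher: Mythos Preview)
Your anchoring lemma and the injection $\Theta$ are correct for the ground substitutions $\intsub$ and $\bigsub$, and that portion runs parallel to the paper's inductive argument. The divergence is in the $\iwsub_{i}$ case, and there your proposal has a genuine gap.

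The paper does not treat the three families of $\lambda$ separately. Its opening move is to observe $\vbigsub(\vlambda(x)) = \vbigsub(x)$ for every $\lambda$ under consideration, and from this to conclude $\dagsize{\vlambda(x)} \le \dagsize{\vbigsub(x)}$; after that it only needs to bound $\vbigsub$, which is ground. The remaining claim --- for $t \in \constst$, $\st(\vbigsub(t)) \subseteq \vbigsub(\stnonvars)$ --- is proved by a short induction on $|\vbigsub(t)|$, close in spirit to your anchoring lemma, and yields $\dagsize{\vbigsub(t)} \le |\vbigsub(\stnonvars)| \le |\stnonvars|$ immediately. The eigenvariables that trouble you never surface, since $\bigsub(x)$ is already ground.

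Your $\iwsub_{i}$ sketch, by contrast, does not close. Take $\iwsub_{i}(x) = (y,z)$ with $y,z$ distinct minimal eigenvariables and $(y,z)$ non-zappable (arrange this by having $(y,z) \in \stnonvars$, which makes $x$ non-minimal). Then $\viwsub_{i}(x) = (y,z)$ has three distinct subterms, two of them variables. Your plan routes the surviving eigenvariable $y$ through its parent $(y,z)$ to ``the matching component'' of some $w' = \func(w'_{0},w'_{1}) \in \stnonvars$ with $\bigsub(w') = \bigsub((y,z))$. But minimality of $y$ says precisely that no element of $\stnonvars$ shares a $\bigsub$-image with $y$, so $w'_{0}$ is forced to be a variable and hence lies outside $\stnonvars$: you have no target for $y$. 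Appealing to typed proofs does not help --- typing supplies $\intsub(\stnonvars)$-anchors for non-variable subterms, but a bare eigenvariable is exactly the case where no such anchor exists. The idea you are missing is the reduction to $\vbigsub$, which collapses all such eigenvariables to $\fixedname$ before you ever have to count them.
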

\begin{proof}
For each $\lambda$ and any $x$, $\vbigsub(\vlambda(x)) = \vbigsub(x) = \zap{\bigsub(x)}$ (by Definition~\ref{def:vlambda}) and thus, $\dagsize{\vlambda(x)} \leq \dagsize{\vbigsub(x)}$. So it suffices to prove a bound on $\dagsize{\vbigsub(x)}$. We show that for $t \in \constst$, $\st(\vbigsub(t)) \subseteq \vbigsub(\stnonvars)$. Note that if $t = x$ is non-minimal, there is an $r \in \stnonvars$ s.t.\ $\vbigsub(t) = \vbigsub(r)$. Thus it suffices to prove the statement for $t$ which is either a minimal variable or in $\stnonvars$.

The proof is by induction on $|\vbigsub(t)|$. 
\begin{itemize}[leftmargin=*]
\item $|\vbigsub(t)| = 1:$ $\vbigsub(t) \in \names$. So $t \in \names$ or a minimal variable. If $t \in \names$, $\vbigsub(t) = t\in \names$. Otherwise, $\vbigsub(t) = \fixedname$. In both these cases, $\st(\vbigsub(t)) \subseteq \vbigsub(\stnonvars)$.
\item $|\vbigsub(t)| > 1:$ Let $a \in \st(\vbigsub(t))$. If $a = \vbigsub(u)$ for some $u \in \st(t) \setminus \varsof(t)$, then $a \in \vbigsub(\stnonvars)$. If $a = \vbigsub(x)$ for some minimal $x \in \varsof(t)$, then $a = \fixedname = \vbigsub(\fixedname) \in \vbigsub(\stnonvars)$. If $a \in \st(\vbigsub(x))$ for non-minimal $x \in \varsof(t)$, then $x \neq t$, and there is an $r \in \stnonvars$ s.t.\ $\vbigsub(x) = \vbigsub(r)$, and $a \in \st(\vbigsub(r))$. Since $|\vbigsub(r)| < |\vbigsub(t)|$, by IH, $\st(\vbigsub(r)) \subseteq \vbigsub(\stnonvars)$. Thus $a \in \vbigsub(\stnonvars)$. 
\end{itemize}
Hence, $\dagsize{\vbigsub(t)} \leq |\vbigsub(\stnonvars)| \leq |\stnonvars|$, for $t \in \constst$.
\end{proof}

%

\subsection{NP algorithm for Insecurity: Sketch}
After guessing a coherent set of sessions and an interleaving of these sessions of length $n$, we guess bounded substitutions $\vintsub$ and $\viwsub_{1}, \ldots, \viwsub_{n}$, as well as a sequence of knowledge functions such that the relevant equalities and terms (communicated in the $\vintsub(\beta_{i})$s) are derivable from $\vintsub(\dc(\knowfunc_{i-1}(I)))$. All these derivability checks can be carried out in time polynomial in the size of the protocol description. 

For the honest agent sends, we only require derivabilities of the form $\knowfunc_{i}(u_{i}) \assderives \alpha_{i}$, without applying any substitution. This is the derivability problem, which we proved to be solvable in NP. Finally, we check that $\vintsub(\knowfunc_{n}(I)) \assderives \vintsub(\gamma)$, which can also be solved in NP. Thus the $K$-bounded insecurity problem for assertions is in NP.


\section{Applications and Extended Syntax}\label{sec:fullsyntax}
The Dolev-Yao model with formulas can be used for modelling real-world protocols, as in~\cite{MPR13} and \cite{RSS17}. We now introduce a variant of the model from~\cite{RSS17}, and show how to encode the well-known FOO e-voting protocol~\cite{FOO92} in that system. Finally, we discuss how our results can be extended to this system.

\subsection{Modelling the FOO protocol {\`a} la \texorpdfstring{\cite{RSS17}}{RSS17}}
In~\cite{RSS17}, the authors allow principals to communicate a richer class of assertions over an extended syntax. The syntax considered there  includes atomic predicates, equality, conjunction, existentially quantified assertions, disjunction, and a$\says$connective. We will consider the same syntax, but instead of full disjunction, we will consider list membership (denoted by $\listmemb$), as this suffices for most examples. In the following, $t,u \in \Tterms$, $P$ is an $m$-ary predicate, $u_{1}, \ldots, u_{m}, t_{0} \in \names \cup \vars$, and $t_{1}, \ldots, t_{n} \in \names$, $x \in \qvar$, and $\pk_{A}$ denotes the public key of the agent $A \in \ag$. 
{
  \begin{align*}
      \alpha &:= \equals{t}{u} \mid P(u_{1}, \ldots, u_{m}) \mid t_{0} \listmemb [t_{1}, \ldots, t_{n}] \\
      & \hspace{7mm} \mid \alpha_{0} \conj \alpha_{1} \mid \exists x.~\alpha(x) \mid \pk_{A} \says \alpha
  \end{align*}
}
The extra rules required for this syntax are shown in Table~\ref{tab:fullasstheory}. 

One can model interesting protocols in this language, including the FOO e-voting protocol~\cite{FOO92, KR05}. In this protocol, there is a voter $V$, an authority $A$ who verifies the eligibility of voters but should not know their votes, and a collector $C$ who should not know voters' identities but counts all votes. 

To model this using only terms, a new operator called \emph{blinding} is used. Signing a blinded object allows the signature to percolate through to the object inside the blind. Formally, one can use $t$ and $b$ to make a blind pair $\blind(t, b)$, and get $\sign(t, k)$ from $\sign(\blind(t, b), k) $ and $b$. The voter uses the blinding operation to hide their vote from the authority but still get it certified as coming from an authorized voter (as identified by their signing key $\sk_{V}$). The authority's signature $\sign(\cdot, \sk_{A})$ percolates through to the vote when the voter removes the blind, and the voter can then anonymously send (denoted by $\looparrowright$) this signed vote to the collector for inclusion into the final tally. This specification is shown below.
\begin{align*}
    V \rightarrow A &: \sign(\blind(\{v\}_{r}, b), \sk_{V}) \\
    A \rightarrow V &: \sign(\blind(\{v\}_{r}, b), \sk_{A}) \\
    V \looparrowright C &: \sign(\{v\}_{r}, \sk_{A}) 
\end{align*}


One can use assertions to model the voting phase of FOO as below, following~\cite{RSS17} (We use $\{ \alpha \}^{A}$ to serve as shorthand for $\pk_{A}\says\alpha$). In fact, the use of assertions allows one to also specify an eligibility check for voters via a conditional action $\assertact$, which allows the protocol to proceed only if the specified assertion is derivable by the agent in question. Further, using list membership, voters can also include a certificate that their vote is for an allowable candidate from the list $\ell$. These are left implicit in the terms-only modelling. 
\begin{center}
{\small
\vspace{-1.8em}
\begin{align*}
        V \rightarrow A &: \{v\}_{p}, \bigl\{\exists xr.\equals{\{x\}_{r}}{\{v\}_{p}} \conj x \listmemb \ell \bigr\}^{V} \\
        A &: \assertact~\elg(V) \\
        A \rightarrow V &: \bigl\{ \elg(V) \conj \bigl\{\exists xr.\equals{\{x\}_{r}}{\{v\}_{p}} \conj x \listmemb \ell \bigr\}^{V} \bigr\}^{A} \\
        V \looparrowright C &: \{v\}_{q}, \exists Uys.\bigl\{ \elg(U) \conj \bigl\{\exists xr.\equals{\{x\}_{r}}{\{y\}_{s}} \conj x \listmemb \ell \bigr\}^{U} \bigr\}^{A} \\
        & \hspace{5mm} \conj \bigl\{\exists w.\equals{\{y\}_{w}}{\{v\}_{q}}\bigr\}
\end{align*}
}
\end{center}

$V$ first sends to $A$ their encrypted vote along with an assertion claiming that it is for a valid candidate from the list $\ell$. The authority checks the voter's eligibility via the $\assertact$ action on the $\elg$ predicate. If the check passes, the authority issues a certificate stating that the voter is allowed to vote, crucially, without modifying the term containing the vote. $V$ then existentially quantifies out their name from this certificate, and anonymously sends to $C$ a re-encryption of the vote authorized by $A$ along with a certificate to that effect. Thus, the intent behind the various communications is made more transparent than in the model with blind signatures. One can show that this satisfies anonymity~\cite{RSS17}.

We can also specify security properties in a more natural manner (as compared to in the terms-only model). For instance, we say that \emph{vote secrecy} is ensured in the above protocol if there is no run where the intruder can derive $\exists{xy}:[\{v\}_{p} = \{x\}_{y} \wedge x = v]$. Note that this means that while anyone can derive the value of $v$, which is public, they should not be able to identify the value inside the encrypted vote $\{v\}_{p}$ as being a particular public name. To express this in the terms-only formulation, one has to check whether two runs that only differ in the vote $v$ can be distinguished by the intruder~\cite{CK14}. It can be seen from~\cite{RSS17} that the proving such properties might involve considering multiple runs simultaneously, but the specification itself does not refer to a notion of equivalence. 

\begin{example}\label{ex:disje}
Consider a protocol where $V$ sends to $A$ the vote encrypted in a fresh key, and an assertion that the vote belongs to an allowable list $\ell$ of candidates. This looks as follows. $V \rightarrow A : \{v\}_{k}, \exists xr.\bigl\{ \equals{\{x\}_{r}}{\{v\}_{k}} \conj x \listmemb \ell \bigr\}$.

Suppose this same protocol is used for two elections that $V$ participates in simultaneously, where the first election has candidates $0$ and $1$ (so $\ell_{1} = [0, 1]$) and the second has candidates $0$ and $2$ (so $\ell_{2} = [0, 2]$). 

$V$ wants to vote for $0$ in both elections. Since the vote is for the same candidate, $V$ (unwisely) decides to reuse the same term, instead of re-encrypting in a fresh key. So we have a run where $V$ sends both $\exists xr.\bigl\{ \equals{\{x\}_{r}}{\{v\}_{k}} \conj x \listmemb [0, 1] \bigr\}$ and $\exists ys.\bigl\{ \equals{\{y\}_{s}}{\{v\}_{k}} \conj y \listmemb [0, 2] \bigr\}$. Now, since the same term $\{v\}_{k}$ is involved in both assertions, an observer ought to be able to deduce that the vote is actually for $0$. 

Let $S$ be the set $\{ \{v\}_{k}, 0, 1, 2 \}$ and let $A$ consist of the above two assertions sent out by $V$. $\dc(S; A) = (T; E)$ where $T = \{ \{v\}_{k}, 0, 1, 2, x, r, y, s \}$ and $E = \bigl\{ \equals{\{x\}_{r}}{\{v\}_{k}}, x \listmemb [0, 1], \equals{\{y\}_{s}}{\{v\}_{k}}, y \listmemb [0, 2] \bigr\}$. We present a proof of $T; E \vdash \exists zw.\bigl\{\equals{\{z\}_{w}}{\{v\}_{k}} \conj \equals{z}{0}\bigr\}$ in Figure~\ref{fig:exdisje}. We omit the LHS as well as some $\DYderives$ proofs for readability.
\end{example}

\begin{figure*}[!t]
	\centering {\footnotesize
		\begin{math}
              \begin{prooftree}
              \[
                \[
                	\justifies \equals{\{x\}_{r}}{\{v\}_{k}} \using \rnax
                \]
                \[
                	\[
              		\[
              			\justifies x \listmemb [0, 1] \using \rnax
              		\]
              		\[
                        \[ 
              		        \justifies y \listmemb [0, 2] \using \rnax
                        \]
                        \[
              	            \[
              		            \[
              				        \justifies \equals{\{y\}_{s}}{\{v\}_{k}} \using \rnax
              		            \]
              		            \[
						            \[
							            \justifies \equals{\{x\}_{r}}{\{v\}_{k}} \using \rnax
						            \]
              				        \justifies \equals{\{v\}_{k}}{\{x\}_{r}} \using \rnsymm
              		            \]
              		            \justifies \equals{\{y\}_{s}}{\{x\}_{r}} \using \rntrans
              	            \]
              	            \justifies \equals{y}{x} \using \rnproj
                        \]
                        \justifies x \listmemb [0, 2] \using \rnsubst
              		\]
              		\justifies x \listmemb [0] \using \rnlint
              	\]
                	\justifies \equals{x}{0} \using \rnprom 
                \]
                 \justifies \equals{\{x\}_{r}}{\{v\}_{k}} \conj \equals{x}{0} \using \rnconji
              \]
              \justifies \exists zw.\bigl\{\equals{\{z\}_{w}}{\{v\}_{k}} \conj \equals{z}{0}\bigr\} \using \rnexintro^{2}
              \end{prooftree}
           \end{math}
 }
\caption{Proof for Example~\ref{ex:disje}. $\rnexintro^{2}$ indicates two applications of $\rnexintro$.} 
\label{fig:exdisje}
\end{figure*}

\subsection{Adapting the results of Section~\ref{sec:derivability} and Section~\ref{sec:insecurity}}
We can extend the definition of $\positions{\alpha}$ in a straightforward manner. 
%
The definition of $\abstractable(S,\alpha)$ is given as follows.
	\begin{itemize}
		\item $\abstractable(S, \equals{t_{0}}{t_{1}}) = \{i\cdot p \mid i \in \{0, 1\}, \ p \in \abstractable(S,t_{i})\}$
		\item $\abstractable(S, P(u_{1}, \ldots, u_{m})) = \{i \mid 1 \leq i \leq m, S \DYderives u_{i}\}$
		\item $\abstractable(S, t \listmemb [t_{1}, \ldots, t_{n}]) = \{0\}$
		\item $\abstractable(S, \alpha_{0}\conj\alpha_{1}) = \{i\cdot p \mid i \in \{0, 1\}, \ p \in \abstractable(S,\alpha_{i})\} $
		\item $\abstractable(S, \exists{x}.\alpha) = \{0\cdot p \mid p \in \abstractable(S \cup \{x\},\alpha) \}$
		\item $\abstractable(S, \pk_{a} \says \alpha) =  \{0\} \cup \{1\cdot p \mid p \in \abstractable(S, \alpha)\}$
	\end{itemize}


We now use $\eqderives$ to mean all the rules in Tables~\ref{tab:asstheory} and \ref{tab:fullasstheory}, except $\rnconji, \rnconje, \rnexintro,$ and $\rnexelim$.

\begin{table}[!t]
	\centering {\small
        \tabulinesep=2mm
        \setlength{\tabcolsep}{0.6em}
        \begin{tabu}{|c|c|}
            \hline
			\begin{math}
                {
                    \begin{prooftree}
                        S;A \vdash t \listmemb [n]
                        \justifies S;A \vdash \equals{t}{n} \using \rnprom
					\end{prooftree}
                }
			\end{math}
			&
			\begin{math}
                {
                    \begin{prooftree}
                        S;A \vdash \equals{t}{n_{i}}
                        \justifies S;A \vdash t \listmemb [n_{1}, \ldots, n_{k}] \using \rnlwk
					\end{prooftree}
                }
			\end{math}
			\\
			\hline
			\multicolumn{2}{|c|}{
			\begin{math}
                {
                    \begin{prooftree}
                        S;A \vdash t \listmemb l_{1} \quad \cdots \quad S;A \vdash t \listmemb l_{m}
                        \justifies S;A \vdash t \listmemb (l_{1} \cap \ldots \cap l_{m}) \using \rnlint
					\end{prooftree}
                }
			\end{math}	
			}
			\\
			\hline 
			\begin{math}
                {
                    \begin{prooftree}
                        S;A \vdash t\listmemb\ell \quad S;A \vdash \equals{t}{u}
                        \justifies S;A \vdash u\listmemb\ell \using \rnsubst
					\end{prooftree}
                }
			\end{math}
			&
			\begin{math}
                {
                    \begin{prooftree}
                        S;A \vdash \alpha \quad S \DYderives \sk_{a}
                        \justifies S;A \vdash \pk_{a} \says \alpha \using \rnsays
					\end{prooftree}
                }
			\end{math}
			\\
			\hline
			\begin{math}
                {
					\begin{prooftree}
						S;A \vdash \alpha_{0} \quad S;A \vdash \alpha_{1}
						\justifies S;A \vdash \alpha_{0} \conj \alpha_{1} \using \rnconji
					\end{prooftree}
                }
			\end{math}
			 &
			\begin{math}
                {
					\begin{prooftree}
						S;A \vdash \alpha_{0} \conj \alpha_{1}
						\justifies S;A \vdash \alpha_{i} \using \rnconje_{i}
					\end{prooftree}
                }
			\end{math}
			\\
			\hline
		\end{tabu}
	}
	\caption{
        Extra rules for $\conj$, \textit{says} and lists. For the $\rnsays$ rule, $(\pk_{a},\sk_{a})$ is the public-private key pair of $a \in \ag$.}
	\label{tab:fullasstheory}
\end{table}

The notion of kernel and the conditions in Theorem~\ref{thm:eqs-to-gamma} also change. Earlier, an assertion had the form $\exists{x_{1}\dots{}x_{k}}.\equals{t}{u}$, and we reduced all reasoning to assertions of the form $\equals{t}{u}$, which were maximal subformulas without a logical connective for the outermost operator. We can think of those as the ``atoms'' of a formula, and we use this idea to adapt the definition of $\dc$. $(T; E) = \dc(S; A)$ iff: 
\begin{itemize}
    \item $T = S \cup \boundvars(A)$
    \item $A = \{\alpha \in \SF(A) \mid \alpha$ is of the form $\equals{t}{u}$ or $t\listmemb{l}$ or $\pk(k)\says\beta$ or $P(u_{1}, \ldots, u_{m})\}$.
\end{itemize}

For Theorem~\ref{thm:eqs-to-gamma}, one would expect that deriving $\alpha$ reduces to deriving substitution instances of all atoms of $\alpha$. But consider subformulas of the form $\pk_{a}\says\beta$. We can derive those in two ways -- either by using $\rnax$ (if the formula is already in the LHS) or by using the $\rnsays$ rule on $\beta$ and $\sk_{a}$. In the latter case, one would look to derive the atoms of $\beta$. We thus formalize the atoms of a formula as below.
\[
\atomsof(\gamma) = \begin{cases}
\atomsof(\alpha) \cup \atomsof(\beta) & \text{if $\gamma = \alpha\conj\beta$} \\
\atomsof(\alpha) & \text{if $\gamma = \exists{x}.\alpha$} \\
\left\{\pk(k)\says\alpha\right\} \cup \atomsof(\alpha) & \text{if $\gamma = \pk(k)\says\alpha$} \\
\{\gamma\} & \text{otherwise}
\end{cases}
\]


Theorem~\ref{thm:eqs-to-gamma} is modified as follows.
\begin{theorem}
For a formula $\alpha$ s.t.\ $\boundvars(\alpha) \cap \varsof(S;A) = \emptyset$, and $(T;E) = \dc(S;A)$, $(S;A) \assderives \alpha$ iff there is a $\mu$ with $\dom(\mu) = \boundvars(\alpha)$ and $X \subseteq \atomsof(\alpha)$ s.t.: 
\begin{enumerate}[label={[\arabic*]}]
    \item $\forall{}x \in \boundvars(\alpha): T \DYderives \mu(x)$.
    \item $\forall{}x \in \boundvars(\alpha), t \in \st(\alpha)$: $\posof{x}{t} \subseteq \abstractable(T \cup \boundvars(\alpha), t)$.
    \item For all $\beta \in X$, $(T;E) \eqderives \mu(\beta)$.  
    \item $(T;\mu(X)) \assderives \alpha$ via a proof with only introduction rules. 
\end{enumerate} 
\end{theorem}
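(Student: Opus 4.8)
The plan is to follow the architecture of Theorem~\ref{thm:eqs-to-gamma}, generalising its ``peel off the existential prefix'' step to ``peel off the positive logical skeleton of $\alpha$ built from the introduction rules $\rnconji$, $\rnexintro$ and $\rnsays$''. First I would use (the extended version of) Lemma~\ref{lem:breakdownass} to reduce $(S;A)\assderives\alpha$ to $(T;E)\assderives\alpha$ with $(T;E)=\dc(S;A)$ via a normal proof; since $(T;E)$ is pure, its assertion component $E$ contains only atomic-shaped formulas. The key structural observation is that in a normal proof from such a pure context a conjunction can only be the conclusion of $\rnconji$ and an existential only the conclusion of $\rnexintro$ (no other rule concludes one, and none occurs in $E$), so $\rnconje$ and $\rnexelim$ cannot occur at all --- either would stand immediately above such a rule and form a redex. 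Hence every normal proof of $\alpha$ splits into a \emph{skeleton} of $\rnconji$/$\rnexintro$/$\rnsays$ steps, each of which leaves the LHS unchanged, sitting over a \emph{frontier} of subproofs that use only $\eqderives$ rules and whose conclusions are $\mu$-instances of atoms of $\alpha$, where $\mu$ collects the witnesses the $\rnexintro$ steps supply.

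For the ``only if'' direction I would read off $\mu$ from the $\rnexintro$ witnesses and let $X$ be the set of atoms occurring on the frontier. At a $\says$ node the proof may either stop via $\rnax$, placing $\pk(k)\says\beta$ into $X$, or continue via $\rnsays$ into $\beta$; this choice is exactly why $X$ is only required to be a \emph{subset} of $\atomsof(\alpha)$, whereas $\listmemb$-atoms, predicate atoms and equality atoms always stop. Conditions [1] and [2] are precisely the accumulated $\rnexintro$ side conditions, where I would invoke the extended forms of Lemma~\ref{lem:leftex} and Lemma~\ref{lem:absderiv} to turn ``abstractable one step at a time'' into ``abstractable w.r.t.\ $T\cup\boundvars(\alpha)$'' uniformly over $\st(\alpha)$. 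Condition [3] records that each frontier subproof is an $\eqderives$ proof from $(T;E)$ of the relevant $\mu(\beta)$, $\beta\in X$. Condition [4] is obtained by keeping the skeleton and attaching fresh $\rnax$-leaves for the frontier atoms: since the skeleton rules fix the LHS and their side conditions (derivability of witnesses and of $\sk_{a}$, abstractability) are [1]--[2] together with what already held in the proof, this is an introduction-only proof of $(T;\mu(X))\vdash\alpha$.

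For the ``if'' direction, given $\mu$ and $X$ satisfying [1]--[4], I would take the introduction-only proof of $(T;\mu(X))\vdash\alpha$ from [4] and, for each $\beta\in X$, the $\eqderives$ proof of $(T;E)\vdash\mu(\beta)$ from [3] (recall $\eqderives$-proofs are $\assderives$-proofs). Relabel the LHS of the introduction-only proof to $(T;E)$ and replace every $\rnax$-leaf on a formula of $\mu(X)$ by the corresponding $\eqderives$ subproof; the $\rnexintro$ and $\rnsays$ steps remain valid because their side conditions depend only on the term part $T$, which is unchanged, and are supplied by [1] (for $T\DYderives\mu(x)$), [2] (for abstractability) and [4] (for $T\DYderives\sk_{a}$). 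This yields $(T;E)\assderives\alpha$, and (the extended) Lemma~\ref{lem:breakdownass} gives $(S;A)\assderives\alpha$.

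The main obstacle is the structural step: one must establish a normalisation theorem for the full $\assderives$ system that removes the $\rnconji$--$\rnconje$ and $\rnexintro$--$\rnexelim$ redexes on top of the $\DYderives$ and $\eqderives$ normal forms already available (the analogue of Theorem~\ref{thm:normeq}), and then argue that over a pure context this forces the ``introduction-skeleton over an $\eqderives$-frontier'' shape. Getting this right requires care at the $\says$ nodes (two legal moves, $\rnax$ versus $\rnsays$) and a clear separation of roles: $\rnlwk$, $\rnprom$, $\rnlint$ and $\rnsubst$ may act on $\listmemb$-atoms but belong to the $\eqderives$ frontier, not to the skeleton, consistently with $\atomsof$ not decomposing $\listmemb$-atoms. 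A further, mostly routine, burden is re-establishing the abstractability bookkeeping (Lemmas~\ref{lem:leftex} and~\ref{lem:absderiv}) for the richer $\positions{\alpha}$ and $\abstractable(S,\alpha)$ of the extended syntax, so that the per-step conditions of $\rnexintro$ aggregate cleanly into condition [2].
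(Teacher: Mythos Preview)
The paper does not actually supply a proof of this extended theorem; it merely states it as the analogue of Theorem~\ref{thm:eqs-to-gamma} for the richer syntax and defers normalisation/subterm details to Appendix~\ref{app:normalization}. Your proposal is exactly the natural generalisation of the paper's proof of Theorem~\ref{thm:eqs-to-gamma}: pass to the kernel via (an extension of) Lemma~\ref{lem:breakdownass}, observe that over a pure $E$ the proof of $\alpha$ decomposes into an introduction-only skeleton ($\rnconji$, $\rnexintro$, $\rnsays$) over an $\eqderives$ frontier, read off $\mu$ from the $\rnexintro$ witnesses and $X$ from the frontier, and use Lemma~\ref{lem:absderiv} to aggregate the per-step abstractability conditions into [2]. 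Your treatment of the $\says$ case (two legal stopping points, hence $X\subseteq\atomsof(\alpha)$ rather than $X=\atomsof(\alpha)$) and your placement of $\rnlwk$, $\rnlint$, $\rnprom$, $\rnsubst$ on the $\eqderives$ side are both correct and match the paper's intent.

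One point to tighten: the sentence ``a conjunction can only be the conclusion of $\rnconji$ \ldots\ so $\rnconje$ and $\rnexelim$ cannot occur at all --- either would stand immediately above such a rule and form a redex'' is circular as written, since $\rnconje$ and $\rnexelim$ can themselves conclude conjunctions and existentials. The clean argument is either (i) strengthen the extended Lemma~\ref{lem:breakdownass} to eliminate both $\rnexelim$ and $\rnconje$ when passing to the kernel (this is straightforward, mirroring Lemma~\ref{lem:leftex} for $\wedge$), after which your observation becomes immediate; or (ii) argue by well-founded induction on the proof: the major premise of any $\rnconje$/$\rnexelim$ is a conjunction/existential, which cannot come from $\rnax$ (pure $E$), cannot come from the matching intro rule (redex), and so must come from another elimination, yielding an infinite ascending chain. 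Either route closes the gap; the rest of your plan goes through.
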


We can extend the normalization and subterm properties appropriately (details provided in Appendix~\ref{app:normalization}). 

Very few changes are required to adapt the results of Section~\ref{sec:insecurity} to this new system to show that insecurity for this extended system continues to be in NP. In fact, one can obtain bounded substitutions for the passive and active intruder problems exactly as for equality assertions. Only atomic terms occur in assertions of the form $P(\cdots)$ and $t\listmemb{l}$, thus no variable standing for a term occurring in these assertions is zappable. Similarly, for assertions of the form $\pk_{a}\says\alpha$, $\pk_{a}$ is not zappable. However, there might be terms inside $\alpha$ which are zappable, but these will be at the level of equality subformulas of $\alpha$, and the \textit{says} connective itself does not influence the zapping procedure in any way.

So the extended syntax interferes very little with our proof strategy for finding bounded substitutions. Thus, we have a practical language that can be used to model protocols and properties, and which also enjoys nice decidability properties. 

A final note on atomic predicates of the form $P(\cdots)$: There are no rules governing these, other than $\rnax$. We can assume that depending on the context, ground assertions of this kind are added and removed from agents' knowledge states. For example, in a voting protocol, one can add to the authority's initial state all predicates of the form $\elg(V)$ for eligible voters $V$. As and when the authority receives a message from $V$ and checks that $\elg(V)$ is true, it removes $\elg(V)$ from its knowledge state, recording the fact that $V$ has voted and is no longer eligible. Our decidability proofs are not affected by these modifications.

\section{Future Work}\label{sec:disc}
An interesting feature of the language in~\cite{RSS17} is the use of disjunction. While our extended syntax uses list membership to express a limited form of disjunction that seems to suffice for many protocols, it would be worthwhile to explore the utility of full disjunction and its effect on the active intruder problem. 

In fact, with disjunction, we know that even the derivability problem becomes more involved. To check if $(S;A) \assderives \gamma$, one can no longer work with a single kernel of $(S;A)$. We can define a notion of ``down-closure''. For each disjunctive formula $\alpha\disj\beta$, one obtains two down-closures -- one containing $\alpha$, and the other $\beta$. In general, many disjunctions could occur in $A$ and there are exponentially many down-closures for any $(S; A)$. Using the standard left disjunction property ($\alpha\disj\beta$ derives $\gamma$ iff $\gamma$ is derivable from $\alpha$ and from $\beta$), we check if the kernels of all down-closures of $(S;A)$ derive $\gamma$. Thus the derivability problem is in $\Pi_{2}$. Some of these down-closures need not even be consistent, and hence our solution for the insecurity problem is not directly adaptable to full disjunction. Exploring these issues is an interesting direction of research and is left for future work. 

It is also useful to add communicable assertions to the widely-used applied pi calculus~\cite{ABF17}. It would be especially interesting to see how this impacts the notion of static equivalence, and then study expressibility and decidability. These would also help us to extend tools like Proverif~\cite{Bla16} with assertions.

As mentioned earlier, one can express certain ``equivalence'' properties in a more natural manner with assertions as compared to the terms-only model. It is another promising extension to study which equivalence properties can be expressed as reachability properties in this manner, like the work in~\cite{GMV22}.

\bibliographystyle{plain}

\newpage
\appendix
\section{Proofs for Sections~\ref{sec:dolevyao} and~\ref{sec:derivability}}
\label{app:derivability}

\absderiv*
\begin{proof}
	For any term $a$ and any set $Q \subseteq \positions{a}$, we let $\subtermat{a}{Q}$ denote $\{\subtermat{a}{q} \mid q \in Q\}$. We now observe some general properties of abstractability. 
	
	For any $T, a$ and $q \in \abstractable(T,a)$ s.t.\ $\subtermat{a}{q}$ is non-atomic, either $\{q0, q1\} \subseteq \abstractable(T,a)$ and $\subtermat{a}{\{q0,q1\}} \DYderives \subtermat{a}{q}$ via a constructor rule, or $q$ is a maximal position in $\abstractable(T,a)$ (it is not the prefix of any other position in the set). We have the following two properties.
\begin{enumerate}
\item Let $M = \{q \in \positions{a} \mid q$ is a maximal position in $\abstractable(T,a)\}$. Then for every $p \in \abstractable(T,a)$, $\subtermat{a}{M} \DYderives \subtermat{a}{p}$ via a proof consisting only of constructor rules. 
\item\label{item:prefsib} Suppose $Q \subseteq \positions{a}$ is prefix-closed (if $q \in Q$ and $p$ is a prefix of $q$, then $p \in Q$) and sibling-closed (if $qi \in Q$ and $qj \in \positions{a}$, then $qj \in Q$). If $T \DYderives \subtermat{a}{q}$ for every maximal $q \in Q$, then $Q \subseteq \abstractable(T,a)$. 
\end{enumerate}
		 
    We now prove the statement of the lemma. Let $u = \replsubtermat{t}{P}{r}$, and let $A$ and $B$ denote $\abstractable(S\cup\{x\},t)$ and $\abstractable(S,u) \cap \positions{t}$ respectively. Note that $A$ and $B$ are both prefix-closed and sibling-closed. Let $M$ (resp.\ $N$) be the set of maximal positions in $A$ (resp.\ $B$). 
    
    Since $P \subseteq A$ is the set of $x$-positions in $t$, $P \subseteq M$ and no $q \in M$ is a prefix of a position in $P$. Thus, for every $q \in M$, either $\subtermat{t}{q} = x$, or $x \notin \varsof(\subtermat{t}{q})$. If $\subtermat{t}{q} = x$, $\subtermat{u}{q} = r$, and $S \DYderives \subtermat{u}{q}$ (since $S \DYderives r$). If $x \notin \varsof(\subtermat{t}{q})$, then $\subtermat{u}{q} = \subtermat{t}{q}$ and $S \DYderives \subtermat{u}{q}$. This is because $q \in \abstractable(S\cup\{x\},t)$, so $S \cup \{x\} \DYderives \subtermat{t}{q}$, but $x$ does not occur in the conclusion. Thus we have $S \DYderives \subtermat{u}{q}$ for every $q \in M$. Since $A$ is prefix-closed and sibling-closed, by~\ref{item:prefsib}, we get $A \subseteq \abstractable(S,u)$. Since $A \subseteq \positions{t}$ as well, we get $A \subseteq B$. 
    
	By similar reasoning as above, we can see that $S \cup \{x\} \DYderives \subtermat{t}{q}$ for each $q \in N$. (For some of these positions $q$, $x$ does not occur at all in the subterm at that position, and $\subtermat{t}{q} = \subtermat{u}{q}$ is derivable from $S$. For other positions $q$, $\subtermat{t}{q} = x$ and is derivable from $S \cup \{x\}$.) Therefore $B \subseteq A$. 
\end{proof}

\begin{lemma}
The $\rnsubst$ rule is admissible in $\assderives$.
\end{lemma}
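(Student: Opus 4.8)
The plan is to show that every application of $\rnsubst$ can be replayed using only $\rneq$, $\rnsymm$, $\rntrans$ and $\rncons$ together with the rule's two premises. Fix such an application: a proof of $(S;A)\assderives\equals{t^{r}}{u^{r}}$ and a proof of $(S;A)\assderives\equals{r}{s}$, where $t^{r}\coloneqq\replsubtermat{t}{P_{t}}{r}$, $u^{r}\coloneqq\replsubtermat{u}{P_{u}}{r}$ and the position set $P$ decomposes as $\{0p\mid p\in P_{t}\}\cup\{1p\mid p\in P_{u}\}$; write $t^{s}\coloneqq\replsubtermat{t}{P_{t}}{s}$ and $u^{s}\coloneqq\replsubtermat{u}{P_{u}}{s}$, so the conclusion to be derived is $(S;A)\assderives\equals{t^{s}}{u^{s}}$. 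Unwinding Definition~\ref{def:assabs} for the quantifier-free assertion $\equals{t}{u}$, the side conditions say $S\DYderives\{r,s\}$, $P_{t}=\posof{x}{t}\subseteq\abstractable(S\cup\{x\},t)$ and $P_{u}=\posof{x}{u}\subseteq\abstractable(S\cup\{x\},u)$; moreover $(S;A)$ is sanitized, so $x\in\qvar$ and $x\notin\varsof(S)$.

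The core of the argument is a congruence lemma: \emph{if $w$ is a term with $Q\coloneqq\posof{x}{w}\subseteq\abstractable(S\cup\{x\},w)$ and $S\cup\{x\}\DYderives w$, then $(S;A)\assderives\equals{\replsubtermat{w}{Q}{r}}{\replsubtermat{w}{Q}{s}}$} (under the standing hypotheses $S\DYderives\{r,s\}$ and $(S;A)\assderives\equals{r}{s}$). This is proved by induction on $w$. If $w=x$, then $Q=\{\epsilon\}$ and the two sides are $r$ and $s$, so the given proof of $\equals{r}{s}$ already works. If $x\notin\varsof(w)$, then $Q=\emptyset$ and both sides equal $w$; since $S\cup\{x\}\DYderives w$ and $x$ is fresh and does not occur in $w$, we get $S\DYderives w$, and $\rneq$ yields $\equals{w}{w}$. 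Otherwise $w=\func(w_{0},w_{1})$ with $x$ occurring in $w$: picking any $p\in Q$, the fact that $\epsilon,0,1$ all lie in the witness set $\mathbb{Q}_{p}$ gives $S\cup\{x\}\DYderives w_{0}$ and $S\cup\{x\}\DYderives w_{1}$, and a direct check (each $\mathbb{Q}_{q'}$ relative to $w_{i}$ embeds into $\mathbb{Q}_{ip'}$ relative to $w$, with matching subterms) gives $\posof{x}{w_{i}}\subseteq\abstractable(S\cup\{x\},w_{i})$. If $x$ occurs in both $w_{0}$ and $w_{1}$, apply the IH to each and combine with $\rncons$; if $x$ occurs only in $w_{i}$, apply the IH to $w_{i}$, use $S\DYderives w_{1-i}$ (since $x\notin\varsof(w_{1-i})$) with $\rneq$ for the other component, and combine with $\rncons$.

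Given the lemma, admissibility follows by a short chain. Applying the congruence lemma to $t$ (with $Q=P_{t}$; here $S\cup\{x\}\DYderives t$ because $\epsilon\in\mathbb{Q}_{p}$ for any $x$-position $p$, and the step is vacuous when $x\notin\varsof(t)$) gives $(S;A)\assderives\equals{t^{r}}{t^{s}}$, and symmetrically $(S;A)\assderives\equals{u^{r}}{u^{s}}$. Apply $\rnsymm$ to the first to obtain $(S;A)\assderives\equals{t^{s}}{t^{r}}$, and then apply the (multi-premise) $\rntrans$ rule to the three equalities $\equals{t^{s}}{t^{r}}$, $\equals{t^{r}}{u^{r}}$ (the first premise), $\equals{u^{r}}{u^{s}}$ to conclude $(S;A)\assderives\equals{t^{s}}{u^{s}}$; if $x$ is absent from $t$ one simply drops the first link, since then $t^{r}=t^{s}=t$. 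Every step used either a premise of the $\rnsubst$ instance or one of $\rneq,\rnsymm,\rntrans,\rncons$, so $\rnsubst$ is admissible. (Lemma~\ref{lem:absderiv} can be invoked if one wants a tidier account of which positions remain abstractable after the substitutions.)

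The main obstacle is the congruence lemma, and inside it the propagation of the abstractability side condition across $\func$-nodes. The key observation that makes the induction work is that abstractability of an $x$-position forces every $x$-free subterm hanging off the path to that position to be derivable from $S\cup\{x\}$, hence from $S$ — which is exactly what is needed to discharge the $\rneq$ applications in the bottom-up reconstruction. Once the lemma is established, the $\rnsymm$/$\rntrans$ chaining is routine.
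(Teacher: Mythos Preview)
Your proof is correct, but it follows a different decomposition from the paper's. The paper first reduces the multi-position $\rnsubst$ to a sequence of single-position substitutions $\rnsubst_{1}$ (using Lemma~\ref{lem:absderiv} to argue that abstractability of the remaining positions is preserved after each step), and then simulates each $\rnsubst_{1}$ by induction on the \emph{length} of the position $p$, pushing one $\rncons$ per step along the path to $p$. You instead prove the congruence $\equals{w[r/x]}{w[s/x]}$ for all $x$-positions simultaneously by structural induction on $w$, then chain this with the given premise via $\rnsymm$ and $\rntrans$. Your route avoids the intermediate $\rnsubst_{1}$ reduction and the explicit appeal to Lemma~\ref{lem:absderiv}; in exchange it leans on the freshness property for $\DYderives$ (that $S\cup\{x\}\DYderives w$ with $x\notin\varsof(S\cup\{w\})$ implies $S\DYderives w$), which is implicit in the paper via the subterm property for normal $\DYderives$ proofs. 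The paper's decomposition is more modular (single-position substitution is a cleanly isolated lemma), while yours is a touch more direct; both produce essentially the same tree of $\rncons$ applications over $\rneq$ leaves, glued to the premise by a $\rntrans$ chain.
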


\begin{proof}
    We first show that $\rnsubst$ can be simulated by a series of applications of $\rnsubst_{1}$, defined below.
    \[
        \begin{prooftree}
            S; A \vdash \replsubtermat{(\equals{t}{u})}{p}{r} \quad S; A \vdash \equals{r}{s}
            \justifies S; A \vdash \replsubtermat{(\equals{t}{u})}{p}{s} \using \rnsubst_{1}
        \end{prooftree}
    \]
    The rule is enabled only if $p \in \posof{x}{\equals{t}{u}} \cap \abstractable(S \cup \{x\}, \equals{t}{u})$ and $S \DYderives \{r,s\}$. $\rnsubst_{1}$ replaces the $r$ \defemph{occurring at $p$} by $s$. 
    
    Let $\alpha$ denote $\equals{t}{u}$ and suppose we have an instance of $\rnsubst$ with $(T;E) \vdash \replsubtermat{\alpha}{P}{r}$ and $(T;E) \vdash \equals{r}{s}$ as premises and $(T;E) \vdash \replsubtermat{\alpha}{P}{s}$ as conclusion, with $P = \{p_{1}, \ldots, p_{\ell}\} \subseteq \posof{x}{\alpha} \cap \abstractable(T \cup \{x\}, \alpha)$, and $T \DYderives \{r,s\}$. The $p_{i}$s are $x$-positions, so none of them is a prefix of another. Therefore, even after replacing the $x$s occurring in the set of positions $P \setminus \{p_{i}\}$ with some terms, $p_{i}$ remains an $x$-position. 
    
    For $0 \leq i \leq \ell$, we define the following. $P_{i} \coloneqq \{p_{1}, \ldots, p_{i}\}$ and $Q_{i} \coloneqq P \setminus P_{i} = \{p_{i+1}, \ldots, p_{\ell}\}$ and $\alpha_{i} \coloneqq  \replsubtermat{(\replsubtermat{\alpha}{Q_{i}}{r})}{P_{i}}{s}$. Note that $\alpha_{0} = \replsubtermat{\alpha}{P}{r}$ and $\alpha_{\ell} = \replsubtermat{\alpha}{P}{s}$. We see that $p_{i}$ is an $x$-position of $\beta_{i} = \replsubtermat{(\replsubtermat{\alpha}{Q_{i}}{r})}{P_{i-1}}{s}$. By the equivalent of Lemma~\ref{lem:absderiv} for assertions, $p_{i} \in \abstractable(T \cup \{x\}, \beta_{i})$, since $P \subseteq \abstractable(T \cup \{x\}, \alpha)$,
    We also see that $\alpha_{i-1} = \replsubtermat{\beta_{i}}{p_{i}}{r}$ and $\alpha_{i} = \replsubtermat{\beta_{i}}{p_{i}}{s}$. Thus we can get from $\alpha_{i-1}$ to $\alpha_{i}$ using the $\rnsubst_{1}$ rule, and from $\alpha_{0}$ to $\alpha_{\ell}$ using a series of $\rnsubst_{1}$ rules.   
    
    Now we show that $\rnsubst_{1}$ can be simulated in the $\eqderives$ system. Suppose $\pi$ is a proof of $(T;E) \eqderives \equals{r}{s}$, and let $T \DYderives \{r,s\}$. For all $p$, and for all $t, u$ s.t.\ $p \in \posof{x}{\equals{t}{u}} \cap \abstractable(T \cup \{x\}, \equals{t}{u})$, we show that if $(T;E) \eqderives \replsubtermat{(\equals{t}{u})}{p}{r}$, then $(T; E) \eqderives \replsubtermat{(\equals{t}{u})}{p}{s}$. The proof proceeds by induction on the length of $p$.
    \begin{itemize}
        \item $p = 0$: We have a proof of $\equals{r}{u}$. By $\rnsymm$, we get a proof of $\equals{u}{r}$. Combining this with $\pi$ using $\rntrans$, we get a proof of $\equals{u}{s}$, from which we can get a proof of $\equals{s}{u}$ by applying $\rnsymm$ again. 
        
        \item $p = 1$: We have a proof of $\equals{t}{r}$. Combining this with $\pi$ using $\rntrans$, we get a proof of $\equals{t}{s}$, as desired.
        
        \item $p = 0q$ for some $q \neq \epsilon$: Note that $\replsubtermat{(\equals{t}{u})}{p}{r}$ is the same as $\equals{\replsubtermat{t}{q}{r}}{u}$. Suppose $q = 1q'$, w.l.o.g. Then, $t = \func(t_{0},t_{1})$ and $\replsubtermat{t}{q}{r} = \func(t_{0},\replsubtermat{t_{1}}{q'}{r})$. Since $q \in \abstractable(T \cup \{x\}, t)$, we have that $T \DYderives \{t_{0},t_{1}\}$, and by $\rneq$, $(T; E) \eqderives \equals{t_{j}}{t_{j}}$ for $j \in \{0,1\}$. From $T \DYderives r$, and $q' \in \abstractable(T \cup \{x\}, t_{1})$, we have that $T \DYderives \replsubtermat{t_{1}}{q'}{r}$. Thus, we have a proof of $(T; E) \eqderives \equals{\replsubtermat{t_{1}}{q'}{r}}{\replsubtermat{t_{1}}{q'}{r}}$ using $\rneq$. Applying IH to the position $0q'$, we have $(T; E) \eqderives \equals{\replsubtermat{t_{1}}{q'}{s}}{\replsubtermat{t_{1}}{q'}{r}}$. By applying $\rncons$ to this and $(T;E) \eqderives \equals{t_{0}}{t_{0}}$, we get $(T; E) \eqderives \equals{\replsubtermat{t}{q}{s}}{\replsubtermat{t}{q}{r}}$. Applying $\rntrans$ to this and $\equals{\replsubtermat{t}{q}{r}}{u}$, we get a proof of $\equals{\replsubtermat{t}{q}{s}}{u}$, i.e. $\replsubtermat{(\equals{t}{u})}{p}{s}$.
        
        \item $p = 1q$ for some $q \neq \epsilon$: This is similar to the above. \qedhere
    \end{itemize}
\end{proof}

\leftex*
\begin{proof}
    For the left to right direction, let $\pi$ be a proof of $S; A, \exists{x}.\alpha \vdash \gamma$. Note that we have a proof $\pi_{1}$ of $\exists{x}.\alpha$ from $(S, x; A, \alpha)$, where the $\rnexintro$ rule is justified because the abstractability side condition $\posof{x}{\alpha} \subseteq \abstractable(S \cup \{x\}, \alpha)$ is assumed. We can then use the $\rncut$ rule (which is admissible in $\assderives$) on this proof along with the proof $\pi$ to get $(S, x; A, \alpha) \assderives \gamma$. 
    \[
        \begin{prooftree}
        \[
            \[
                \justifies S,x; A,\alpha \vdash \alpha \using \rnax
            \]
            \justifies S,x; A,\alpha \vdash \exists{x}.\alpha \using \rnexintro
        \]
        \[
				\pi \leadsto
        	\justifies S; A, \exists{x}.\alpha \vdash \gamma
        \]
        \justifies S, x; A, \alpha \vdash \gamma \using \rncut
        \end{prooftree}
    \]   

    For the other direction, let $\pi$ be a proof of $S, x; A,\alpha \vdash \gamma$. We obtain a proof of $S; A,\exists{x}.\alpha \vdash \gamma$ as follows.
    \[
        \begin{prooftree}
            \[
                \justifies S; A,\exists{x}.\alpha \vdash \exists{x}.\alpha \using \rnax
            \]
            \[
                \pi \leadsto S,x; A,\alpha \vdash \gamma
            \]
            \justifies S; A,\exists{x}.\alpha \vdash \gamma \using \rnexelim
        \end{prooftree}
    \]
\end{proof}

 \dcpure* 
\begin{proof}
Suppose $(T;E) = \dc(S;A)$, for sanitized $(S;A)$. So $\freevars(S;A) \cap \qvar = \emptyset$, $\publics(\beta) \in S$ for all $\beta \in A$, and $T = S \cup \boundvars(A)$ and $E = \{\equals{t}{u} \mid \exists{\vec{x}}.\equals{t}{u} \in A\}$. Thus $\publics(\gamma) \in T$ for every $\gamma \in E$, and $\varsof(E) \cap \qvar \subseteq T$. 

Let $\pi$ be a proof of $(T;E) \assderives \alpha$. Note that $\pi$ has no occurrence of $\rnexelim$. We assume that all premises of $\rneq$ are normal $\DYderives$ proofs ending in a destructor (by repeatedly turning all $\text{constructor}+\rneq$ patterns into $\rneq+\rncons$). We show by induction that $T \DYderives \publics(\alpha)$. Let $\rnrule$ denote the last rule of $\pi$.
\begin{itemize}[leftmargin=*]

\item $\rnrule = \rnax$: $\alpha \in E$. So $T \DYderives \publics(\alpha)$, by purity.

\item $\rnrule = \rneq$: $\alpha$ is $\equals{t}{t}$ with $T \DYderives t$ via a proof ending in destructor. Since any term in $T$ is either in $\qvar$ or contains no variables from $\qvar$, and since $t \in \st(T)$, we see that $\publics(\alpha)$ is $\{t\}$ or $\emptyset$, and $T \DYderives \publics(\alpha)$ in both cases.

\item $\rnrule \in \{\rnsymm, \rntrans\}$: Any $t \in \publics(\alpha)$ is in $\publics(\beta)$ for one of the premises $\beta$, and the result follows.

\item $\rnrule = \rncons$: $\alpha$ is of the form $\equals{t}{u}$, where $t = \func(t_{0},t_{1})$ and $u = \func(u_{0},u_{1})$, and the immediate subproofs of $\pi$ derive $\equals{t_{0}}{u_{0}}$ and $\equals{t_{1}}{u_{1}}$. Now, any term in $\publics(\alpha)$ is a public term of one of the premises (and we can apply IH), unless it is $t$ or $u$. Say it is $t$. Then, $t$ is a maximal subterm of $\alpha$ which avoid $\qvar$, and thus it must be that $t_{0}$ and $t_{1}$ are also public terms of the premises. Thus $T \DYderives \{t_{0},t_{1}\}$ by IH, and hence $T \DYderives t$. Similarly for $u$. 

\item $\rnrule = \rnproj$: $\alpha$ is $\equals{t}{u}$, and any public term of $\alpha$ is a public term of the premise (and we can apply IH), unless it is $t$ or $u$. But by abstractability, $T \DYderives \{t,u\}$, and we are done. 

\item $\rnrule = \rnexintro$: $\alpha$ is of the form $\exists{x}.\beta$, with premise $\gamma = \replsubtermat{\beta}{P}{r}$, where $P = \posof{x}{\beta}$. We also have, by the other requirements for the rule, $T \DYderives r$ and $P \subseteq \abstractable(T\cup\{x\},\beta)$. By Lemma~\ref{lem:absderiv}, $P \subseteq \abstractable(T,\gamma)$. Consider any $a = \subtermat{\alpha}{q} \in \publics(\alpha)$. If $a\in\publics(\gamma)$, then we can apply IH. Otherwise, $q$ has to be a sibling of some position in $p \in P$. In other words, $a$ is public in $\alpha$ because its sibling is $x$, but in $\gamma$, the $x$ is replaced by $r$ (and $\varsof(r) \cap \qvar = \emptyset$), so $a$ is no longer a \emph{maximal} subterm avoiding $\qvar$. Since the set of abstractable positions is sibling-closed, $q\in\abstractable(T, \alpha)$, and since subterms at abstractable positions are derivable, $T\DYderives a$.  
\end{itemize} 

Now consider an $\eqderives$ proof of $(T;E) \vdash \equals{t}{u}$. It has been shown above that $T \DYderives \publics(\equals{t}{u})$. Consider $t$. Either $t \in \publics(\equals{t}{u})$, in which case we are done. Otherwise, every maximal subterm of $t$ which avoids $\qvar$ is derivable from $T$, and every $x \in \varsof(t) \cap \qvar$ is in $T$. From these, we can ``build up'' $t$ using constructor rules only, thereby proving that $T \DYderives t$. Similarly we can show that $T \DYderives u$. 
    \end{proof}

\eqstogamma* 
\begin{proof}
\phantom{a}
%
%
[$1 \Rightarrow 2$:] Suppose $(S;A) \assderives \alpha$. Then $(T;E) \assderives \alpha$  via a proof which has no occurrence of the $\rnexelim$ rule. Let $\pi$ be a normal proof of $(T;E) \vdash \alpha$. Since $E$ only has equalities, $\pi$ ends with a series of applications of $\rnexintro$. For each $i \leq k$, define $\mu(x_{i}) \coloneqq w_{i}$, where $w_{i}$ is the witness used by the $\rnexintro$ rule introducing $\exists{x_{i}}$ in $\alpha$. The side condition for $\rnexintro$ guarantees that $T \DYderives \mu(x)$ for each $x \in dom(\mu)$, thus satisfying condition \ref{item:dycond} of the theorem. It is easy to see that $\dom(\mu) \cap \varsof(\rng(\mu)) = \emptyset$, since each $\mu(x)$ is derivable from $T$, which has no occurrence of any $y \in \dom(\mu)$.
        
        There is a maximal subproof $\pi'$ of $\pi$ with no occurrence of $\rnexintro$. The conclusion of this is $\equals{t}{u}$, but with $w_{i}$ replacing $x_{i}$ for every $i \leq k$. In other words, $\concof(\pi')$ is $\mu(\equals{t}{u})$. So \ref{item:eqcond} is also satisfied. 
        
        We now prove \ref{item:abscond}. It suffices to consider $k = 3$ to convey the essential idea. For $i \leq 3$, let $\alpha_{i}$ be the formula $\exists{x_{i+1}\ldots{}x_{3}}.(\equals{t}{u})$. Note that $\mu(\alpha_{0}) = \alpha_{0} = \alpha$ and $\alpha_{3} = \equals{t}{u}$. The proof $\pi$ can be viewed as successively proving $\mu(\alpha_{3}), \mu(\alpha_{2}), \mu(\alpha_{1})$ and $\mu(\alpha_{0})$. 
        
        We now assume that $\posof{x_{3}}{t} \subseteq \abstractable(T \cup \dom(\mu), t)$, and show the same for $x_{2}$. 
         Letting $\mu' = {\mu\restr\{x_{1}\}}$, we see that $\mu(\alpha_{2}) = \replsubtermat{(\mu'(\alpha_{3}))}{P}{w_{2}}$, where $P = \posof{x_{2}}{\mu'(\alpha_{3})}$. The side condition for $\rnexintro$ obtaining $\mu(\alpha_{1})$ from $\replsubtermat{(\mu'(\alpha_{3}))}{P}{w_{2}}$ says that $\posof{x_{2}}{\mu'(\alpha_{3})} \subseteq \abstractable(T \cup \{x_{2}\}, \mu'(\alpha_{3}))$. We see that $p \in \posof{x_{2}}{\mu'(t)}$ iff $00p \in \posof{x_{2}}{\mu'(\alpha_{3})}$, and $p \in \abstractable(T \cup \{x_{2},x_{3}\}, \mu'(t))$ iff $00p \in \abstractable(T \cup \{x_{2}\}, \mu'(\alpha_{3}))$. So the side condition is equivalent to $\posof{x_{2}}{\mu'(t)} \subseteq \abstractable(T \cup \{x_{2},x_{3}\}, \mu'(t))$.
         
        Since $x_{2} \notin \varsof(\rng(\mu'))$, $\posof{x_{2}}{\mu'(t)} = \posof{x_{2}}{t}$. Further, $T \DYderives \mu'(x_{1})$, so by Lemma~\ref{lem:absderiv}, $\abstractable(T \cup \{x_{2},x_{3}\}, \mu'(t)) = \abstractable(T \cup \{x_{1},x_{2},x_{3}\}, t)$. Thus we see that $\posof{x_{2}}{t} \subseteq \abstractable(T \cup \dom(\mu), t)$. We can similarly prove the same for $x_{1}$, and obtain a proof of~\ref{item:abscond} for $t$. The proof for $u$ is similar.
        
        \defemph{[$2 \Rightarrow 1$:]} Given the proof of $(T;E) \vdash \mu(\equals{t}{u})$ we can apply $\rnexintro$ successively to get a proof of $\alpha$. We just need to show that the abstractability side conditions are satisfied. Using the same notation as in the proof of [$1 \Rightarrow 2$] above, we consider the passage from $\replsubtermat{(\mu'(\alpha_{3}))}{P}{w_{2}}$ to $\mu(\alpha_{2})$. As above, we need to show that $\posof{x_{2}}{\mu'(t)} \subseteq \abstractable(T \cup \{x_{2},x_{3}\}, \mu'(t))$ and similarly for $u$. But we saw above that this is equivalent to $\posof{x_{2}}{t} \subseteq \abstractable(T \cup \dom(\mu), t)$, which is guaranteed by \ref{item:abscond}. \qedhere

\end{proof}

\section{Proof of Theorem~\ref{thm:deriv-witness-bounds}}\label{app:decidable-assderives}

We recommend reading this appendix after reading the main paper, since we reuse some notions from Section~\ref{sec:insecurity} here.

Let $(S;A)$ be sanitized, and $\alpha = \exists{x_{1}\ldots{}x_{k}}.(\equals{t^{0}}{u^{0}})$ be an assertion with $\boundvars(\alpha) \cap \varsof(S;A) = \emptyset$. Let $(T;E) = \dc(S;A)$. By Theorem~\ref{thm:eqs-to-gamma}, $(S;A) \assderives \alpha$ iff there is a substitution $\mu$ with $\dom(\mu) = \boundvars(\alpha)$ s.t.: 
\begin{enumerate}[label={[\arabic*]}]
    \item\label{appitem:dycond} $\forall{}x \in \dom(\mu): T \DYderives \mu(x)$.
    \item\label{appitem:abscond} $\forall{}x \in \dom(\mu), r \in \{t^{0},u^{0}\}$: $\posof{x}{r} \subseteq \abstractable(T \cup \boundvars(\alpha), r)$.
    \item\label{appitem:asscond} $(T;E) \eqderives \mu(\equals{t^{0}}{u^{0}})$.  
\end{enumerate} 

In this section, we prove that if there is a $\mu$ satisfying the above conditions, there is also a small $\nu$ satisfying the same. We use $\dommu$ to refer to $\dom(\mu)$. Let $\constst = \st(T) \cup \st(E \cup \{\alpha\})$, and let $\stnonvars = \constst \setminus \dommu$. For all $x \in Z$, $T \DYderives \mu(x)$. So all variables occurring in $\mu(x)$ must also be in $\varsof(T)$. But $\varsof(T;E) \cap \dommu = \emptyset$, so $\varsof(\mu(x)) \cap \dommu = \emptyset$ for any $x \in \dommu$. 

Define $t \approx u$ iff $T;E \eqderives \mu(\equals{t}{u})$. It is easy to see that $\approx$ is a partial equivalence relation (on the subset of terms $t$ such that $T \DYderives \mu(t)$).   

We say that $x \in \dommu$ is \emph{minimal} if there is no $t \in \stnonvars$ with $x \approx t$. Let $\vars_{m}$ denote the set of all minimal variables. Our strategy for finding a small $\nu$ is to ``zap'' minimal variables, and propagate the change to (interpretations of) non-minimal variables. For this, it is convenient to translate every term to an ``equivalent'' one with only minimal variables. This notion of equivalence is based on unifiability under $\mu$. The set of terms which are equivalent to terms in $\constst$ is defined as follows. 

\begin{definition}
    $\hatst \coloneqq \{t \mid \varsof(t) \cap \dommu \subseteq \vars_{m}$, either $t \in \vars_{m}$ or $\exists{u} \in \stnonvars:t \approx u\}$. 
\end{definition}

\begin{lemma}\label{lem:st-to-hatst} 
    For every $t \in \constst$ with $T \DYderives \mu(t)$, there is $\expplus{t} \in \hatst$ such that: $T \DYderives \mu(\expplus{t})$; $t \approx \expplus{t}$; and for all $x \in \vars_{m}$, $\posof{x}{\expplus{t}} \subseteq \abstractable(T \cup \dommu, \expplus{t})$. 
\end{lemma}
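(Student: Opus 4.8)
The plan is to build $\expplus{t}$ by replacing inside $t$ every occurrence of a non-minimal variable by a suitable non-variable term, organising these replacements along a well-founded order. First some bookkeeping. Since $(S;A)$ is sanitized and $\boundvars(\alpha)\cap\varsof(S;A)=\emptyset$, no variable of $\dommu$ occurs in $T$ or in $E$; hence any term of $\constst$ containing a variable from $\dommu$ must lie in $\st(\alpha)=\st(\{t^{0},u^{0}\})$, and for such a term condition~\ref{appitem:abscond} already guarantees that all its $\dommu$-variable positions are abstractable w.r.t.\ $T\cup\dommu$. Condition~\ref{appitem:dycond} gives $T\DYderives\mu(x)$ for every $x\in\dommu$. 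Next I use consistency of $(T;E)$: fixing a ground $\lambda_{0}$ with $\lambda_{0}(r)=\lambda_{0}(s)$ for each $\equals{r}{s}\in E$, the composite $\lambda_{0}\mu$ still respects $E$ (as $\mu$ fixes $\varsof(E)$), so by Lemma~\ref{lem:equndertheta}, $t\approx u$ implies $\lambda_{0}\mu(t)=\lambda_{0}\mu(u)$. In particular, if $x\in\dommu$ and $x\approx s$, then $x$ cannot occur \emph{properly} in $s$, since otherwise $\lambda_{0}\mu(x)$ would be a proper subterm of $\lambda_{0}\mu(s)=\lambda_{0}\mu(x)$. Consequently, fixing for each non-minimal $x\in\dommu$ a representative $u_{x}\in\stnonvars$ with $x\approx u_{x}$ (it exists by non-minimality, and $T\DYderives\mu(u_{x})$ by Observation~\ref{obs:dc-pure}), and noting $x\notin\varsof(u_{x})$ because $u_x\notin\dommu$ and the occur-check above, the relation ``$y\prec x$ iff $y$ is a non-minimal $\dommu$-variable with $y\in\varsof(u_{x})$'' is well-founded.

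By recursion along $\prec$, define a substitution $\theta$ on $\dommu\setminus\vars_{m}$ by $\theta(x)\coloneqq\theta(u_{x})$ (meaningful since every non-minimal $\dommu$-variable of $u_{x}$ is $\prec x$), extended by the identity elsewhere. Then set $\expplus{t}\coloneqq t$ if $t$ has no variable from $\dommu$, and $\expplus{t}\coloneqq\theta(t)$ otherwise; in the latter case $t\in\st(\alpha)$ by the first observation.

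The three assertions are verified by induction along $\prec$ (for the variable case) and by induction on the structure of $t$ (for the general case), the inductive invariant being exactly the statement of the lemma. For $\expplus{t}\in\hatst$: $\theta$ eliminates every non-minimal $\dommu$-variable, each being replaced by $\theta(u_{x})$, which by the induction hypothesis lies in $\hatst$ and hence carries only minimal $\dommu$-variables, so $\varsof(\expplus{t})\cap\dommu\subseteq\vars_{m}$; and $\expplus{t}\approx t$ together with the fact that $t$ itself is in $\stnonvars$ (when $t$ is compound, or an atom outside $\dommu$) or $t\approx u_{t}\in\stnonvars$ (when $t\in\dommu\setminus\vars_{m}$) or $t\in\vars_{m}$ supplies the second clause of $\hatst$. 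For $t\approx\expplus{t}$ and $T\DYderives\mu(\expplus{t})$: starting from $t\approx t$ (which holds by $\rneq$, using $T\DYderives\mu(t)$), the congruence rule $\rncons$ (which has no side condition) propagates $x\approx\theta(x)$ outward to $t\approx\theta(t)$; and $\mu(\expplus{t})$ is obtained from $\mu(t)$ by replacing, at the $\dommu$-variable positions of $t$ (all abstractable in $\mu(t)$ w.r.t.\ $T$ by condition~\ref{appitem:abscond} and Lemma~\ref{lem:absderiv}), the derivable terms $\mu(x)$ by the derivable terms $\mu(\theta(x))$, which preserves $T$-derivability. For abstractability of minimal-variable positions: every minimal variable occurring in $\expplus{t}=\theta(t)$ already occurs in $t\in\st(\alpha)$ at a position abstractable w.r.t.\ $T\cup\dommu$ by condition~\ref{appitem:abscond}, and since $\theta$ only substitutes terms derivable from $T\cup\dommu$ at non-minimal-variable positions, Lemma~\ref{lem:absderiv} shows that the abstractability of the surviving minimal-variable positions is unchanged.

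The main obstacle is this last pair of points: making precise that substituting, first under $\theta$ and then under $\mu$, derivable terms at the non-minimal-variable positions neither destroys derivability of $\mu(\expplus{t})$ nor the abstractability of the remaining minimal-variable positions. This is exactly where condition~\ref{appitem:abscond} and Lemma~\ref{lem:absderiv} must be combined with care — passing correctly between $\abstractable(T\cup\dommu,\cdot)$ before a substitution and $\abstractable(T,\cdot)$ after it — and it is what forces the induction hypothesis to be the full three-part statement rather than any one of its pieces in isolation.
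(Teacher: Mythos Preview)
Your approach is the paper's: pick a representative $u_x\in\stnonvars$ for each non-minimal $x$, argue a suitable order is well-founded via consistency, and recursively unfold non-minimal variables. Two points in your execution do not go through as written.

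First, from the occur-check ($x\approx s$ forbids $x$ occurring properly in $s$) you only get irreflexivity of $\prec$, not well-foundedness. The fix is immediate from your own $\lambda_{0}\mu$ argument: if $y\in\varsof(u_{x})$ then $u_{x}$ is compound (a name has no variables), so $\mu(y)$ is a \emph{proper} subterm of $\mu(u_{x})$, whence $\lambda_{0}\mu(y)$ is a proper subterm of $\lambda_{0}\mu(u_{x})=\lambda_{0}\mu(x)$. This gives a strictly decreasing measure along $\prec$. The paper makes exactly this transitivity step explicit (and in fact uses the coarser relation ``$x\prec y$ iff $x\in\st(r)$ for \emph{some} $r\in\stnonvars$ with $r\approx y$'', then packages the two inductions you mention into one, on $(\rank(t),|t|)$).

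Second, and more seriously, your abstractability argument is wrong as stated. You claim that every minimal variable occurring in $\theta(t)$ already occurs in $t$. This is false: take $t=x$ with $x$ non-minimal and $u_{x}=(y,n)$ with $y\in\vars_{m}$; then $y$ occurs in $\theta(t)$ but not in $t$. Minimal variables are \emph{introduced} by the unfolding $\theta(x)\coloneqq\theta(u_{x})$. What you actually need is composability: if $x$ sits at an abstractable position in $t$ and $z$ sits at an abstractable position in $\theta(x)$ (both w.r.t.\ $T\cup\dommu$), then $z$ sits at an abstractable position in the term obtained by plugging $\theta(x)$ into $t$ at the $x$-positions. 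That is true, and it is how the paper closes this case, but it must be carried as part of the induction hypothesis on $u_{x}$ along $\prec$ --- condition~\ref{appitem:abscond} alone speaks only about $t^{0},u^{0}$ and says nothing about positions inside $\theta(u_{x})$. Your final paragraph flags the difficulty but the preceding sentence resolves it with a false premise; replace that premise with the composability step and the argument goes through.
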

\begin{proof}
    For $x, y \in \dommu$, $x \prec y$ iff $\exists{r}\in \stnonvars:x \in \st(r) \text{ and } r \approx y$. 

    We now show that $\prec$ is acyclic. Towards this, we claim that if $x \prec y$ and $y \prec z$, then there is some term $a$ (not necessarily in $\constst$) s.t.\ $\mu(x)$ is a proper subterm of $\mu(a)$ and $a \approx z$. Extending this reasoning, we see that if $x \prec^{+} x$, we have some term $a$ such that $\mu(x)$ is a proper subterm of $\mu(a)$ and $(T;E) \eqderives \equals{\mu(a)}{\mu(x)}$. But $E$ is consistent, which means that there is some $\lambda$ s.t.\ $\lambda(\mu(a)) = \lambda(\mu(x))$. But this is incompatible with $\mu(x)$ being a proper subterm of $\mu(a)$. Thus $\prec$ is acyclic. 

    We now prove the claim. Suppose $x \prec y$ and $y \prec z$. Then there exist $r,s \in \stnonvars$ such that the following hold:
    \begin{itemize}
    \item $x \in \st(r)$ and $(T;E) \eqderives \equals{\mu(r)}{\mu(y)}$
    \item $y \in \st(s)$ and $(T;E) \eqderives \equals{\mu(s)}{\mu(z)}$
    \end{itemize}
    Let $a = \replsubtermat{s}{\posof{y}{s}}{r}$, i.e. the term obtained by replacing each $y$ in $s$ by $r$. Since $\mu(x) \in \st(\mu(r))$, $\mu(x)$ is a proper subterm of $\mu(a)$. One can derive $(T;E) \vdash \equals{\mu(a)}{\mu(z)}$ via $\rnsubst$ on $\equals{\mu(s)}{\mu(z)}$ and $\equals{\mu(r)}{\mu(y)}$. Thus $a \approx z$. 

    Since $\prec$ is acyclic, we can define a notion of \emph{rank} for variables: $\rank(x) = \max\{\rank(y) \mid y \prec^{+} x\} + 1$. For a term $u \in \stnonvars$, we define $\rank(u) = \max\{\rank(x) \mid x \in \varsof(u) \cap \dommu\}$. It is easy to see that if $u \in \stnonvars$ and $x \approx u$, then $\rank(x) > \rank(u)$. 
    It is also easy to see that every $x \in Z$ has rank $\geq 1$, with $x \in \vars_{m}$ having rank $1$. 

    We now prove the lemma by induction on $\delta(t) = (\rank(t), |t|)$, where $|t|$ is the number of operators in $t$. Fix an ordering on $\hatst$. For $\delta(t) = (0,0)$, $t \in \names\cup\vars\setminus{}Z$, and we take $\expplus{t} = t$.       
    Suppose $\delta(t) > (0,0)$ and the lemma is true for all $u$ such that $\delta(u) < \delta(t)$. There are three cases to consider:
    \begin{itemize}
        \item {$t = x \in \vars_{m}$:} We choose $\expplus{t} = x$.
        \item $t = x \in Z\setminus\vars_{m}$: There is a $u \in \stnonvars$ s.t.\ $x \approx u$, and $\rank(u) < \rank(x)$. Pick the earliest such $u \in \hatst$. By IH there is a $\expplus{u}$, and we define $\expplus{t} = \expplus{u}$. Since $t \approx u$ and $u \approx \expplus{u}$, we have $t \approx \expplus{t}$, by transitivity.
        \item {$t \notin Z$:} For each $y \in \varsof(t) \cap \dommu$, since $\delta(y) < \delta(t)$, there is a $\expplus{y}$. We obtain $\expplus{t}$ by replacing each $y \in \varsof(t)$ by $\expplus{y}$. Now, $\varsof(\expplus{t}) \cap \dommu \subseteq \vars_{m}$. Also since all variables appear in abstractable positions in $t$, we can justify the relevant applications of $\rnsubst$ to show that $t \approx \expplus{t}$. Finally, if $z$ appears in an abstractable position in $r$ and $y$ appears in an abstractable position in $s$, then $z$ appears in an abstractable position in $\replsubtermat{s}{\posof{y}{s}}{r}$, i.e. the term obtained by replacing every $y$ occurring in $s$ by $r$. Thus the abstractability part of the statement is also fulfilled.  \qedhere
    \end{itemize}
\end{proof} 

We now define the substitution $\nu$ as follows. Assume that there is some $\fixedname \in T \cap \names$ such that $\fixedname \notin \st(E \cup \{\alpha\}) \cup \st(\rng(\mu))$.\footnote{Thus $\fixedname$ is a ``spare name'' that does not occur in any of the derivations under consideration.} Let $\nu_{m}$ be the substitution that maps each $x \in \vars_{m}$ to $\fixedname$. For all $x \in \dommu: \nu(x) = \nu_{m}(\expplus{x})$. Notice that for all $x \in \dom(\nu)$, either $\nu(x) = \fixedname$ or there is $u \in \stnonvars$ s.t.\ $\nu(x) = \nu(u)$. Thus we can follow the proof of Theorem~\ref{thm:vlambda-small} to show that $\nu$ is $|\constst|$-bounded. To complete the proof of Theorem~\ref{thm:deriv-witness-bounds}, we just need to show that $\nu$ preserves derivability. This is proved in Theorem~\ref{thm:nu-simulate-mu}, the main result of this section. But first we state a useful observation. 
\begin{obs}\label{obs:constst-not-mvars}
    \phantom{a}
    \begin{enumerate}
        \item For $x \in \dommu$, if $\mu(x) \in \constst$ then $x \notin \vars_{m}$. 
        \item If $t \in \hatst$ and $\mu(t) \in \constst$, then $\varsof(t) \cap \dommu = \emptyset$ and $\mu(t) = t$.  
    \end{enumerate}
\end{obs}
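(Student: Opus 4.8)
The plan is to derive both parts from two facts established before the statement: the partial equivalence relation $\approx$ is reflexive on every term $t$ with $T \DYderives \mu(t)$, and $\varsof(\mu(x)) \cap \dommu = \emptyset$ for every $x \in \dommu$, the latter because $\dom(\mu) = \dommu$ and $\varsof(T;E) \cap \dommu = \emptyset$. A consequence I will use repeatedly is that $\mu$ is idempotent on any term whose only $\dommu$-variables lie in $\dom(\mu)$; in particular $\mu(\mu(x)) = \mu(x)$. Part~2 will reduce to Part~1 via the observation that $\constst$ is closed under subterms.

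For Part~1, suppose $x \in \dommu$ with $\mu(x) \in \constst$. Since $\varsof(\mu(x)) \cap \dommu = \emptyset$, the term $\mu(x)$ is not a variable in $\dommu$, hence $\mu(x) \in \constst \setminus \dommu = \stnonvars$. It now suffices to show $x \approx \mu(x)$, which contradicts the minimality of $x$. By condition~\ref{appitem:dycond}, $T \DYderives \mu(x)$, and since $\mu(\mu(x)) = \mu(x)$ we also have $T \DYderives \mu(\mu(x))$, so both $x$ and $\mu(x)$ lie in the domain of $\approx$. Moreover $\mu(\equals{x}{\mu(x)}) = \equals{\mu(x)}{\mu(x)}$, and $(T;E) \eqderives \equals{\mu(x)}{\mu(x)}$ by an application of $\rneq$ from $T \DYderives \mu(x)$. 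Hence $x \approx \mu(x)$ with $\mu(x) \in \stnonvars$, so $x \notin \vars_{m}$.

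For Part~2, let $t \in \hatst$ with $\mu(t) \in \constst$. First note that $\constst = \st(T) \cup \st(E \cup \{\alpha\})$ is a union of subterm-closed sets, hence subterm-closed, so $\st(\mu(t)) \subseteq \constst$. Suppose for contradiction that $\varsof(t) \cap \dommu \neq \emptyset$, and pick $x$ in this set. Since $t \in \hatst$, by definition $\varsof(t) \cap \dommu \subseteq \vars_{m}$, so $x \in \vars_{m}$. But $x$ occurs in $t$, so $\mu(x)$ occurs in $\mu(t)$, i.e. $\mu(x) \in \st(\mu(t)) \subseteq \constst$; Part~1 then gives $x \notin \vars_{m}$, a contradiction. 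Therefore $\varsof(t) \cap \dommu = \emptyset$, and since $\dom(\mu) = \dommu$, $\mu$ acts as the identity on $t$, i.e. $\mu(t) = t$.

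I do not expect a genuine obstacle here; the statement is essentially a consistency check on the definition of ``minimal'' (a minimal variable cannot be $\approx$-equivalent to anything in $\stnonvars$, and in particular its image cannot already sit inside $\constst$). The only points needing care are ensuring that both sides of every use of $\approx$ actually lie in its domain, which is handled by condition~\ref{appitem:dycond} and the idempotence $\mu \circ \mu = \mu$ on the relevant terms, and recording explicitly that $\constst$ is subterm-closed so that descending from $\mu(t)$ to the subterm $\mu(x)$ remains inside $\constst$. Downstream, this observation is what guarantees that replacing a minimal variable by the spare name $\fixedname$ in the construction of $\nu$ cannot clash with the structure catalogued in $\constst$.
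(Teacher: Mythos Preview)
Your proof is correct and follows essentially the same approach as the paper: for Part~1, show $\mu(x)\in\stnonvars$ (since $\varsof(\mu(x))\cap\dommu=\emptyset$) and use $\rneq$ to get $x\approx\mu(x)$; for Part~2, observe that any $x\in\varsof(t)\cap\dommu$ would have $\mu(x)\in\constst$ and hence be non-minimal by Part~1, contradicting $t\in\hatst$. You spell out a couple of points the paper leaves implicit (the subterm-closure of $\constst$ and the domain of $\approx$), but the argument is the same.
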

\begin{proof}
    \phantom{a} 
    \begin{enumerate}
        \item Let $\mu(x) = t \in \constst$. Since $\varsof(t) \cap \dommu = \emptyset$, we have that $t \notin \dommu$ and $\mu(t) = t$. Thus $t \in \stnonvars$, and $\equals{\mu(x)}{\mu(t)}$ is derivable using the $\rneq$ rule, i.e., $x \approx t$. Therefore $x \notin \vars_{m}$.
        \item For every $x \in \varsof(t) \cap \dommu$, $\mu(x) \in \constst$. Thus we have $x \notin \vars_{m}$, by the previous part. Since $t \in \hatst$, we have that $\varsof(t) \cap \dommu \subseteq \vars_{m}$. Thus $\varsof(t) \cap \dommu = \emptyset$ and  $\mu(t) = t$. 
        \qedhere
    \end{enumerate}
\end{proof}

\begin{theorem}\label{thm:nu-simulate-mu} 
    \phantom{a}
    \begin{enumerate}
        \item For any $t \in \constst$, if $T \DYderives \mu(t)$ then $T \DYderives \nu(t)$. 
        \item For any $t, u \in \constst$, if $T; E \eqderives \equals{\mu(t)}{\mu(u)}$ then $T; E \eqderives \equals{\nu(t)}{\nu(u)}$. 
    \end{enumerate}
\end{theorem}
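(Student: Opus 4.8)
The plan is to lean on Lemma~\ref{lem:st-to-hatst} throughout and to reuse the ``typed proof'' and ``zap'' machinery of Section~\ref{sec:insecurity}, now read with the $\constst$ and $\stnonvars$ of this appendix. The first step is to record the identity $\nu(t) = \nu_{m}(t^{*})$ for every $t \in \constst$ with $T \DYderives \mu(t)$, where $t^{*}$ is the representative produced by Lemma~\ref{lem:st-to-hatst}; this follows by a routine induction following the construction of $t^{*}$, using that $\nu_{m}$ and $\mu$ only affect variables of $\vars_{m} \subseteq Z$ and of $Z$ respectively, and that $\nu(x) = \nu_{m}(x^{*})$ by definition. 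Both parts of the theorem then reduce to statements about $t^{*}$ (and $u^{*}$).

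For part~1: given $t \in \constst$ with $T \DYderives \mu(t)$, Lemma~\ref{lem:st-to-hatst} gives $t^{*} \in \hatst$ with $T \DYderives \mu(t^{*})$ and $\posof{x}{t^{*}} \subseteq \abstractable(T \cup Z, t^{*})$ for every minimal $x$. If $t^{*}$ contains no minimal variable then $\varsof(t^{*}) \cap Z = \emptyset$, so $\mu(t^{*}) = t^{*} = \nu_{m}(t^{*})$ and $T \DYderives \nu_{m}(t^{*}) = \nu(t)$ directly. Otherwise pick a minimal $x$ occurring in $t^{*}$ at a position $p$; since $\varepsilon \in \mathbb{Q}_{p}$, abstractability already provides a $\DYderives$ proof of $T \cup Z \vdash t^{*}$. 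Applying the substitution $[z \mapsto \fixedname]_{z \in Z}$ to that proof (each $\rnax$ on a variable of $Z$ becomes $\rnax$ on $\fixedname \in T$; the other $\rnax$ steps, on elements of $T$, contain no variable of $Z$ and are untouched; and every rule application commutes with substitution) yields a $\DYderives$ proof of $t^{*}[z \mapsto \fixedname]_{z \in Z} = \nu_{m}(t^{*}) = \nu(t)$ from $T$, as the $Z$-variables of $t^{*}$ are all minimal.

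For part~2 I would first reduce to the case $t, u \in \hatst$: by Lemma~\ref{lem:st-to-hatst} we have $(T;E) \eqderives \equals{\mu(t)}{\mu(t^{*})}$ and $(T;E) \eqderives \equals{\mu(u)}{\mu(u^{*})}$, and chaining these with the hypothesis via $\rnsymm$ and $\rntrans$ gives $(T;E) \eqderives \equals{\mu(t^{*})}{\mu(u^{*})}$; since $\nu(t) = \nu_{m}(t^{*})$ and $\nu(u) = \nu_{m}(u^{*})$ it suffices to prove that for $s, s' \in \hatst$, $(T;E) \eqderives \equals{\mu(s)}{\mu(s')}$ implies $(T;E) \eqderives \equals{\nu_{m}(s)}{\nu_{m}(s')}$. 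For this I would mimic Lemma~\ref{lem:eqzetazap}: call a term $a$ with $T \DYderives a$ \emph{zappable} if $a$ stands in the $\approx$-relation to a minimal variable, let $\downarrow a$ replace every maximal zappable subterm of $a$ by $\fixedname$, and induct on a typed normal $\eqderives$ proof of $\equals{\mu(s)}{\mu(s')}$ (typedness of every normal proof from $(T;E)$ is easy here since $\st(T) \cup \st(E) \subseteq \mu(\stnonvars)$ by the subterm property). The $\rnax$, $\rneq$, $\rnsymm$, $\rntrans$ cases are routine given $\downarrow T = T$, $\downarrow E = E$, and the implication $T \DYderives a \Rightarrow T \DYderives \downarrow a$ (proved exactly as Lemma~\ref{lem:dyzetazap}); the $\rncons$ case uses that $\rncons$ has no side condition and that a pairing/encryption and its $\approx$-equal partner are simultaneously (non-)zappable; and the $\rnproj$ case invokes normality (the subproof contains no $\rncons$) together with an analogue of Lemma~\ref{lem:projcase}.

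The main obstacle I expect is precisely that Lemma~\ref{lem:projcase} analogue and the bookkeeping behind it: one must pin down exactly which subterms are zappable so that $\downarrow$ coincides with $\nu_{m}$ on the relevant terms (in particular $\downarrow \mu(s) = \nu_{m}(s)$ for $s \in \hatst$, because the only zappable subterms of $\mu(s)$ are the images of the minimal variables occurring in $s$), and so that in the $\rnproj$ case a typed, $\approx$-equal pair $\func(a_{0},a_{1}), \func(b_{0},b_{1})$ is either both non-zappable with typed components (so $\rnproj$ survives the zap) or both zappable and equal (so the zapped projection is trivial, derivable by $\rneq$). This requires the consistency of $(T;E)$ (to exclude, e.g., a pairing being $\approx$-equal to an encryption), the design of $\hatst$ (a non-variable term of $\hatst$ is never $\approx$-minimal), and Observation~\ref{obs:dc-pure}-style facts to make $\approx$ meaningful for the terms occurring in proofs. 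Once the right notion of ``typed term'' is fixed — the terms $\approx$-equal to a member of $\stnonvars \cup \hatst$, together with the minimal variables — the rest of the induction transcribes from Section~\ref{sec:insecurity}.
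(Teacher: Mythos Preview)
Your reduction to representatives $r = t^{*}, s = u^{*}\in\hatst$ and the identity $\nu(t)=\nu_{m}(t^{*})$ are exactly what the paper does, and your argument for part~1 (use abstractability of the minimal-variable positions in $t^{*}$ to derive $t^{*}$ from $T\cup Z$, then substitute $\fixedname$) is essentially the paper's one-line proof.

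Where you diverge is part~2: you propose to re-import the whole ``typed proof / zap'' apparatus of Section~\ref{sec:insecurity}. The paper avoids all of that. The point you are not using is that here, unlike in Section~\ref{sec:insecurity}, the LHS $(T;E)$ carries \emph{no} substitution: $\varsof(T;E)\cap Z=\emptyset$. Combined with the subterm property for normal $\eqderives$-proofs, this gives the paper its key shortcut, Observation~\ref{obs:constst-not-mvars}: if $r\in\hatst$ and $\mu(r)\in\constst$, then $r$ has no $Z$-variables at all, so $\nu_{m}(r)=r=\mu(r)$. With this, the case analysis on the last rule of a normal proof $\pi$ of $(T;E)\vdash\equals{\mu(r)}{\mu(s)}$ is almost trivial: for $\rnax$, $\rneq$, and $\rnproj$ the subterm property forces $\mu(r),\mu(s)\in\st(T;E)\subseteq\constst$, hence $\pi$ itself is already a proof of $\equals{\nu_{m}(r)}{\nu_{m}(s)}$; for $\rntrans$, each intermediate $v_{i}$ lies in $\st(T;E)$ (adjacent premises are not both $\rncons$), so $v_{i}=\mu(v_{i})\in\hatst$ and the IH applies directly; only $\rncons$ needs a genuine split (if one of $r,s$ is a minimal variable then, by the $\hatst$-dichotomy and transitivity of $\approx$, so is the other, and both map to $\fixedname$; otherwise both decompose and the components stay in $\hatst$).

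So there is no need for a $\downarrow$ operator, no ``typed proof'' notion, and no analogue of Lemma~\ref{lem:projcase}: the whole obstacle you anticipate simply does not arise. Your route is not wrong in spirit, but it leaves open exactly the bookkeeping you flag (that $\downarrow\mu(s)=\nu_{m}(s)$ for $s\in\hatst$, and the $\rnproj$ case), whereas the paper's use of Observation~\ref{obs:constst-not-mvars} sidesteps it entirely.
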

\begin{proof}
    Lemma~\ref{lem:st-to-hatst} guarantees appropriate $\expplus{t}, \expplus{u} \in \hatst$, which we will refer to as $r$ and $s$ respectively. It suffices to prove that if $T \DYderives \mu(r)$ then $T \DYderives \nu_{m}(r)$, and that if $T;E \eqderives \equals{\mu(r)}{\mu(s)}$ then $T;E \derives \equals{\nu_{m}(r)}{\nu_{m}(s)}$. 

    \begin{enumerate}
        \item Suppose $T \DYderives r$. Since variables from $\dommu$ only occur in $r$ at abstractable positions (w.r.t.\ $T \cup \dommu$), and since $T \DYderives \fixedname$, we can easily prove by induction on the size of terms that $T \DYderives \nu_{m}(r)$. 
        \item Let $\pi$ be a normal proof of $T;E \vdash \equals{\mu(r)}{\mu(s)}$ with last rule $\rnrule$. We prove the desired statement by induction on the structure of $\pi$. The following cases arise. 
        \begin{itemize}[leftmargin=*]
            \item {$\rnrule  = \rnax$:} Then $\equals{\mu(r)}{\mu(s)} \in E$, and thus $\mu(r), \mu(s) \in \constst$. By Observation~\ref{obs:constst-not-mvars}, $\varsof(r,s) \cap \dommu = \emptyset$. Thus $\nu_{m}(r) = r = \mu(r)$ and $\nu_{m}(s) = s = \mu(s)$. Therefore $\pi$ itself is a proof of $\equals{\nu_{m}(r)}{\nu_{m}(s)}$.
             \item {$\rnrule = \rneq$:} $\mu(r) = \mu(s)$ and $T \DYderives \mu(r)$ via a proof ending in ${\sf ax}$ or a destructor rule, and thus $\mu(r), \mu(s) \in \st(T) \subseteq \constst$. Repeating the above argument, $\nu_{m}(r) = \mu(r)$ and $\nu_{m}(s) = \mu(s)$, and $\pi$ is the desired proof of $\equals{\nu_{m}(r)}{\nu_{m}(s)}$.  
            \item {$\rnrule = \rnproj$:} By subterm property for normal $\eqderives$-proofs $\mu(r), \mu(s) \in \st(T;E) \subseteq \constst$. $\equals{\nu_{m}(r)}{\nu_{m}(s)}$ is provable using $\pi$, as in the $\rnax$ case.            	
            \item {$\rnrule = \rnsymm$:} The immediate premise is $\equals{\mu(s)}{\mu(r)}$. By IH, we have a proof of $\equals{\nu_{m}(s)}{\nu_{m}(r)}$, to which we apply $\rnsymm$ to obtain $\equals{\nu_{m}(r)}{\nu_{m}(s)}$.  
            \item {$\rnrule = \rntrans$:} Suppose the immediate subproofs are $\pi_{1}, \ldots, \pi_{n}$, with each $\pi_{i}$ deriving $\equals{v_{i-1}}{v_{i}}$. Let $\mu(r) = v_{0}$ and $\mu(s) = v_{n}$. Since no $\pi_{i}$ ends in $\rntrans$ and no two adjacent $\pi_{i}$s end in $\rncons$, each $v_{i}$ (for $0 < i < n$) appears in at least one proof ending in $\rnax$, $\rneq$, $\rnsymm$ or $\rnproj$. Thus, by the subterm property, $v_{i} \in \st(T;E) \subseteq \constst$ for $0 < i < n$. Since $\varsof(T;E) \cap \dommu = \emptyset$, it follows that $v_{i} \in \hatst$ and $\mu(v_{i}) = v_{i}$. Thus we can view each $\pi_{i}$ as deriving $\equals{\mu(r_{i-1})}{\mu(r_{i})}$, where $r_{i-1}, r_{i} \in \hatst$ (taking $r_{0}$ and $r_{n}$ to be $r$ and $s$). By IH, there are proofs $\varpi_{1}, \ldots, \varpi_{n}$, with each $\varpi_{i}$ deriving $\equals{\nu_{m}(r_{i-1})}{\nu_{m}(r_{i})}$. By composing them using $\rntrans$, we get a proof of $T;E \vdash \equals{\nu_{m}(r)}{\nu_{m}(s)}$, as desired.

            \item {$\rnrule = \rncons$:} Suppose $r = \func(r_{1}, r_{2})$ and $s = \func(s_{1},s_{2})$. Each $r_{i}, s_{i} \in \hatst$, and the immediate subproofs are $\pi_{1}$ and $\pi_{2}$, deriving $\equals{\mu(r_{1})}{\mu(s_{1})}$ and $\equals{\mu(r_{2})}{\mu(s_{2})}$, respectively. By IH we have proofs $\varpi_{1}$ and $\varpi_{2}$, with $\varpi_{i}$ proving $\equals{\nu_{m}(r_{i})}{\nu_{m}(s_{i})}$. We can compose them with the $\rncons$ rule to get the desired proof of $\equals{\nu_{m}(r)}{\nu_{m}(s)}$. 

            Suppose, on the other hand, that $r$ is a variable. Since $r \in \hatst$, $r \in \vars_{m}$. Now $s \in \hatst$, so either $s \in \vars_{m}$ or there is $a \in \stnonvars$ with $s \approx a$. But in the second case, $r \approx a$ (by symmetry and transitivity), which cannot happen for a minimal variable $r$. Therefore $s \in \vars_{m}$. And we have $\nu_{m}(r) = \nu_{m}(s) = \fixedname \in T$, so there is a proof of $T, E \eqderives \equals{\nu_{m}(r)}{\nu_{m}(s)}$ ending in $\rneq$. 

            We have a similar argument in case $s$ is a variable, thereby proving the theorem. \qedhere
        \end{itemize}
    \end{enumerate}
\end{proof}

\section{Algorithm to decide \texorpdfstring{$\eqderives$}{equality derivations}}
\label{app:conseq}
We present a saturation-based procedure in Algorithm~\ref{alg:eqderive} for deciding whether $T; E \eqderives \equals{t}{u}$. The procedure first computes the set $\conseq{T}{E}{t}{u}$ defined below. 
\[
    \bigl\{\equals{r}{s} \mid r,s \in \st(T\cup\{t,u\}) \cup \st(E), (T; E) \eqderives \equals{r}{s}\bigr\}.
\]
It then checks whether $\equals{t}{u} \in \conseq{T}{E}{t}{u}$.

We start out with a set $C$, which contains all the equalities in $E$, and trivial equalities over all terms $t$ such that $T \DYderives t$. We assume that a non-atomic $r$ is of the form $\func(r_{0},r_{1})$, and similarly for $s$.

$C$ initially contains all equalities obtained using $\rnax$ and $\rneq$. $C_{1}$ is all equalities obtained by one application of the $\rntrans$ rule to the formulas in $C$, $C_{2}$ all those obtained using $\rncons$, and $C_{3}$ all those obtained using $\rnproj_{i}$. These sets are added to $C$ and the procedure iterated till nothing new can be added.

Letting $Z = \st(T \cup \{t,u\}) \cup \st(E)$, and $M = |Z|$, it can be seen that the algorithm runs in time polynomial in $M$. There are at most $M^{2}$ equalities that can be added to $C$, and hence the \defemph{while} loop runs for at most $M^{2}$ iterations. In each iteration, the amount of work to be done is polynomial in $M$. (Recall that $\DYderives$ can be decided in PTIME.) Thus the algorithm works in time polynomial in $M$.

\begin{algorithm}[!t]
	\caption{Algorithm to compute $\conseq{T}{E}{t}{u}$, given $(T; E), t,u$}
	\label{alg:eqderive}
	\begin{algorithmic}[1]
		\State $Z \gets \st(S \cup \{t,u\}) \cup \st(E)$;
		\State $B \gets \emptyset$;
		\State $C \gets E \cup \big\{\equals{t}{t} \mid t \in Z, T \DYderives t\big\}$;
		\While{$(B \neq C)$}
		\State $B \gets C$;
		\State $C_{1} \gets \big\{\equals{r}{s} \mid r,s \in Z$, and there is a $v$ s.t.\ $\{\equals{r}{v}, \equals{v}{s}\} \subseteq B\big\}$;
		\State $C_{2} \gets \big\{\equals{r}{s} \mid r, s \in Z, \{\equals{r_{0}}{s_{0}}, \equals{r_{1}}{s_{1}}\} \subseteq B\big\}$;
		\State $C_{3} \gets \big\{\equals{r_{i}}{s_{i}} \mid \equals{r}{s} \in B, S \DYderives \{r_{0},r_{1},s_{0},s_{1}\}\big\}$;
		\State $C \gets B \cup C_{1} \cup C_{2} \cup C_{3}$;
		\EndWhile
		\State \Return $B$.
	\end{algorithmic}
\end{algorithm}

\section{Proofs for Section~\ref{sec:insecurity}}
\label{app:insecurity}

\equndertheta*
\begin{proof}
	Suppose $T; E \derives \equals{t}{u}$ via a proof $\pi$ with last rule $\rnrule$. The proof is by induction on the structure of $\pi$. The following cases arise. 
	\begin{itemize}[leftmargin=*]
		\item $\rnrule = \rnax$: $\equals{t}{u} \in E$, so by assumption, $\lambda(t) = \lambda(u)$.
		\item $\rnrule = \rneq$: $t = u$, so $\lambda(t) = \lambda(u)$ as well.
		\item $\rnrule = \rnsymm$: The premise is $\equals{u}{t}$, and by IH, $\lambda(u) = \lambda(t)$.
		\item $\rnrule = \rntrans$: Suppose $\equals{t_{0}}{t_{1}}, \ldots, \equals{t_{n-1}}{t_{n}}$ are the premises of $\rnrule$, with $t = t_{0}$ and $u = t_{n}$. By IH, $\lambda(t_{i-1}) =  \lambda(t_{i})$ for all $i \leq n$. It follows that $\lambda(t) = \lambda(u)$.
		\item $\rnrule = \rncons$: Let $t = \func(t_{1},t_{2})$ and $u = \func(u_{1},u_{2})$ and let $\equals{t_{1}}{u_{1}}, \equals{t_{2}}{u_{2}}$ be the premises of $\rnrule$. By IH, $\lambda(t_{1}) = \lambda(u_{1})$ and $\lambda(t_{2}) = \lambda(u_{2})$. Thus we have the following:
		\begin{tabbing}
		    \hspace{0.25cm}\=
            $\lambda(t)$ \=$= \lambda(\func(t_{1},t_{2})) = \func(\lambda(t_{1}), \lambda(t_{2})) = \func(\lambda(u_{1}), \lambda(u_{2}))$ \\ \>\> 
            $=\lambda(\func(u_{1}, u_{2})) = \lambda(u)$.
		\end{tabbing}
		\item $\rnrule = \rnproj$: Let $\equals{\func(t_{1},t_{2})}{\func(u_{1},u_{2})}$ be the premise of the last rule with $t = t_{1}$ and $u = u_{1}$, w.l.o.g. By IH, $\lambda(\func(t_{1}, t_{2})) = \lambda(\func(u_{1}, u_{2}))$. So, $\lambda(t) = \lambda(u)$. \qedhere
	\end{itemize}
\end{proof}

\prooftotruth*
\begin{proof}
	Claim 1 is vacuously true for $E_{0} = \emptyset$. We prove the claims simultaneously by induction on $i > 0$. Assume that they hold for all $j < i$ via IH1, IH2, and IH3. 

	\begin{enumerate}[leftmargin=*]
		\item Suppose $\equals{t}{u} \in E_{i}$. Then, $\exists{}j < i: \equals{t}{u} \in \SF(\alpha_{j})$, and $\intsub(U_{j};F_{j}) \eqderives \intsub\hwsub_{j}(\equals{t}{u})$. By IH3, $\bigsub(t) = \bigsub(u)$. If $\equals{t}{u} \in F_{i}$, then $\exists{}j \leq i: \equals{t}{u} \in \SF(\beta_{j})$, and $\intsub(T_{j-1};E_{j-1}) \eqderives \intsub\iwsub_{j}(\equals{t}{u})$. If $j < i$, by IH2, $\bigsub(t) = \bigsub(u)$. If $j = i$, by IH1, $\bigsub({r}) = \bigsub({s})$ for every $\equals{{r}}{{s}} \in E_{i-1}$. Any $\equals{a}{b} \in \intsub(E_{i-1})$ is of the form $\intsub(\equals{r}{s})$ for some $\equals{r}{s} \in E_{i-1}$. Thus, $\bigsub(a) = \bigsub(\intsub(r)) = \bigsub(r) = \bigsub(s) = \bigsub(\intsub(s)) = \bigsub(b)$. By Lemma~\ref{lem:equndertheta}, $\bigsub(\intsub\iwsub_{j}(t)) = \bigsub(\intsub\iwsub_{j}(u))$, i.e. $\bigsub(t) = \bigsub(u)$. 
		
        \item Suppose $\intsub(T_{i-1}; E_{i-1}) \eqderives \intsub\iwsub_{i}(\equals{t}{u})$. As above, for each $\equals{a}{b} \in \intsub(E_{i-1})$, $\bigsub(a) = \bigsub(b)$. Using Lemma~\ref{lem:equndertheta}, $\bigsub(\intsub\iwsub_{i}(t)) = \bigsub(\intsub\iwsub_{i}(u))$, i.e. $\bigsub(t) = \bigsub(u)$. 
        
        \item The proof is similar to the above. \qedhere
	\end{enumerate}
\end{proof}

\sigmaxinsigmaTp*
\begin{proof}
    Consider $t \in \st(\intsub(u)) \setminus (\intsub(\stnonvars) \cup \qvar)$ for some $u \in T_{i}$. Then, $t \in \st(\intsub(y))$ for some $y \in \varsof(u)$. Since $u \in T_{i}$, there is a $j < i$ such that $u \in \honterms_{j} \cup \qvar$. If $u \in \qvar$, then $u = y = \sigma(y)$ and $t = y$, but we know that $t \not\in \qvar$. Thus $u \not\in \qvar$ and $u \in \honterms_{j}$, i.e. $y \in \varsof(\honterms_{j})$. Now $\xi$ is an interleaving of sessions of $\prot$, and $y \in \varsof(u)$ where $u$ occurs in an honest agent send in a session. Thus by Observation~\ref{obs:earlierbetaj}, there is an earlier intruder send in the same session in which $y$ occurs. This send occurs before $\alpha_{j}$ in $\xi$. Thus there is a $k \leq j$ such that $y \in \varsof(\publics(\beta_{k})) = \varsof(\intterms_{k})$. Thus, $t \in \st(\intsub(\intterms_{k}))$.
\end{proof}

\earlierproof*
\begin{proof}
  	Since $\pi$ ends in a destructor rule, $t \in \st(\intsub(T_{i}))$. By Lemma~\ref{lem:sigmax-in-sigmaTp}, there is an $i' < i$ such that $t \in \st(\intsub(\intterms_{i'}))$. 
    Let $j$ be the earliest such index, and let $a \in \intterms_{j}$ such that $t \in \st(\intsub(a))$. Since $\intsub(T_{j-1};E_{j-1}) \assderives \intsub\iwsub_{j}(\beta_{j})$, and $a \in \intterms_{j} = \publics(\beta_{j})$, it follows by Observation~\ref{obs:dc-pure} that $\intsub(T_{j-1}) \DYderives \intsub\iwsub_{j}(a)$. But $\varsof(a) \cap \dom(\iwsub_{j}) = \emptyset$, so $\intsub(T_{j-1}) \DYderives \intsub(a)$, via a normal proof $\rho$. Consider a minimal subproof $\chi$ of $\rho$ such that $t \in \st(\concof(\chi))$. (There is at least one such subproof, namely $\rho$.) If $\chi$ ends in a destructor, then $\concof(\chi) \in \st(\intsub(T_{j-1}))$, and hence $t \in \st(\intsub(T_{j-1}))$. But by Lemma~\ref{lem:sigmax-in-sigmaTp}, there must be a $k < j-1$ such that $t \in \st(\intsub(\intterms_{k}))$, contradicting the fact that $j$ is the earliest such index. So $\chi$ ends in a constructor rule. If $t \neq \concof(\chi)$, then $t \in \st(\concof(\chi'))$, for some proper subproof of $\chi$. But this cannot be, since $\chi$ is a minimal proof with this property. Thus, $t = \concof(\chi)$ and $\chi$ is a proof of $\intsub(T_{j-1}) \vdash t$ (and we choose our $\ell$ to be $j-1$).           
\end{proof}

\projcase*
\begin{proof} 
    For any $a \in \intsub(\constst)$, one of the following cases arises: 
    \begin{itemize}[leftmargin=*]
    \item $a \in \intsub(\stnonvars)$.
    \item $a = \intsub(x)$ for some $x \in \qvar$, so $a = x \in \qvar$.
    \item $a = \intsub(x)$ for $x \in \dom(\intsub)$, and so $a = \bigsub(x) \in \bigsub(\constst)$. 
    \end{itemize}
    Thus $\intsub(\constst) \subseteq \intsub(\stnonvars) \cup \bigsub(\constst) \cup \qvar$. 
    
    We know that $t,u$ are typed, but $t,u\notin \qvar$ (since they are non-atomic). So $t, u \in \intsub(\stnonvars) \cup \bigsub(\constst)$. 
    
    To prove the Lemma, we consider two cases.
    \begin{itemize}
        \item {\bf Neither $t$ nor $u$ is zappable:} Consider $t$. If $t \in \intsub(\stnonvars)$, each $t_{i} \in \intsub(\constst) \subseteq \intsub(\stnonvars) \cup \bigsub(\constst) \cup \qvar$. If $t \in \bigsub(\constst)$, then since $t$ is not zappable, $t = \bigsub(a)$ for some $a \in \stnonvars$. Then $a$ has to be of the form $\func(a_{0},a_{1})$, with $a_{0},a_{1} \in \constst$, $t_{0} = \bigsub(a_{0})$ and $t_{1} = \bigsub(a_{1})$. Thus $t_{0}, t_{1} \in \bigsub(\constst) \subseteq \intsub(\stnonvars) \cup \bigsub(\constst) \cup \qvar$. Reasoning about $u$ in a similar manner, we see that  $u_{0},u_{1} \in \intsub(\stnonvars) \cup \bigsub(\constst) \cup \qvar$. So $t_{0}, t_{1}, u_{0}$ and $u_{1}$ are typed.
        \item {\bf One of $t$ and $u$ is zappable:} Say $t$ is zappable. Then $u$ is zappable as well, by Observation~\ref{obs:tzap-iff-uzap}. Therefore $t, u \notin \intsub(\stnonvars)$, which implies that $t, u \in \bigsub(\constst)$. Therefore both $t$ and $u$ are ground terms, so $t = \bigsub(t) = \bigsub(u) = u$. \qedhere
    \end{itemize}
\end{proof}

\rustureq*
\begin{proof}
    Let $\pi$ be a normal $\eqderives$ proof of $\intsub(T_{i}; E_{i}) \vdash \equals{t}{u}$ ending in $\rnrule$. Assume all proper subproofs of $\pi$ are typed.
    \begin{itemize}[leftmargin=*]
        \item {$\rnrule = \rnax$:} $\equals{t}{u} \in \intsub(E_{i})$. So $t, u \in \intsub(\constst)$ and $\pi$ is typed.
        \item {$\rnrule = \rnsymm$:} By normality, the premise of $\rnrule$, i.e. $\equals{u}{t}$, is the conclusion of $\rnax$. Thus $t,u \in \intsub(\constst)$ as above. 
        \item {$\rnrule = \rneq${\bf :}} $t = u$, so $\pi$ is typed. 
        \item {$\rnrule = \rntrans${\bf :}} Let $\pi_{1}, \ldots, \pi_{k}$ be the immediate (typed) subproofs of $\pi$, with each $\pi_{i}$ deriving $\equals{t_{i}}{t_{i+1}}$, where $t = t_{1}$ and $u = t_{k+1}$. By normality of $\pi$, for all $i \leq k$, $t_{i} \neq t_{i+1}$. By Definition~\ref{def:welltypedeq}, the following cases arise.
        \begin{itemize}
            \item $\rncons$ occurs in some $\pi_{i}$, hence in $\pi$, and $\pi$ is typed. 
            \item $\rncons$ does not occur in any $\pi_{i}$, so every $t_{i}$ (including $t = t_{1}$ and $u = t_{k+1}$) is a typed term, and $\pi$ is typed. 
        \end{itemize}
        \item {$\rnrule = \rnproj${\bf :}} This case is already presented in the main text.
        \item {$\rnrule = \rncons${\bf :}} All subproofs are typed by IH and $\pi$ has an occurrence of $\rncons$, so $\pi$ is typed. \qedhere
    \end{itemize}
\end{proof}

\zaptaunut*
\begin{proof}
Consider $\lambda = \iwsub_{i}$ for some $i$. The following cases arise.
    \begin{itemize}
        \item {$t = x \in \dom(\intsub\lambda)$:} By Definition~\ref{def:vlambda}, $\vintsub\vlambda(x) = \zap{\intsub\lambda(x)}$. 
        \item {$t = x \notin \dom(\intsub\lambda)$:} $\vintsub\vlambda(x) = x = \zap{x} = \zap{\intsub\lambda(x)}$. 
        \item {$t \in \names$:} Since $t \in \constst$, $t \in \stnonvars$ and hence is not zappable. Thus $\intsub\lambda(t) = t = \vintsub\vlambda(t)$. Since $\zap{t} = t$, $\vintsub\vlambda(t) = \zap{\intsub\lambda(t)}$.
        \item {$t = \func(t_{0}, t_{1})$:} $t \in \stnonvars$, so $t_{0},t_{1} \in \constst$, and for $j \leq 1$ we get $\vintsub\vlambda(t_{j}) = \zap{\intsub\lambda(t_{j})}$ by IH. We claim that $u = \intsub\lambda(t)$ is not zappable, since for any $x$ such that $\bigsub(u) = \bigsub(x)$, $x$ is not minimal (since $\bigsub(x) = \bigsub(t)$ as well, and $t \in \stnonvars$). Therefore, we have 
        \begin{align*}
            \zap{\intsub\lambda(t)} &= \zap{\func(\intsub\lambda(t_{1}), \intsub\lambda(t_{2}))} \\
            &= \func(\zap{\intsub\lambda(t_{1})}, \zap{\intsub\lambda(t_{2})}) \\
            &= \func(\vintsub\vlambda(t_{1}),  \vintsub\vlambda(t_{2})) = \vintsub\vlambda(t). \qedhere
        \end{align*}
    \end{itemize}
\end{proof}

\eqzetazap*
\begin{proof}
    Let $(X;A)$ and $(Y;B)$ denote $\intsub(T_{i};E_{i})$ and $\vintsub(T_{i};E_{i})$ respectively. By Observation~\ref{obs:stnu}, $\zap{X} = Y$ and $\zap{A} = B$. Let $\pi$ be a typed normal $\eqderives$ proof of $X;A \vdash \equals{t}{u}$ (guaranteed by Theorem~\ref{thm:rustur-eq}). We prove that $Y;B \eqderives \equals{\zap{t}}{\zap{u}}$. Consider the last rule $\rnrule$ of $\pi$. The following cases arise.
	\begin{itemize}[leftmargin=*]
		\item {$\rnrule = \rnax$}:  In this case, $\equals{t}{u} \in A$, and so $\equals{\zap{t}}{\zap{u}} \in B$, and there is a proof of $Y;B \vdash \equals{\zap{t}}{\zap{u}}$ ending in $\rnax$.
		\item {$\rnrule = \rnsymm$}:  The premise of the rule is $\equals{u}{t}$. By IH, $Y;B \eqderives \equals{\zap{u}}{\zap{t}}$. We get $Y;B \eqderives \equals{\zap{t}}{\zap{u}}$ using $\rnsymm$.
		\item {$\rnrule = \rneq$}: $t = u$, so $\zap{t} = \zap{u}$. Since $X \DYderives t$, by Lemma~\ref{lem:dyzetazap}, $Y \DYderives \zap{t}$. So $Y;B \eqderives \equals{\zap{t}}{\zap{u}}$ with last rule $\rneq$.
		\item {$\rnrule = \rntrans$}: Let $\pi_{1}, \ldots, \pi_{k}$ be the immediate subproofs of $\pi$, each $\pi_{i}$ deriving $X; A \vdash \equals{t_{i}}{t_{i+1}}$, with $t = t_{1}$ and $u = t_{k+1}$. Thus, $\equals{\zap{t}}{\zap{u}} = \equals{\zap{t_{1}}}{\zap{t_{k+1}}}$. By IH, there are $\varpi_{1}, \ldots, \varpi_{k}$, with each $\varpi_{i}$ deriving $Y;B \vdash \equals{\zap{t_{i}}}{\zap{t_{i+1}}}$. We can apply $\rntrans$ to get a proof of $Y;B \vdash \equals{\zap{t_{1}}}{\zap{t_{k+1}}}$. 

		\item {$\rnrule = \rnproj$ or $\rnrule = \rncons$}: This is in the main text.
\qedhere
	\end{itemize}
\end{proof}

\section{Normalization and subterm property for \texorpdfstring{$\eqderives$}{equality derivations}}
\label{app:normalization}

A proof $\pi$ of $T; E \eqderives \equals{t}{u}$ is \emph{normal} if the following hold.
	\begin{enumerate}
		\item All $\DYderives$ subproofs are normal. 
		\item The premise of $\rnsymm$ can only be the conclusion of $\rnax$ or $\rnprom$.
		\item The premise of $\rneq$ can only be the conclusion of a destructor rule.
		\item No premise of a $\rntrans$ is of the form $\equals{a}{a}$, or the conclusion of a $\rntrans$.
		\item Adjacent premises of a $\rntrans$ are not conclusions of $\rncons$.
		\item No premise of $\rnlint$ is the conclusion of $\rnlint$ or $\rnlwk$.
		\item No subproof ending in $\rnproj$ contains $\rncons$. 
	\end{enumerate}

A set $E$ of atomic formulas is said to be \emph{consistent} if there is a $\lambda$ s.t.\ $\lambda(t) = \lambda(u)$ for each $\equals{t}{u} \in E$, and $\lambda(t) \in \{t_{1},\ldots,t_{n}\}$ for each $t\listmemb{[t_{1},\ldots,t_{n}]} \in E$. 

\begin{table*}[!t]
\centering {\small
\tabulinesep=0.8mm
\setlength{\tabcolsep}{0.4em}
\begin{tabu}{|l|c|}
    \hline
    \multirow{2}{*}{R1} & $\rneq(\func(\pi_{1},\pi_{2}))$ \\ & $\rncons_{\func}(\rneq(\pi_{1}), \rneq(\pi_{2}))$ \\ \hline 
    \multirow{2}{*}{R2} & $\rnsymm(\rneq(\pi))$ \\ & $\rneq(\pi)$ \\ \hline
    \multirow{2}{*}{R3} & $\rnsymm(\rnsymm(\pi))$ \\ & $\pi$ \\ \hline
    \multirow{2}{*}{R4} & $\rnsymm(\rnrule(\pi_{1}, \ldots, \pi_{k}))$ \\ & $\rnrule(\rnsymm(\pi_{1}), \ldots, \rnsymm(\pi_{k}))$ \\ \hline
    \multirow{2}{*}{R5} & $\rntrans(\pi_{1}, \ldots, \pi_{i-1}, \varpi, \pi_{i}, \ldots, \pi_{r-1})$ \\ & $\rntrans(\pi_{1}, \ldots, \pi_{i-1}, \pi_{i}, \ldots, \pi_{r-1})$ \\ \hline 
    \multirow{2}{*}{R6} & $\rntrans(\pi_{1}, \ldots, \rntrans(\pi^{1}_{i}, \ldots, \pi^{k}_{i}), \ldots, \pi_{r-1})$ \\ & $\rntrans(\pi_{1}, \ldots, \pi^{1}_{i}, \ldots, \pi^{k}_{i}, \ldots, \pi_{r-1})$ \\ \hline 
    \multirow{2}{*}{R7} & $\rntrans(\pi_{1}, \ldots, \rncons(\pi^{1}_{i-1}, \pi^{2}_{i-1}), \rncons(\pi^{1}_{i}, \pi^{2}_{i}), \ldots, \pi_{r-1})$ \\ & $\rntrans(\pi_{1}, \ldots, \rncons(\rntrans(\pi^{1}_{i-1}, \pi^{1}_{i}), \rntrans(\pi^{2}_{i-1}, \pi^{2}_{i})), \ldots, \pi_{r-1})$ \\ \hline 
    \multirow{2}{*}{R8} & $\rnproj_{j}(\rncons(\pi_{1},\pi_{2}))$ \\ & $\pi_{j}$ \\ \hline 
    \multirow{2}{*}{R9} & $\rnproj_{j}(\rntrans(\pi_{1}, \ldots,\pi_{i-1}, \rncons_{\func}(\pi^{1}_{i},\pi^{2}_{i}), \pi_{i+1}, \ldots, \pi_{r-1}))$ \\ & $\rntrans(\rnproj_{j}(\rntrans(\pi_{1}, \ldots,\pi_{i-1})), \pi^{j}_{i}, \rnproj_{j}(\rntrans(\pi_{i+1}, \ldots, \pi_{r-1})))$ \\ \hline 
    \multirow{2}{*}{R10} & $\rnlint(\pi_{1}, \ldots, \pi_{k-1}, \rnlint(\pi_{k}, \ldots, \pi_{m}), \pi_{m+1}, \ldots, \pi_{n})$ \\ & $\rnlint(\pi_{1}, \ldots, \pi_{k-1}, \pi_{k}, \ldots, \pi_{m}, \pi_{m+1}, \ldots, \pi_{n})$ \\ \hline 
    \multirow{2}{*}{R11} & $\rnlint(\pi_{1}, \ldots, \rnweak(\pi_{i}), \ldots, \pi_{n})$ \\ & $\rnweak(\pi_{i})$ \\ \hline
\end{tabu}
}
\caption{Proof transformation rules. The LHS is above and the RHS is below. In R4, $\rnrule \in \{\rntrans, \rnproj, \rncons\}$. In R5, $\concof(\varpi)$ is assumed to be of the form $\equals{a}{a}$.}        
\label{tab:rwrules}
\end{table*}

   We next prove normalization for $\eqderives$ proofs (with a consistent LHS). We present proof transformation rules in Table~\ref{tab:rwrules}. To save space, we use \emph{proof terms} -- $\rnrule (\pi_{1}, \ldots, \pi_{n})$ denotes a proof $\pi$ with last rule $\rnrule$ and immediate subproofs $\pi_{1}, \ldots, \pi_{n}$. It is assumed that the derivations are from a consistent $(T;E)$. R1 is applicable when $\func$ is a constructor rule, and ensures that $\DYderives$ subproofs do not end in a constructor rule. R2 and R3 eliminate some occurrences of $\rnsymm$, while R4 pushes $\rnsymm$ up towards the axioms. R5 and R6 ensure that no premise of $\rntrans$ is the conclusion of $\rneq$ or $\rntrans$. R7 ensures that adjacent premises of $\rntrans$ are not the result of $\rncons$. R8 simplifies proofs where $\rnproj$ follows $\rncons$. We will discuss R9 later. R10 ensures that the conclusion of $\rnlint$ is not a premise of $\rnlint$. In R11, $\pi_{i}$ proves an equality $\equals{v}{n}$, and it is weakened to a list membership of the form $v \listmemb \ell'$, but by consistency, even after intersection, the conclusion must be of the form $v\listmemb{\ell}$ where $\lambda(v)$ is an element of $\ell$ for some $\lambda$. Thus we can directly apply weakening to $\pi_{i}$ to get the same conclusion. 
        
    R9 requires some explanation. Let $\pi_{i}$ be the proof $\rncons_{\func}(\pi^{1}_{i}, \pi^{2}_{i})$, and let $\concof(\pi_{j})$ be $\equals{t_{j}}{t_{j+1}}$, for $1 \leq j < r$. We see that $\concof(\rntrans(\pi_{1}, \ldots, \pi_{r-1}))$ is $\equals{t_{1}}{t_{r}}$. Since $\rnproj$ is applied on this, there is some constructor $\gunc$ such that $t_{e} = \gunc(t^{1}_{e},  t^{2}_{e})$ for $e \in \{1, r\}$. Since $\pi_{i}$ ends in $\rncons_{\func}$, we see that $t_{e} = \func(t^{1}_{e}, t^{2}_{e})$ for $e \in \{i, i+1\}$. But $\equals{t_{1}}{t_{i}}$ is provable from $(T;E)$, which is consistent. Therefore it has to be the case that $\func = \gunc$. Thus we see that for all $e \in \{1, i, i+1, r\}$, $t_{e} = \func(t^{1}_{e}, t^{2}_{e})$. So we can rewrite the LHS of R9 to the RHS to get a valid proof. Note that we can apply $\rnproj$ on $\equals{t_{1}}{t_{i}}$ in the transformed proof since all components of $t_{1}$ and $t_{i}$ are abstractable -- for $t_{1}$ this is true because the $\rnproj$ rule was applied to $\equals{t_{1}}{t_{r}}$ in the proof on the LHS; and for $t_{i}$ this follows from the fact that $\pi^{1}_{i}$ (resp.\ $\pi^{2}_{i}$) derives $\equals{t^{1}_{i}}{t^{1}_{i+1}}$ (resp.\ $\equals{t^{2}_{i}}{t^{2}_{i+1}}$), and so by purity, $T \DYderives \{t^{1}_{i}, t^{2}_{i}\}$. For a similar reason, we can apply $\rnproj$ on $\equals{t_{i+1}}{t_{r}}$.

\begin{theorem} If $(T;E) \eqderives \alpha$ then there is a normal proof of $(T;E) \vdash \alpha$ in the $\eqderives$ system.
\end{theorem}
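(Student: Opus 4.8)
The plan is to prove normalization by a terminating proof-rewriting argument: take the rules R1--R11 of Table~\ref{tab:rwrules} (applied to arbitrary subproofs), together with the standard $\DYderives$-proof transformations, and show that repeatedly applying whichever of these is applicable to a given proof of $(T;E) \vdash \alpha$ always halts, and at a proof satisfying every clause of the definition of normal $\eqderives$ proof. So there are three things to verify: (i) every rewrite rule is \emph{sound}, i.e.\ it turns a valid $\eqderives$ proof into a valid $\eqderives$ proof of the same conclusion from the same $(T;E)$; (ii) a proof to which no rule applies is normal; and (iii) the rewriting terminates. Throughout we use the standing hypothesis that $(T;E)$ is pure and consistent.

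For (i), the only nonroutine points are R9 and R11, and these were already discussed in the paragraphs preceding the theorem: R9 uses consistency of $(T;E)$ to force the two constructor symbols occurring at the relevant positions of the $\rntrans$-chain to coincide (so the two new $\rnproj$ applications, on shorter chains, are well-formed), and purity to supply derivability of the immediate subterms being projected, which is the side condition of $\rnproj$; R11 uses consistency to see that the element witnessed by the inner equality still lies in the intersected list, so a single $\rnlwk$ suffices in place of the $\rnlint$. The other rules are direct. For (ii) one argues contrapositively, clause by clause: a non-normal $\DYderives$ subproof enables a $\DYderives$ transformation; an $\rneq$ whose premise ends in a constructor enables R1; a $\rnsymm$ whose premise is the conclusion of something other than $\rnax$ or $\rnprom$ enables R2 (if $\rneq$), R3 (if $\rnsymm$), or R4 (if $\rntrans$, $\rnproj$, or $\rncons$); a $\rntrans$ with a premise of the form $\equals{a}{a}$ enables R5; a $\rntrans$ with a $\rntrans$ premise enables R6; a $\rntrans$ with two adjacent $\rncons$ premises enables R7; a subproof ending in $\rnproj$ that contains $\rncons$ enables R8 or R9 (take the topmost $\rncons$ reachable from the $\rnproj$ through $\rntrans$/$\rnproj$ steps); and the two $\rnlint$ clauses are handled by R10 and R11. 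Hence an irreducible proof is normal.

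The main obstacle is (iii): several rules \emph{enlarge} the proof --- R1 adds a $\rncons$ node, and R7 and R9 relocate or duplicate subproofs (R9 pushes one $\rnproj$ into two places) --- so a crude size measure does not decrease. The plan is to attach to each proof $\pi$ a tuple $\mu(\pi) = (\mu_0,\mu_1,\mu_2,\mu_3,\mu_4)$, compared lexicographically, and check that \emph{every} rewrite step strictly decreases it. Here $\mu_0$ combines the usual $\DYderives$-normalization measure with the number of constructor rules occurring in $\DYderives$ proofs immediately above $\rneq$ nodes; $\mu_1$ is the sum, over all $\rnproj$ nodes $\rho$, of the number of $\rncons$ nodes in the subproof rooted at $\rho$; $\mu_2$ is the total number of $\rncons$ nodes; $\mu_3$ measures $\rntrans$- and $\rnlint$-redundancy (e.g.\ the number of $\rntrans$ premises that are conclusions of $\rntrans$, plus the number of $\equals{a}{a}$ premises of $\rntrans$ nodes, plus the analogues for $\rnlint$); and $\mu_4$ is the sum over $\rnsymm$ nodes of the size of the subproof strictly above.

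The point of this ordering is that every enlarging rule strictly decreases a component lying strictly to the left of any component it might increase. R1 strictly decreases $\mu_0$, and --- crucially --- R2--R11 never touch $\DYderives$ subproofs nor create $\rneq$ nodes, so $\mu_0$ is inert under them and R1 is never re-enabled; R9 strictly decreases $\mu_1$, since it replaces a $\rnproj$ over a chain carrying a $\rncons_{\func}$ by two $\rnproj$s over chains whose combined $\rncons$-count is strictly smaller, while no ancestor $\rnproj$ gains a $\rncons$ descendant; R7 strictly decreases $\mu_2$, since two adjacent $\rncons$ premises of a $\rntrans$ are merged into one $\rncons$ with the inner subproofs merely moved. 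R8 also strictly decreases $\mu_1$; R5, R6, R10, R11 strictly decrease $\mu_3$ while only deleting subproofs or rearranging $\rntrans$/$\rnlint$ structure, hence leave $\mu_0,\mu_1,\mu_2$ untouched; and R2, R3, R4 strictly decrease $\mu_4$ without affecting the earlier components (R4 pushing $\rnsymm$ toward the leaves, possibly past $\rncons$, $\rnproj$ or $\rntrans$, but without creating $\rncons$ nodes or moving any relative to a $\rnproj$). I expect the delicate part of the writeup to be exactly these checks (R1 against $\mu_0$, R7 against $\mu_2$, R9 against $\mu_1$) plus the bookkeeping confirming that no rule increases an earlier component; everything else is routine. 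Since $\mu$ strictly decreases at every step, the rewriting terminates, and by (i) and (ii) its endpoint is the required normal proof.
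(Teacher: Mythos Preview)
Your overall plan matches the paper's: show the transformations R1--R11 are sound, that an irreducible proof satisfies every normality clause, and that rewriting terminates. Parts (i) and (ii) are essentially the paper's arguments. The genuine gap is in (iii): your $\mu_4$ does not strictly decrease under R4, and R4 is not inert on $\mu_3$. Take $\pi=\rnsymm\bigl(\rntrans(\rnsymm(\rntrans(\pi_1,\pi_2)),\pi_3)\bigr)$ with each $|\pi_i|=1$, and apply R4 at the \emph{inner} $\rnsymm$. Before, $\mu_4=6+3=9$; after, the outer $\rnsymm$'s subproof has grown by $k-1=1$ (R4 on a $k$-ary $\rntrans$ replaces one $\rnsymm$ node by $k$), so $\mu_4=7+1+1=9$---no decrease. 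Worse, the inner subproof now ends in $\rntrans$ and sits directly under a $\rntrans$, so your $\mu_3$ has strictly \emph{increased}: R4 does not leave the earlier components untouched, and the lexicographic decrease fails.

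The paper avoids this by staging rather than by a single global measure. First apply R2--R4 exhaustively (this terminates: each step pushes a $\rnsymm$ strictly toward a leaf, which one can make precise by always rewriting an outermost offending $\rnsymm$, or by summing over $\rnsymm$ nodes the number of \emph{non-$\rnsymm$} nodes above). Once every $\rnsymm$ sits directly over $\rnax$ or $\rnprom$, none of R1, R5--R11 can disturb that, since they neither create $\rnsymm$ nodes nor rewrite a proof ending in $\rnax$/$\rnprom$ into one that does not. For the second phase a three-component measure suffices: $(\delta_1,\delta_2,\delta_3)=(\text{total size of }\DYderives\text{ subproofs},\ \#\rncons,\ \text{proof size})$; R1 drops $\delta_1$, R7 and R9 drop $\delta_2$ without raising $\delta_1$, and R5, R6, R8, R10, R11 drop $\delta_3$ without raising $\delta_1,\delta_2$. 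A single-measure route can in principle be made to work, but because R7 and R9 enlarge the proof, any component placed after the $\rncons$-count must tolerate increase under them---which is exactly what forces the $\rnsymm$-measure to the end, where R4 then breaks it. Staging is the clean fix.
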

\begin{proof}
Let $\pi$ be any proof of $(T; E) \vdash \alpha$ such that all DY subproofs of $\pi$ are normal. 
    Suppose we repeatedly apply the transformations of Table~\ref{tab:rwrules} starting with $\pi$ and reach a proof $\varpi$ on which we can no longer apply any of the rules. Then $\varpi$ satisfies clauses 1 to 6 in the definition of normal proofs (since none of the rewrite rules, in particular R1--R7 and R10--R11, apply to $\varpi$). 
            
    Clause 7 is also satisfied by $\varpi$, for the following reason. Suppose a subproof $\varpi_{1}$ ends in $\rnproj$ and $\varpi_{2}$ is a maximal subproof of $\varpi_{1}$ ending in $\rncons$. $\varpi_{2}$ is a proper subproof of $\varpi_{1}$, so there has to be a subproof of $\varpi_{1}$ of the form $\rho = \rnrule(\cdots\varpi_{2}\cdots)$. Since $\rncons$ appears as the rule above $\rnrule$, a priori, $\rnrule$ can only be one of $\{\rnsymm, \rntrans, \rnproj, \rncons\}$. But since $\varpi_{2}$ is a \emph{maximal subproof} of $\varpi_{1}$ ending in $\rncons$, $\rnrule \neq \rncons$. Since R4 and R8 cannot be applied on $\varpi$, $\rnrule \notin \{\rnsymm, \rnproj\}$. But if $\rnrule = \rntrans$, then $\rho$ is a proper subproof of $\varpi_{1}$. In particular, it is the immediate subproof of some $\rho' = \rnrule'(\cdots\rho\cdots)$. Now $\rnrule'$ cannot be $\rnsubst$, since then $\concof(\rho')$ is a list membership assertion, which cannot occur in a proof ending in $\rnproj$. $\rnrule' \neq \rncons$, as that would violate the maximality of $\varpi_{2}$. $\rnrule' \notin \{\rnsymm, \rntrans, \rnproj\}$, since then one of the rewrite rules R4, R6, R8 would apply to $\varpi$. We have ruled out all possible cases for $\rnrule'$, and thus we are forced to conclude that $\varpi_{2}$ cannot be a subproof of $\varpi_{1}$. Thus, $\rncons$ does not occur in any subproof of $\varpi$ ending in $\rnproj$, and $\varpi$ satisfies all the clauses in the definition of normal proofs. 
    
    We next show that we can always reach a stage where no transformation is enabled. To begin with, apply the rules R2--R4 until the premise of each occurrence of $\rnsymm$ is the conclusion of an $\rnax$ or a $\rnprom$. None of the other rules converts a proof ending in $\rnax$ or $\rnprom$ to one which does not, so the above property is preserved even if we apply the other rules in any order. 
    
    Associate three sizes to an $\eqderives$-proof $\pi$: 
    \begin{itemize}
    \item $\measure_{1}(\pi)$ is the sum of the sizes of the $\DYderives$ subproofs of $\pi$, 
    \item $\measure_{2}(\pi)$ is the number of $\rncons$ rules that occur in $\pi$, and 
    \item $\measure_{3}(\pi)$ is the size of the proof $\pi$ (number of nodes in the proof tree). 
    \end{itemize}
    
    We also define $\measure(\pi) \coloneqq (\measure_{1}(\pi), \measure_{2}(\pi), \measure_{3}(\pi))$.

    We now show that if $\pi'$ is obtained from $\pi$ by one application of any of the transformation rules other than R2--R4, $\measure(\pi') < \measure(\pi)$.
    \begin{itemize}
        \item If R1 is applied, $\measure_{1}(\pi') < \measure_{1}(\pi)$ and so $\measure(\pi') < \measure(\pi)$.
        \item If R7 or R9 is applied, we have $\measure_{1}(\pi') \leq \measure_{1}(\pi)$ and $\measure_{2}(\pi') < \measure_{2}(\pi)$. Therefore, $\measure(\pi') < \measure(\pi)$.
        \item If R5, R6, R8, R10 or R11 is applied, we have that $\measure_{i}(\pi') \leq \measure_{i}(\pi)$ for $i \in \{1,2\}$ and $\measure_{3}(\pi') < \measure_{3}(\pi)$. So $\measure(\pi') < \measure(\pi)$.
    \end{itemize}
    Thus, once we apply R2--R4 till they can no longer be applied, we cannot have an infinite sequence of transformations starting from any $\pi$. Hence, every proof $\pi$ can be transformed into a normal proof $\varpi$ with the same conclusion. 
\end{proof}

We state and prove subterm property next. We use the following notation.
\begin{itemize}[leftmargin=*]
\item $\termsof(\pi) \coloneqq \{t \mid$ a subproof of $\pi$ derives $\alpha$ and $t$ is a maximal subterm of $\alpha\}$. 
\item $\listsof(E) \coloneqq \{\ell \mid \exists{t}:t\listmemb{\ell}$ is in $E\}$.
\item $\listsof(\pi) \coloneqq \{\ell \mid$ a subproof of $\pi$ derives $t\listmemb{\ell}\}$. 
\end{itemize}

\begin{theorem}[Subterm property]
For any normal proof $\pi$ of $T; E \eqderives \alpha$, $\termsof(\pi) \subseteq \st(T) \cup \st(E\cup\{\alpha\})$ and $\listsof(\pi) \subseteq \listsof(E \cup \{\alpha\}) \cup \{[n] \mid n \in \st(T) \cup \st(E\cup\{\alpha\})\}$. Further, if $\pi$ does not contain $\rncons$, then $\termsof(\pi) \subseteq \st(T) \cup \st(E)$ . Also, if $\pi$ does not end in $\rnweak$ and does not end in $\rnlint$, then $\listsof(\pi) \subseteq \listsof(E) \cup \{[n] \mid n \in \st(T) \cup \st(E)\}$.
\end{theorem}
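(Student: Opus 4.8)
The natural approach is structural induction on the normal proof $\pi$, but the statement has to be strengthened first. I would prove, by induction on the number of nodes of $\pi$, that for every normal $\eqderives$ proof $\pi$ of $T; E \vdash \alpha$: (i) $\termsof(\pi) \subseteq \st(T) \cup \st(E) \cup \st(\alpha)$; (ii) $\listsof(\pi) \subseteq \listsof(E) \cup \listsof(\{\alpha\}) \cup \{[n] \mid n \in \st(T) \cup \st(E) \cup \st(\alpha)\}$; (iii) if $\rncons$ does not occur in $\pi$, then $\termsof(\pi) \subseteq \st(T) \cup \st(E)$; and (iv) if $\pi$ ends in neither $\rnlwk$ nor $\rnlint$, then $\listsof(\pi) \subseteq \listsof(E) \cup \{[n] \mid n \in \st(T) \cup \st(E)\}$. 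Specialising to $\alpha$ the final conclusion, (i) and (ii) give the first assertion of the theorem, (iii) the second, (iv) the third. The four parts must travel together: the $\st(\alpha)$-slack in (i)--(ii) is what makes the $\rncons$, $\rnlwk$ and $\rnlint$ cases go through, whereas the $\rnproj$ and $\rntrans$ cases of (i) call on the tighter bound (iii) for their premises, and the $\rnprom$ case of (i) calls on (iv).

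Most cases of the induction are bookkeeping once the right normalization clause is applied. $\rnax$: $\alpha \in E$. $\rncons$: the two subproofs derive $\equals{t_i}{u_i}$ with $\st(\equals{t_i}{u_i}) \subseteq \st(\alpha)$ and the new composite terms lie in $\st(\alpha)$, while (iii) is vacuous. $\rnsays$ and list-$\rnsubst$: the conclusion only re-packages subterms of the premises, using additionally that list-membership assertions contain only atomic terms and that a normal $\DYderives$ proof of an atomic key ends in $\rnax$ or a destructor, so its terms stay inside $\st(T)$. $\rneq$: normalization clause~3 forces the $\DYderives$ premise to end in a destructor, so by the $\DYderives$ subterm property all its terms --- in particular the $t$ with conclusion $\equals{t}{t}$ --- lie in $\st(T)$, and (iii) holds. $\rnsymm$: clause~2 forces the premise to be a conclusion of $\rnax$ or $\rnprom$, whose terms are already in $\st(T)\cup\st(E)$. $\rnproj$: clause~7 says the major premise subproof contains no $\rncons$, so by (iii) all of its terms --- including the composite $\func(t_0,t_1), \func(u_0,u_1)$ --- are in $\st(T)\cup\st(E)$, hence so are the projections; and $\rncons$ still does not occur in $\pi$. $\rnlwk$ and $\rnlint$: the conclusion list is in $\listsof(\{\alpha\})$, so (ii) follows from the induction hypothesis on the subproof(s), the key point in the $\rnlint$ case being clause~6, which forces each intersected list to arise (along $\rnsubst$ chains) from an $\rnax$ leaf, hence from $\listsof(E)$; (iv) is vacuous here. $\rnprom$: the conclusion $\equals{t}{n}$ comes from a list-membership subproof for $t \listmemb [n]$, and tracing that subproof through its last rule ($\rnax$, list-$\rnsubst$, $\rnlint$ constrained by clause~6, or a $\rnlwk$ ``detour'' over an equality subproof) shows $n \in \st(T)\cup\st(E)$; the descent terminates because $\pi$ is finite.

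The heart of the argument, exactly as in \cite{RT03}, is $\rnrule = \rntrans$: $\pi$ has immediate subproofs $\pi_1,\dots,\pi_k$ with $\concof(\pi_j) = \equals{t_j}{t_{j+1}}$ and $\concof(\pi) = \equals{t_1}{t_{k+1}} = \alpha$. By (i) on $\pi_j$ every term of $\pi_j$ lies in $\st(T)\cup\st(E)\cup\{t_j,t_{j+1}\}$ (modulo the list content, handled by (ii)), and the endpoints $t_1, t_{k+1}$ are in $\st(\alpha)$; so it remains to place each \emph{intermediate} term $t_j$, $1 < j \le k$, in $\st(T)\cup\st(E)$. By clause~4 no $\pi_j$ ends in $\rntrans$ and none has a trivial conclusion, and by clause~5 no two adjacent $\pi_j$ end in $\rncons$; since a proof whose conclusion is an equality cannot end in $\rnlwk$, $\rnlint$, $\rnsays$ or list-$\rnsubst$, it follows that for each intermediate $t_j$ at least one of $\pi_{j-1}, \pi_j$ ends in a rule from $\{\rnax, \rneq, \rnsymm, \rnproj, \rnprom\}$ --- and for each of these the conclusion's terms were just shown to lie in $\st(T)\cup\st(E)$. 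Hence all intermediate $t_j$ are in $\st(T)\cup\st(E)$, proving (i); and if moreover $\pi$ has no $\rncons$, then neither does any $\pi_j$, so (iii) on the $\pi_j$ puts the endpoints in $\st(T)\cup\st(E)$ too, proving (iii).

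The main obstacle is precisely this $\rntrans$ case --- controlling the ``intermediate'' terms of a transitivity chain --- which is exactly why the normal form forbids nested $\rntrans$, trivial $\rntrans$-premises, and two adjacent $\rncons$-premises. A secondary difficulty is the list-tracking: $\rnlwk$ can weaken an equality to membership in an \emph{arbitrary} list, so the only thing taming $\listsof(\pi)$ is that in a normal proof such a list is either the conclusion of $\pi$ (hence in $\listsof(\{\alpha\})$) or is immediately consumed by $\rnprom$, $\rnsubst$ or $\rnsays$ in a way absorbed by the $\st(\alpha)$-slack; I would verify that clauses~2 and~6 together with the $\DYderives$ subterm property close every such sub-case, adding, if needed, the two obvious extra normalization steps (a $\rnprom$ over a $\rnlwk$ on a singleton list is a detour; $\rnsubst$ commutes past $\rnlwk$).
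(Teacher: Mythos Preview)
Your proposal is essentially the paper's proof: structural induction carrying the four-part strengthened hypothesis, with the $\rntrans$ case as the crux and clauses~4--5 of normality supplying the control on intermediate terms. The packaging differs slightly. The paper isolates two facts before the induction --- (F1) in a normal proof ending in $\rntrans$, any immediate subproof not ending in $\rncons$ is entirely $\rncons$-free, and (F2) any normal proof of a list-membership assertion is $\rncons$-free --- and then in the $\rntrans$ case simply invokes (iii) on the non-$\rncons$ neighbour via F1. You instead enumerate the five possible last rules of that neighbour and argue each places its conclusion in $\st(T)\cup\st(E)$; this is the same content unpacked, though to make your $\rnprom$ and $\rnsymm$-over-$\rnprom$ subcases fully rigorous you are implicitly using F2, which is worth stating once rather than re-deriving by ``tracing''.

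One small correction on the list side: your description of the $\rnlint$ case as ``each intersected list arises along $\rnsubst$ chains from an $\rnax$ leaf'' is not quite right --- the major premise of a $\rnsubst$ can itself end in $\rnlwk$ or $\rnlint$, so the chain need not bottom out at $\rnax$. The paper's argument is simpler and is already available to you: clause~6 guarantees each immediate subproof of $\rnlint$ ends in neither $\rnlwk$ nor $\rnlint$, so your hypothesis (iv) applies directly to each of them, giving $\listsof(\pi_i)\subseteq\listsof(E)\cup\{[n]\mid n\in\st(T)\cup\st(E)\}$, and the conclusion list lies in $\listsof(\{\alpha\})$. Your suggested extra normalization moves (commuting $\rnsubst$ past $\rnlwk$, eliminating $\rnprom$ over singleton $\rnlwk$) are therefore not needed for the theorem as stated.
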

We implicitly use the following easily provable facts. 
\begin{enumerate}[label=(F\arabic*)]
\item \label{item:f1} If a normal proof $\pi$ ends in $\rntrans$ and an immediate subproof $\varpi$ does not end in $\rncons$, then $\rncons$ does not occur in $\varpi$. 
\item \label{item:f2} If a normal proof $\pi$ derives a list membership assertion, $\rncons$ does not occur in $\pi$.
\end{enumerate}

\begin{proof}
    Let $\rnrule$ be the last rule of $\pi$. We have the following cases. We mention $\listsof(\pi)$ only in cases where the rules involve lists. 
    \begin{itemize}
        \item $\rnrule = \rnax$: $\alpha \in E$, so $\termsof(\pi) \subseteq \st(E)$ and $\listsof(\pi) \subseteq \listsof(E)$. 
        \item $\rnrule = \rneq$: $\alpha$ is $\equals{t}{t}$ and $T \DYderives t$. Since $\pi$ is a normal proof whose $\DYderives$ subproofs are also normal, $T \DYderives t$ via a proof ending in a destructor rule, and by subterm property for $\DYderives$, it follows that $t \in \st(T)$. Thus $\termsof(\pi) = \{t\} \subseteq \st(T)$.
        \item $\rnrule = \rnsymm$: $\termsof(\pi) = \termsof(\pi')$, where $\pi'$ is the immediate subproof, and the statement follows by IH.
        \item $\rnrule = \rncons$: $\alpha$ is $\equals{\func(t_{1},t_{2})}{\func(u_{1},u_{2})}$, and for $i \in \{1,2\}$, there is a subproof $\pi_{i}$ with conclusion $\equals{t_{i}}{u_{i}}$. By IH, $\termsof(\pi_{i}) \subseteq \st(T \cup \{t_{i},u_{i}\}) \cup \st(E) \subseteq \st(T) \cup \st(E \cup \{\alpha\})$ for $i \in \{1,2\}$. Thus $\termsof(\pi) \subseteq \st(T) \cup \st(E \cup \{\alpha\})$.
        \item $\rnrule = \rntrans$: Suppose the subproofs of $\pi$ are $\pi_{1}$ through $\pi_{k-1}$ with conclusions $\equals{t_{1}}{t_{2}}$ through $\equals{t_{k-1}}{t_{k}}$ respectively, and $\alpha = \equals{t_{1}}{t_{k}}$. Since $\pi$ is a normal proof, no two adjacent premises of $\rnrule$ are obtained by $\rncons$, and no premise of $\rnrule$ is obtained by $\rntrans$. 
        The following cases arise.
        \begin{itemize}
            \item $r \in \{t_{1},t_{k}\}$. In this case, $r \in \st(\alpha)$.
            
            \item $r \in \termsof(\pi_{i})$, where $\pi_{i}$ does not end in $\rncons$. By~\ref{item:f1}, $\rncons$ does not occur in $\pi_{i}$. By IH, $r \in \st(T) \cup \st(E)$. 
            
            \item $r \in \termsof(\pi_{i})$, where $\pi_{i}$ ends in $\rncons$, and $1 < i < k-1$. Both $\pi_{i-1}$ and $\pi_{i+1}$ end in a rule other than $\rncons$, by normality of $\pi$. So, by~\ref{item:f1}, $\rncons$ does not occur in $\pi_{i-1}$ and $\pi_{i+1}$, and $t_{i}, t_{i+1} \in \termsof(\pi_{i-1}) \cup \termsof(\pi_{i+1}) \subseteq \st(T) \cup \st(E)$ (by IH on $\pi_{i-1}$ and $\pi_{i+1}$). So, by applying IH on $\pi_{i}$, we get $r \in \st(T) \cup \st(E \cup \{\equals{t_{i}}{t_{i+1}}\}) \subseteq \st(T) \cup \st(E)$.
            
            \item $r \in \termsof(\pi_{1})$, where $\pi_{1}$ ends in $\rncons$. By normality of $\pi$, we see that $\pi_{2}$ ends in a rule other than $\rncons$. So $\rncons$ does not occur in $\pi_{2}$. By IH on $\pi_{2 }$, $t_{2} \in \termsof(\pi_{2}) \subseteq \st(T) \cup \st(E)$. By IH on $\pi_{1}$, $r \in \st(T\cup\{t_{1}, t_{2}\}) \cup \st(E) \subseteq \st(T) \cup \st(E \cup \{\alpha\})$.

            \item $r \in \termsof(\pi_{k-1})$, where $\pi_{k-1}$ ends in $\rncons$. The proof is similar to the above. 
        \end{itemize}        
        \item $\rnrule = \rnproj$: Let $\alpha = \equals{t}{u}$, got from a proof $\pi'$ with conclusion $\equals{a}{b}$. Since $\pi$ is normal, $\rncons$ does not occur in $\pi$ (or in $\pi'$). By IH, $a,b \in \termsof(\pi') \subseteq \st(T) \cup \st(E)$. Since $t, u \in \st(\{a,b\})$, we have $\termsof(\pi) \subseteq \st(T) \cup \st(E)$.
        \item $\rnrule = \rnprom$: $\alpha$ is $\equals{t}{u}$, and the immediate subproof $\pi'$ proves $t\listmemb{[u]}$. $\pi'$ does not contain $\rncons$, and so by IH, $\termsof(\pi) = \termsof(\pi') \subseteq \st(T) \cup \st(E)$. Note that $\listsof(\pi) \subseteq \listsof(\pi') \cup \{[u]\}$, so the statement about lists is also true. 
        \item $\rnrule = \rnweak$: Let $\pi'$ be the immediate subproof. The result follows from IH and the fact that $\listsof(\pi) = \listsof(\pi') \cup \listsof(\alpha)$.
        \item $\rnrule = \rnlint$: All terms in the conclusion appear in some proper subproof, so the statement on terms follows by IH. None of the subproofs ends in $\rnlint$ or $\rnweak$ (and does not contain $\rncons$). Thus $\listsof(\pi') \subseteq \listsof(E) \cup \{[n] \mid n \in \st(T)\cup\st(E)]$, for every subproof $\pi'$. It follows that $\listsof(\pi) \subseteq \listsof(E \cup \{\alpha\}) \cup \{[n] \mid n \in \st(T) \cup \st(E \cup \{\alpha\})\}$. 
        
        \item $\rnrule = \rnsubst$: Let the major premise be $t\listmemb{\ell}$ and the minor premise be $\equals{t}{u}$. Both $t,u$ are from $\vars\cup\names$, and thus are in $\st(T) \cup \st(E)$. The result follows from IH.
        \item $\rnrule = \rnsays$: Let the major premise be $\beta$ and the minor premise be $\sk_{a}$. Since $T \DYderives \sk_{a}$, $\sk_{a} \in \st(T)$. And $\termsof(\pi) \subseteq \st(T) \cup \st(E) \cup \st(\beta) \cup \{\pk_{a}\} \subseteq \st(T) \cup \st(E \cup \{\alpha\})$.
        \qedhere
    \end{itemize}
\end{proof}
\end{document}